\def\@setthanks{\vspace{-\baselineskip}\def\thanks##1{\@par##1\@addpunct.}\thankses}
\newtheorem{thm}{Theorem}[section]
\newtheorem{lem}{Lemma}[section]
\newtheorem{assum}{Assumption}
\numberwithin{equation}{section}
\newtheorem{exm}{Example}[section]
\theoremstyle{remark}
\newtheorem{remark}{Remark}[section]
\providecommand{\customgenericname}{}
\newcommand{\newcustomtheorem}[2]{%
  \newenvironment{#1}[1]
  {%
   \renewcommand\customgenericname{#2}%
   \renewcommand\theinnercustomgeneric{##1}%
   \innercustomgeneric
  }
  {\endinnercustomgeneric}
}
\numberwithin{equation}{section}
\newcommand*{\addFileDependency}[1]{
  \typeout{(#1)}
  \@addtofilelist{#1}
  \IfFileExists{#1}{}{\typeout{No file #1.}}
}
\begin{document}
\begin{titlepage}

\title{Universal Factor Models\thanks{A previous version of this paper was circulated
with the title “Robust Quantile Factor Analysis”. We have been encouraged by and benefited from comments from Jushan Bai, Yue Fang, Bryan Graham, Sokbae Lee, Whitney Newey, Myung Hwan Seo, Ke-Li Xu, Andrei Zeleneev, and the attendees of the 2023 Symposium for High Dimensional Econometric and Machine Learning, 2023 AFR International Conference of Economics and Finance, the Inaugural Meeting of the Great Bay Econometrics Study Group (2024), 2024 First Macau International Conference on Business Intelligence and Analytics, 2025 Harbin International Conference on Econometrics and Statistics, the 2025 World Congress of the Econometric Society, and the econometrics seminars in 2023, 2024 and 2025 at Peking University, Shanghai University of Finance and Economics, Singapore Management University, Sun Yat-Sen University, Southwestern University of Finance and Economics, and Xiamen University.}}
\author{Songnian Chen\thanks{School of Economics, Zhejiang University. Email: \href{mailto:snchen2022@zju.edu.cn}{snchen2022@zju.edu.cn}.}\ \ \ \ \ \  Junlong Feng\thanks{Department of Economics, the Hong Kong University of Science and Technology. Email: \href{mailto:jlfeng@ust.hk}{jlfeng@ust.hk}.}}

\date{February 2026}                                      

\maketitle
\vspace{-3em}
\begin{abstract} 
We propose a new factor analysis framework and estimators of the factors and loadings that are robust to certain weak factors in a large $N$ and large $T$ setting. Our framework, by simultaneously considering all quantile levels of the outcome variable, induces standard mean and quantile factor models, but the factors can have an arbitrarily weak influence on the outcome’s mean or quantile at most quantile levels. Our method estimates the factor space at the $\sqrt{N}$-rate as long as each factor is strong at some unknown quantile level, and achieves  $\sqrt{N}$- and $\sqrt{T}$-asymptotic normality for the factors and loadings based on a novel sample splitting approach that handles incidental nuisance parameters. 
We also develop a weak-factor-robust estimator of the number of factors and consistent selectors of factors of any tolerated level of influence on the outcome’s mean or quantiles. Monte Carlo simulations demonstrate the effectiveness of our method. 
\vspace{0.1in}\\
\noindent\textbf{Keywords:} Factor models, weak factors, quantile regression.\\
\vspace{0in}\\
\end{abstract}
\end{titlepage}

\section{Introduction}

Factor models have a wide application in economics and finance. In these models, observable outcomes are affected by a few number of latent economic variables called \textit{factors}. In asset pricing, researchers often assume that there are a small number of latent common shocks that drive asset returns. In macroeconomics, common shocks may have heterogeneous effects on cyclical variations. In synthetic control, variations in individuals' potential outcomes are assumed to be affected by latent time trends. 

A frequent challenge for researchers is the possible existence of weak factors. When the outcome variable has little exposure to some factors, learning about them becomes difficult. Moreover, their presence can distort inference regarding other factors that exert a more prominent influence \citep{onatski2012asymptotics,bai2023approximate,giglio2025test}. Because of this issue, popular methods, such as the principal component analysis (PCA, \cite{bai2002determining}, \cite{bai2003inferential}) and quantile factor analysis (QFA, \cite{chen2021quantile}), often assume that all factors have strong impact on the mean or quantile of $Y$ at the quantile level of interest, usually referred to as the \textit{strong factor assumptions}.

However, we argue in this paper that violating this type of strong factor assumptions \textit{does not necessarily imply} that the factors are inherently \textit{weak}; it could only imply the poor choice of the moment or quantile of the outcome variable used in estimation; this argument echoes a similar spirit in \cite{giglio2025test} where they state that ``the strength or weakness of a factor ... should
not be viewed as a property of the factor itself; rather, it should be viewed as a property of the set of test assets used in estimation'' (p.250). 
As a consequence, failure of PCA and QFA \textit{does not necessarily imply} that the factors cannot be well-estimated. For illustration, consider a random coefficient model for an outcome variable $Y_{it},i=1,\ldots,N,t=1,\ldots,T$, $$Y_{it}=\sum_{j=1}^{r}\beta_{j}(U_{it})\lambda_{0,ij}^{*}f_{0,tj}^{*},$$ 
where $\lambda_{0,ij}^{*}$ and $f_{0,tj}^{*}$, $j=1,\ldots,r$, are $r$ latent factor loadings and factors for $i$ and $t$, respectively, and the unobserved $U_{it}$ is independent of the factors and uniformly distributed. Suppose $\beta_{j}(\cdot)$ is integrable on $(0,1)$ for all $j$ with the integral denoted by $\bar{\beta}_{j}$ and $\sum_{j=1}^{r}\beta_{j}(\cdot)\lambda_{0,ij}^{*}f_{0,tj}^{*}$ is strictly increasing on $(0,1)$, then the conditional mean and $\tau$-th quantile of $Y_{it}$ are 
\begin{equation*}
\mathbb{E}(Y_{it}|f^{*}_{0,t})=\sum_{j=1}^{r}\bar{\beta}_{j}\lambda^{*}_{0,ij}f^{*}_{0,tj},\ \ \ \ q_{Y_{it}|f^{*}_{0,t}}(\tau)=\sum_{j=1}^{r}\beta_{j}(\tau)\lambda_{0,ij}^{*}f^{*}_{0,tj},
\end{equation*}
respectively. Let $f_{0,t}^{*}=(f_{0,t1}^{*},\ldots,f^{*}_{0,tr})'$ and $\lambda_{0,i}^{*}=(\lambda_{0,i1}^{*},\ldots,\lambda_{0,ir}^{*})'$. Suppose $\sum_{t=1}^{T}f^{*}_{0,t}f^{*'}_{0,t}/T$ and $\sum_{i=1}^{n}\lambda^{*}_{0,i}\lambda^{*'}_{0,i}/N$ are both positive definite uniformly in $T$ and $N$. Then the strength of factors in the sense of PCA and QFA respectively refers to the magnitude of the $\bar{\beta}_{j}$s (i.e., $\int_{0}^{1}\beta_{j}(\tau)d\tau$) and $\beta_{j}(\tau)$. One can see from here that a \textit{weak factor} in the sense of PCA can be a \textit{strong factor} in the sense of QFA, vice versa, and a \textit{weak factor} at some $\tau$ in the sense of QFA can be a \textit{strong factor} in the same sense just at a different $\tau'$. In fact, factor $f_{0,tj}^{*}$ is not inherently weak as long as its overall strength $\int_{\mathcal{U}}[\beta_{j}(\tau)]^{2}d\tau$ is not small for some $\mathcal{U}\subset (0,1)$.

This paper proposes a new factor analysis framework, the \textit{universal factor model} (UFM), that only imposes restrictions on the overall strength of factors rather than on the strength particularly regarding the mean or quantile of $Y_{it}$ at a certain $\tau$. Our model has the following form, which nests the previous example
\begin{equation}\label{eq.model.intro}
Y_{it}=\lambda_{0,i}^{*'}(U_{it})f_{0,t}^{*}, \ \ U_{it}\sim \text{Unif}[0,1].
\end{equation}
Let $\Lambda_{0}^{*}(\cdot)$ and $F_{0}^{*}$ be $N\times r$ and $T\times r$ matrices collecting all the loading functions and factors, respectively. We require that the factors have an overall strong impact on $Y_{it}$ in the sense that $\int_{\mathcal{U}}\Lambda^{*'}_{0}(\tau)\Lambda^{*}_{0}(\tau)d\tau/N$ is positive definite uniformly in $N$ on a fixed compact interval $\mathcal{U}\subseteq (0,1)$; since we do not require integrability to hold on the entire $(0,1)$, this restriction does not rule out $Y_{it}$ with heavy or fat tails. We refer to $F_{0}^{*}$ as \textit{universal factors}.

One can view our model \eqref{eq.model.intro} as a generalization of standard panel data models with fixed effects without observable regressors. For instance, by letting $f_{0,t}^{*}=(1,f_{0,t})'$ and $\lambda^{*}_{0,i}(U_{it})\coloneqq g(\lambda_{0,i},U_{it})\coloneqq (\lambda_{0,i}+\Phi^{-1}(U_{it}),1)$ for some inverse cumulative distribution function $\Phi^{-1}$, we obtain an additive fixed effect model $Y_{it}=\lambda_{0,i}+f_{0,t}+\varepsilon_{it}$ where $\varepsilon_{it}\coloneqq \Phi^{-1}(U_{it})$. Interactive fixed effects are also allowed by, for instance, redefining $\lambda^{*}_{0}(U_{it})$ as $(\lambda_{0,i}+\Phi^{-1}(U_{it}),\lambda_{0,i}+1)'$. Since fixed effects are commonly viewed as genuine latent economic variables, our $f_{0,t}^{*}$ captures latent time-specific variables while $\lambda^{*}_{it}(U_{it})$ absorbs the time-invariant unobservable individual heterogeneity $\lambda_{0,i}$ and time-varying idiosyncratic shocks $U_{it}$. One can also compare our model with the standard quantile regression model with observed interactive terms: $Y_{it}=\beta_{0}(U_{it})+X_{1i}\beta_{1}(U_{it})+X_{2t}\beta_{2}(U_{it})+X_{1i}X_{2t}\beta_{3}(U_{it})$, where $X_{1i}$ and $X_{2t}$ are \textit{observable}. One can reparameterize this model as $Y_{it}=X_{i}^{*'}(U_{it})X_{t}^{*}$ by letting $X_{i}^{*}(U_{it})\coloneqq (\beta_{0}(U_{it})+X_{1i}\beta_{1}(U_{it}),\beta_{2}(U_{it})+X_{1i}\beta_{3}(U_{it}))'$ and $X_{t}^{*}=(1,X_{2t})'$. Hence, just like $X_{1i}$ and $X_{2t}$, the factors and loadings in our model, although latent, are still genuine economic variables. A factor is weak only when the corresponding coefficient (marginal effect) is weak, very much like the case of weak instruments, which are themselves not weak (small); indeed, instruments are weak when they have small effects (correlation) on the endogenous variable.

Our model \eqref{eq.model.intro} induces a quantile factor model (QFM, \cite{chen2021quantile}) by assuming that $U_{it}$ is independent of $F_{0}^{*}$ and every component in $\Lambda_{0}^{*}(\cdot)F_{0}^{*'}$ is strictly increasing. Out assumption on the strength of the factors is weaker. To be specific, the strong factor assumption in \cite{chen2021quantile} requires all factors that have any impact on the $\tau$-th quantile of $Y_{it}$ to have a strong impact. In contrast, our assumption holds as long as each factor is strong at some potentially different $\tau\in\mathcal{U}$. 

A second advantage of our UFM modeling over QFM is that the factors in QFM are $\tau$-dependent; QFM thus only captures a subset of the universal factors $F_{0}^{*}$, and we show in this paper that in many models, such $\tau$-dependence is an unavoidable artificial parameterization only for their strong factor assumption to hold. Such ad hoc parameterization leads to difficulties in interpretation since in economics and finance, factors are simply unobserved latent variables, not to be $\tau$-dependent. Moreover, confusion may arise when the researcher wishes to use the recovered factors to augment regression in a different scenario. Our UFM modeling approach together with our relaxed requirement on the strength of factors avoids such artifact while entertaining a wider class of data generating processes.

Our model also induces a mean factor model, or, approximate factor model (AFM) in the form of \cite{bai2003inferential} if $\Lambda^{*}(\cdot)$ is integrable on $(0,1)$. We show that our requirement on the strength of the factors is also weaker than their strong factor assumption for PCA to obtain a $\sqrt{N}$-asymptotically normal estimator of the factors.

To estimate the universal factors, we first propose our baseline estimator, referred to as the \textit{universal factor analysis} (UFA). It begins by obtaining preliminary estimators of $F^{*}_{0}$ and $\Lambda^{*}_{0}(\tau)$ at various $\tau$s by iteratively conducting standard quantile regression or the smoothed version such as \cite{fernandes2021smoothing} and \cite{he2023smoothed}, which delivers both nice theoretical properties and computational efficiency. Using them, we estimate $F^{*}_{0}\int_{\tau\in\mathcal{U}}\Lambda_{0}^{*'}(\tau)\Lambda_{0}^{*}(\tau)d\tau F_{0}^{*'}/NT$. Its eigenvectors multiplied by $\sqrt{T}$ are the final estimated factors under our normalization. UFA consistently estimates the space spanned by the factors at the $\sqrt{N}$-rate regardless of whether smoothed quantile regression is adopted. In this paper, we only present the smoothed estimator for the sake of space. 

The asymptotic expansions of the smoothed UFA estimator involve the densities of $Y_{it}$ evaluated at the $(i,t)$-th common component $\lambda_{0,i}^{*'}(\tau)f_{0,t}^{*}$ conditional on the universal factors. These $(i,t)$-specific nuisance parameters impose theoretical challenges to obtain sharper theoretical results.
To sharpen the results to achieve asymptotic normality, we propose a two-stage inverse density weighted estimator, referred to as IDW-UFA, based on a new sample splitting strategy. In the first stage, we estimate the factors, loadings and thus $(i,t)$-specific inverse conditional densities by UFA using appropriate subsamples; we directly estimate these inverse densities exploiting the factor structure of our model instead of inverting an estimated density, achieving numerical stability. We then estimate the factors and loadings again in the second stage using the \textit{full dataset} by weighting the original objective function in UFA by those estimated inverse densities to eliminate the incidental nuisance parameters in the asymptotic expansions. 

The new sample splitting approach differs from the standard approach (e.g. \cite{chernozhukov2018double}) in two ways. First, unlike the standard approach where different subsamples are usually governed by the same nuisance parameters, our nuisance parameters here are $(i,t)$-specific, so the ones obtained using one subsample do not apply to another. Our new approach solves this problem by delicately constructing subsamples. Second, our estimated nuisance parameters have a negligible impact on the second stage where the \textit{full dataset} is used. Because of this, the IDW-UFA estimator does not suffer from the loss of efficiency or instability. This sample splitting strategy may be of independent interest; it can be applied to other scenarios where one needs to estimate a latent factor structure prior to estimating the main model; it also applies to estimating QFMs under the strong factor assumption; \cite{bai2021matrix} tackle missing data problems in factor models in a similar spirit.

The UFA-IDW estimator achieves asymptotic normality (up to rotation) at the $\sqrt{N}$- or $\sqrt{T}$-rate for individual factors, loadings and common components, assuming that $N$ and $T$ are of the same order. The rotation matrix matches those in \cite{bai2003inferential} and \cite{bai2023approximate} for PCA. 

In addition to our estimator of the $\tau$-specific loadings, we also develop a $\sqrt{T}$-asymptotically normal estimator of the factor loadings in the UFM-induced AFM based on our estimated universal factors. Hence, our estimator is useful even if the researcher is only interested in recovering the mean factors and their effects. Unlike PCA, it does not require the mean factors to be strong. 

Finally, we propose an eigen/singular-value-thresholding type estimator for the total number of universal factors. We also propose consistent selectors of factors that have an arbitrary tolerated level of strength on the mean or quantile of $Y$. 

Our paper adds to the growing literature on weak factors. \cite{onatski2012asymptotics}, \cite{bai2023approximate}, \cite{jiang2023revisiting}, \cite{fan2024can}, and \cite{choi2025high} provide theoretical and simulation evidence showing that the PCA may be inconsistent or have a slower rate of convergence when the factors in an AFM are weak. \cite{bai2019rank} propose a ridge penalized estimator which selects out the relevant strong factors in an AFM in a data driven way. \cite{onatski2010determining} estimate the number of factors in an AFM with weak mean factors. Other methods developed in the presence of weak factors in mean models often impose conditions requiring the eigenvalues of $\Lambda_{0}^{\mu*'}\Lambda_{0}^{\mu*}/N$ not too small or the loading matrix to be sparse, e.g. \cite{de2008forecasting,lettau2020estimating,bailey2021measurement, freyaldenhoven2022factor,uematsu2022estimation}.  In our paper, these eigenvalues can be arbitrarily small as long as our identification condition mentioned earlier is satisfied. Meanwhile, we do not need to assume sparsity for $\Lambda_{0}^{\mu *}$. Moreover, our method handles both mean and quantile models. The weak factor problem is also relevant in the literature of risk premium estimation. \cite{anatolyev2022factor} and \cite{giglio2025test} develop methods to tackle weak factors in that setting. The focus and models are different from our paper. In panel data regression with interactive fixed effects, \cite{armstrong2022robust} propose a robust estimation approach to construct bias-aware confidence intervals for the slope coefficients on the regressors when the fixed effects are weak. In contrast, we focus on the factors and loadings.

 The rest of the paper is organized as follows. We formally set up the model in Section \ref{sec.model}. We introduce our baseline estimator UFA and provide its rate of convergence in Section \ref{sec.estimator}. Section \ref{sec.est2} presents the inverse density weighted estimator and derives its asymptotic distribution. Section \ref{sec.mean} demonstrates how to estimate the mean loadings. Section \ref{sec.r} provides a consistent estimator of the total number of factors and consistent selectors of the factors in a QFM or an AMF with arbitrary tolerated strength. Section \ref{sec.simu} examines the finite sample performance of the estimator by Monte Carlo simulations. Section \ref{sec.con} concludes. The Appendix and Online Appendix collect all the proofs.

 \section{The Model}\label{sec.model}

Assume that observable $Y_{it}$ ($i=1,\ldots,N;t=1,\ldots,T$) follows the model in the Introduction,
\begin{equation}\label{eq.dgp.model}
Y_{it}=\lambda_{0,i}^{*'}(U_{it})f_{0,t}^{*}
\end{equation}
where $\lambda^{*}_{0,i}(\tau)$ is an $r\times 1$ vector of factor loadings. We treat $\lambda_{0,i}^{*}(\tau)$ as deterministic whereas $f_{0,t}^{*}$ as realizations of some underlying random variables $f_{0,t}^{0*}$. All the statements in the rest of the paper are implicitly conditional on $f_{0,t}^{0*}=f_{0,t}^{*}$.\footnote{Alternatively, we can view the loading functions as random and condition on the realization of them as well.} Let the $T\times r$ matrix $F^{*}_{0}$ collect all the factors. We call our factors, i.e., the columns in $F_{0}^{*}$, the \textit{universal factors} in contrast to the mean factors in AFMs such as in \cite{bai2003inferential} and quantile factors in the QFM in \cite{chen2021quantile}. Those factors are subsets of the universal factors since they only include the universal factors that have strong impact on the mean and a certain quantile of $Y_{it}$, respectively.

Throughout the paper, we maintain the following assumption which admits a linear conditional quantile representation for $Y_{it}$:
\begin{assum}\label{assum.foundation}
For all $i$ and $t$, $U_{it}\sim \text{Unif}[0,1]$ and $\lambda_{0,i}^{*'}(\cdot)f_{0,t}^{*}$ is strictly increasing on $(0,1)$.
\end{assum}
Let $q_{Y_{it}|f_{0,it}^{*}}(\tau),\tau\in (0,1)$, denote the $\tau$-th conditional quantile of $Y_{it}$. Then by Assumption \ref{assum.foundation}, 
 \begin{equation}\label{eq.model}
   q_{Y_{it}|f^{*}_{0,t}}(\tau)=\lambda^{*'}_{0,i}(\tau)f^{*}_{0,t},\tau\in (0,1).
 \end{equation}
Let the $N\times T$, $N\times r$ and $T\times r$ matrices $Y$, $\Lambda_{0}^{*}(\tau)$ and $F^{*}_{0}$ collect all the observables, loadings and factors, respectively. We can rewrite \eqref{eq.model} in matrix form as:
\begin{equation}\label{eq.model.matrix}
  q_{Y|F^{*}_{0}}(\tau)=\Lambda^{*}_{0}(\tau)F^{*'}_{0}\eqqcolon L_{0}(\tau),\tau\in(0,1).
\end{equation}

Model \eqref{eq.dgp.model} also implies a factor structure for the conditional mean of $Y_{it}$, provided the existence of the latter, or equivalently, integrability of $\lambda_{0,i}^{*}(\cdot)$ on $(0,1)$. Let $\bar{\lambda}_{0,i}^{*}\coloneqq \int_{0}^{1}\lambda_{0,i}^{*}(\tau)d\tau$. Then
\begin{equation*}
\mathbb{E}(Y_{it}|f_{0,t}^{*})=\bar{\lambda}_{0,i}^{*'}f_{0,t}^{*}.
\end{equation*}
Therefore, by letting $\nu_{it}\coloneqq Y_{it}-\bar{\lambda}_{0,i}^{*'}f^{*}_{0,t}$,
\begin{equation*}
  Y_{it}=\bar{\lambda}_{0,i}^{*'}f^{*}_{0,t}+\nu_{it},\ \ \mathbb{E}(\nu_{it}|f^{*}_{0,t})=0.
\end{equation*}

In the rest of this section, we first introduce our assumption on the strength of the universal factors. We then compare our model and assumption with QFM and AFM.

\subsection{A Relaxed Assumption on the Strength of Factors}

Let $\mathcal{B}$ be a compact subset of $\mathbb{R}$ and $\mathcal{U}$ be a compact subset of $(0,1)$. 

\begin{assum}\label{assum.normalization}
(i) The $r$ eigenvalues of $\int_{\mathcal{U}}\Lambda_{0}^{*'}(\tau)\Lambda_{0}^{*}(\tau)d\tau/N$ are bounded away from infinity and from 0 for sufficiently large $N$. (ii) The $r$ eigenvalues of $F_{0}^{*'}F_{0}^{*}/T$ are bounded away from infinity and from 0 for sufficiently large $T$. (iii) The $r$ largest eigenvalues of matrix $F^{*}_{0}\int_{\mathcal{U}}\Lambda_{0}^{*'}(\tau)\Lambda_{0}^{*}(\tau)d\tau F^{*'}_{0}/(NT)$ are distinct. (iv) $f^{*}_{0,t},\lambda^{*}_{0,i}(\tau)\in\mathcal{B}^{r}$ for each $i,t$ and $\tau\in \mathcal{U}$. 
 \end{assum}
Part (ii) of Assumption \ref{assum.normalization} is standard in the factor model literature. Under Parts (i) and (ii), the $r$ largest eigenvalues of the matrix considered in (iii) are bounded away from 0. Part (iii) further imposes distinctiveness of eigenvalues to guarantee uniqueness (up to column signs) of the eigenvectors. The boundedness of factors and loadings in part (iv) is identical to \cite{chen2021quantile}.

Part (i) of Assumption \ref{assum.normalization}, on the other hand, is new. It imposes restrictions on the strength of universal factors. It allows the singular values of the loading matrix to be small at any given $\tau$. Note that integrability of the loadings implied by this assumption does not rule out heavy- or fat-tailed $Y_{it}$ because of the compactness of $\mathcal{U}$. Now for better illustration, we compare our model and assumption with the QFM in \cite{chen2021quantile} and AFM in \cite{bai2002determining}, \cite{bai2003inferential}, and \cite{bai2023approximate}.

 \subsection{Comparison with the QFM}
 \cite{chen2021quantile} study the following quantile factor model:
 \begin{equation}\label{eq.model.chen}
    q_{Y_{it}|f^{q*}_{0,t}(\tau)}(\tau)=\lambda_{0,i}^{q*'}(\tau)f^{q*}_{0,t}(\tau),\tau\in (0,1),
 \end{equation}
 where the factors are $\tau$-dependent. Let the number of factors at $\tau$ be $r(\tau)$. They require all the $r(\tau)$ factors to be strong at every $\tau$ of interest: For sufficiently large $N$,
 \begin{equation}\label{eq.norm.Lambda.chen}
   \text{the eigenvalues of }\frac{\Lambda_{0}^{q*'}(\tau)\Lambda^{q*}_{0}(\tau)}{N}\text{ are bounded away from 0.}
 \end{equation}
Similar to our Assumption \ref{assum.normalization}-(ii), they also maintain that for sufficiently large $T$,
   \begin{equation}\label{eq.norm.F.chen}
   \text{the eigenvalues of }\frac{F_{0}^{q*'}(\tau)F^{q*}_{0}(\tau)}{T}\text{ are bounded away from 0.}
 \end{equation}

Under model \eqref{eq.dgp.model} and by \eqref{eq.model}, the columns in quantile factors $F^{q*}_{0}(\tau)$ form a subspace of the column space of universal factors $F^{*}_{0}$. The strong factor assumption \eqref{eq.norm.Lambda.chen} thus says that, for every $\tau$, the factors in $F^{*}_{0}$ have either sufficiently large or exactly zero impact on the $\tau$-th conditional quantile of $Y_{it}$. However, our Assumption \ref{assum.normalization}-(i) allows any universal factor to have arbitrarily weak impact at any $\tau\in\mathcal{U}$, as long as each universal factor is ``strong'' in the sense of \cite{chen2021quantile} in a neighborhood of some $\tau$, and these neighborhoods can be different for different factors.

Besides the more stringent requirements on the strength of factors, another drawback of the QFM is the ad hoc dependence of the factors on the quantile level $\tau$. In \cite{chen2021quantile}, they provide four examples where $Y_{it}$ is modeled as a function of factors, loadings and idiosyncratic shocks. As latent economic variables, none of the factor in those models depends on $\tau$, and all those four models can be nested by our model \eqref{eq.dgp.model}. However, once transformed into the quantile model, dependence in $\tau$ can become unavoidable to satisfy their strong factor assumption \eqref{eq.norm.Lambda.chen}. Such artificial dependence on $\tau$ creates barriers to understand the underlying model and to use the obtained factors to augment regression in other scenarios. In contrast, our quantile representation can avoid such $\tau$-dependence because of our milder restrictions on the factors' strength.

The following example, taken from \cite{chen2021quantile}, illustrates these points.
\begin{exm}[Example 4 in \cite{chen2021quantile}, p.879]\label{ex.chen}
$Y_{it}=\alpha_{i}f^{*}_{1t}+f_{2t}^{*}\epsilon_{it}+c_{i}f_{3t}^{*}\epsilon^{3}_{it}$, where the $\epsilon_{it}$s are independent standard normal random variables whose cumulative distribution function (CDF) is denoted as $\Phi(\cdot)$. Let $f_{2t}^{*}$, $f_{3t}^{*}$ and $c_{i}$ be positive for all $i,t$. Assume $\epsilon_{it}$ is independent of all the factors. Then letting $\lambda_{0,i}^{*}(\tau)=[\alpha_{i},\Phi^{-1}(\tau),c_{i}(\Phi^{-1}(\tau))^{3}]'$, we have
\begin{align*}
\frac{1}{N}\Lambda_{0}^{*'}(\tau)\Lambda_{0}^{*}(\tau)=\begin{pmatrix}\frac{1}{N}\sum_{i=1}^{N}\alpha_{i}^{2}&\frac{\sum_{i=1}^{N}\alpha_{i}}{N}\Phi^{-1}(\tau)& \frac{\sum_{i=1}^{N}\alpha_{i}c_{i}}{N}\left(\Phi^{-1}(\tau)\right)^{3}\\
\frac{\sum_{i=1}^{N}\alpha_{i}}{N}\Phi^{-1}(\tau)&\left(\Phi^{-1}(\tau)\right)^{2}& \frac{\sum_{i=1}^{N}c_{i}}{N}\left(\Phi^{-1}(\tau)\right)^{4}\\
\frac{\sum_{i=1}^{N}\alpha_{i}c_{i}}{N}\left(\Phi^{-1}(\tau)\right)^{3}& \frac{\sum_{i=1}^{N}c_{i}}{N}\left(\Phi^{-1}(\tau)\right)^{4}&  \frac{\sum_{i=1}^{N}c_{i}^{2}}{N}\left(\Phi^{-1}(\tau)\right)^{6}
\end{pmatrix}.
\end{align*}
\cite{chen2021quantile} show that for the strong factor assumption \eqref{eq.norm.Lambda.chen} to hold, one has to treat the factors as quantile-level dependent. To see it, consider the following two cases.

At $\tau=0.5$, by $\Phi^{-1}(0.5)=0$, $rank\left(\Lambda^{*'}_{0}(0.5)\Lambda^{*}_{0}(0.5)/N\right)=1<3$,
 violating \eqref{eq.norm.Lambda.chen}. Hence, one should treat the factors as $\tau$-dependent so that $f_{0,t}^{*}(0.5)=f_{1t}^{*}$.

At $\tau\neq 0.5$ but if $c_{i}=c$ for all $i$, $rank\left(\Lambda^{*'}_{0}(\tau)\Lambda^{*}_{0}(\tau)/N\right)=2<3$
because the second and third columns in $\Lambda^{*'}_{0}(\tau)\Lambda^{*}_{0}(\tau)/N$ are linearly dependent, still violating \eqref{eq.norm.Lambda.chen}. \cite{chen2021quantile} thus treat the factors in this case as $(f_{1t}^{*},f_{2t}^{*}+f_{3t}^{*}(\Phi^{-1}(\tau))^{2})$ for $\tau\neq 0.5$. However, for $\tau$ very close to $0.5$, the loading of their second factor, $\Phi^{-1}(\tau)$, is close to 0, still not satisfying their strong factor assumption. Moreover, $f_{2t}^{*}$ and $f_{3t}^{*}$ are never separately identifiable at any $\tau$.

 On the other hand, for our Assumption \ref{assum.normalization}-(i) to hold, we can treat $f_{0,t}^{*}=(f_{1t}^{*},f_{2t}^{*},f_{3t}^{*})'$ for all $\tau\in (0,1)$ regardless of whether $c_{i}$ is constant in $i$ or not. One can verify that all the three eigenvalues of $\int_{\underline{u}}^{\bar{u}}\Lambda^{*'}_{0}(\tau)\Lambda^{*}_{0}(\tau)d\tau/N$ are bounded away from 0 if $\sum_{i}\alpha_{i}^{2}/N$ and $\sum_{i}c_{i}^{2}/N$ do not shrink to 0 as $N\to\infty$ for all $\underline{u}$ and $\bar{u}$ that are sufficiently close to 0 and 1, respectively. 
\end{exm}

\subsection{Comparison with the AFM}
Recall that our model \eqref{eq.model} implies the following AFM provided that $\mathbb{E}(Y_{it}|f^{*}_{0,t})$ exists:
\begin{equation*}
  Y_{it}=\bar{\lambda}_{0,i}^{*'}f^{*}_{0,t}+\nu_{it},\ \bar{\lambda}_{0,i}^{*}=\int_{0}^{1}\lambda_{0,i}^{*}(\tau)d\tau,\ \mathbb{E}(\nu_{it}|f^{*}_{0,t})=0.
\end{equation*}

Compared with the standard AFM under the strong factor assumption:
\begin{equation*}
  Y_{it}=\lambda_{0,i}^{\mu *'}f^{\mu *}_{0,t}+\nu_{it},\ \mathbb{E}(\nu_{it}|f^{*}_{0,t})=0;\ \text{eigenvalues of }\frac{\Lambda^{\mu *'}\Lambda^{\mu *}}{N}\text{ are positive for large }N,
\end{equation*}
we can see that the strong factor assumption for AFM is equivalent to:
\begin{equation*}
  \text{The eigenvalues of }\frac{\int_{0}^{1}\Lambda_{0}^{*'}(\tau)d\tau \int_{0}^{1}\Lambda_{0}^{*}(\tau)d\tau}{N}\text{ are either bounded away from 0 or exactly 0}.
\end{equation*}
Therefore, when all the factors are strong in the sense of \cite{bai2002determining} and \cite{bai2003inferential}, our Assumption \ref{assum.normalization}-(i) holds by the Cauchy-Schwarz inequality.

Now we consider the case when some universal factors are weak mean factors. Let $\kappa(N)$ be the order of the smallest eigenvalue of $\left(\int_{0}^{1}\Lambda_{0}^{*}(\tau)d\tau\right)'\left(\int_{0}^{1}\Lambda_{0}^{*}(\tau)d\tau\right)$. If all factors all strong, $\kappa(N)\asymp N$. \cite{onatski2012asymptotics} show that PCA is inconsistent if $\kappa(N)\asymp 1$ and  \cite{bai2023approximate}, \cite{jiang2023revisiting}, \cite{fan2024can} and \cite{choi2025high} relax the strong factor assumption by allowing for $\kappa(N)=o(N)$ while still requiring $\kappa(N)\to\infty$. Under different requirements on $\kappa(N)$, these works show that PCA achieves $\sqrt{\kappa(N)}$-average and pointwise-in-$t$ rate of convergence for factors. In particular, \cite{fan2024can} obtains $\sqrt{\kappa(N)}$-asymptotic normality for PCA when $\kappa(N)$ can be as small as $\log(N)$.

In contrast, our Assumption \ref{assum.normalization} puts no restrictions on $\kappa(N)$ and \textit{even allows for} $\kappa(N)=O(1)$ or $o(1)$, while still achieves $\sqrt{N}$-asymptotic normality. We illustrate this in the following example.
\begin{exm}[A scale model with diminishing location shift]
Let $Y_{it}=f^{*}_{0,t}(1/N^{(1-\beta)/2}+\epsilon_{it})$ with $(1-\beta)>0$ where $\epsilon_{it}$ are i.i.d standard normal and independent of $f^{*}_{0,t}$ for all $t$. Suppose $f^{*}_{0,t}>0$. Then $\lambda_{0,i}^{*}(\tau)=1/N^{(1-\beta)/2}+\Phi^{-1}(\tau)$. Our Assumption \ref{assum.normalization} is satisfied because 
$$\lim_{N\to\infty}\frac{1}{N}\int_{0}^{1}\Lambda_{0}^{*'}(\tau)\Lambda^{*}_{0}(\tau)d\tau>0, \forall \beta,$$
which, by continuity, further implies the existence of a compact $\mathcal{U}\subset (0,1)$ such that 
$$\lim_{N\to\infty}\frac{1}{N}\int_{\mathcal{U}}\Lambda_{0}^{*'}(\tau)\Lambda^{*}_{0}(\tau)d\tau>0, \forall \beta.$$
However,
\[
  \left(\int_{0}^{1}\Lambda_{0}^{*}(\tau)d\tau\right)'\left(\int_{0}^{1}\Lambda_{0}^{*}(\tau)d\tau\right)=N^{\beta},
\]
so $\kappa(N)=O(1)$ if $\beta=0$ and is $o(1)$ if $\beta<0$, leading to inconsistency of PCA.
\end{exm}

 \section{The Baseline Estimator}\label{sec.estimator}
 Under our conditional quantile model \eqref{eq.model}, one can in principle use quantile regression to estimate the unknown factors and loadings. Since $\Lambda^{*}_{0}(\cdot)$ as a function on $\mathcal{U}$ is an infinite dimensional parameter to compute, we discretize $\mathcal{U}$ for feasibility. As for quantile regression, $\sqrt{N}$-rate can be obtained when estimating the space spanned by the factors no matter whether one uses the standard or some smoothed version of quantile regression. To save space, in the paper we only present the estimator based on smoothed quantile regression following \cite{fernandes2021smoothing} and \cite{he2023smoothed} because similar to \cite{chen2021quantile}, smoothed quantile regression also leads to uniform consistency for factors at each $t$.  

 \subsection{Discretize $\mathcal{U}$}\label{sec.M}
Let $(\tau_{1},\ldots,\tau_{M})$ be an equally spaced grid on $\mathcal{U}$.
\begin{assum}\label{assum.lip}
Function $\lambda^{*}_{0,i}(\cdot)$ is Lipschitz continuous on $\mathcal{U}$ with a Lipschitz constant uniform in $i$.
\end{assum}

\begin{lem}\label{lem.discrete}
Assumptions \ref{assum.normalization} and \ref{assum.lip} imply that for sufficiently large $M,N$ and $T$, the eigenvalues of $\sum_{m=1}^{M}\Lambda^{*'}_{0}(\tau_{m})\Lambda^{*}_{0}(\tau_{m})/MN$ are bounded away from 0, and the eigenvalues of matrix $F_{0}^{*}\sum_{m=1}^{M}\Lambda^{*'}_{0}(\tau_{m})\Lambda^{*}_{0}(\tau_{m})F_{0}^{*'}/MNT$ are distinct.
\end{lem}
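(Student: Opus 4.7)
The plan is to treat $S_M \coloneqq \frac{1}{MN}\sum_{m=1}^{M}\Lambda^{*'}_{0}(\tau_{m})\Lambda^{*}_{0}(\tau_{m})$ as a Riemann sum approximating $I_N \coloneqq \frac{1}{N}\int_0^1 \Lambda^{*'}_{0}(\tau)\Lambda^{*}_{0}(\tau)d\tau$ and then to transport the properties of $I_N$---eigenvalues bounded away from $0$ (for the first claim) and distinctness of the top-$r$ eigenvalues after sandwiching by $F^*_0/\sqrt{T}$ (for the second claim)---to $S_M$ via matrix perturbation. The key preliminary fact is that $\tau\mapsto \Lambda^{*'}_{0}(\tau)\Lambda^{*}_{0}(\tau)/N$ is entrywise Lipschitz on $\mathcal{U}$ with a constant independent of $N$: its $(j,k)$-entry is $\frac{1}{N}\sum_{i=1}^N \lambda^*_{0,ij}(\tau)\lambda^*_{0,ik}(\tau)$; Assumption \ref{assum.normalization}-(iv) bounds each $\lambda^*_{0,ij}(\tau)$ uniformly in $i$ and $\tau\in\mathcal{U}$, and Assumption \ref{assum.lip} makes each $\lambda^*_{0,ij}(\tau)$ Lipschitz with a uniform-in-$i$ constant, so the product is uniformly Lipschitz in $i$ and averaging over $i$ preserves the constant. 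Applying a standard Riemann-sum error bound entrywise then yields $\|S_M-I_N\|=O(1/M)$ with the constant uniform in $N$.

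For the first claim, Assumption \ref{assum.normalization}-(i) gives that all $r$ eigenvalues of $I_N$ are bounded away from $0$ for sufficiently large $N$, so Weyl's inequality combined with $\|S_M-I_N\|=O(1/M)$ transports this property to $S_M$ for sufficiently large $M$ (and $N$). For the second claim, the nonzero eigenvalues of the $T\times T$ matrix $F_0^* S_M F_0^{*'}/T$ coincide with those of the $r\times r$ matrix $(F_0^{*'}F_0^*/T)\,S_M$. By Assumption \ref{assum.normalization}-(iii), the $r$ nonzero eigenvalues of $F_0^* I_N F_0^{*'}/T$---equivalently of $(F_0^{*'}F_0^*/T)\,I_N$---are distinct, hence separated by a positive gap for large $N,T$. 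Since $\|F_0^{*'}F_0^*/T\|=O(1)$ by Assumption \ref{assum.normalization}-(ii),(iv), the perturbation $(F_0^{*'}F_0^*/T)(S_M-I_N)$ is again $O(1/M)$ in norm, and Bauer--Fike (or Hoffman--Wielandt) keeps the $r$ nonzero eigenvalues of $(F_0^{*'}F_0^*/T)\,S_M$ within $O(1/M)$ of their limits, so distinctness persists for sufficiently large $M,N,T$.

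The main obstacle is making the $O(1/M)$ Riemann-sum error uniform in $N$ (and $T$), so that the threshold for ``sufficiently large $M$'' is not itself inflated when $N,T$ grow; this is exactly what the uniform-in-$i$ Lipschitz constant in Assumption \ref{assum.lip} together with the uniform boundedness in Assumption \ref{assum.normalization}-(iv) delivers. A minor technical nuisance is that the grid lies in the compact $\mathcal{U}\subset(0,1)$ where Lipschitz and boundedness are imposed, whereas $I_N$ is the integral over $(0,1)$; the contribution of $(0,1)\setminus\mathcal{U}$ to $I_N$ is handled by boundedness of the integrand multiplied by $|(0,1)\setminus\mathcal{U}|$, which can be made arbitrarily small and thus absorbed into the same $O(1/M)$ bound.
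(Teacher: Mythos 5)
Your overall strategy mirrors the paper's: bound the Riemann-sum error $\|S_M - I_N\|$ by $O(1/M)$ uniformly in $N$ using the uniform-in-$i$ Lipschitz constant from Assumption \ref{assum.lip} together with uniform boundedness from Assumption \ref{assum.normalization}-(iv), then transfer the eigenvalue properties of the integral to the sum via a perturbation bound. For the first claim (eigenvalues bounded away from $0$) your argument agrees with the paper's and is sound, since $S_M$ and $I_N$ are symmetric and Weyl's inequality applies directly.

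For the distinctness claim, however, your route has a technical gap. You reduce to the $r\times r$ matrix $(F_0^{*\prime}F_0^*/T)\,S_M$, which is \emph{not symmetric}, and then invoke Bauer--Fike or Hoffman--Wielandt. Hoffman--Wielandt is stated for normal matrices, and Bauer--Fike gives an eigenvalue perturbation bound only up to a factor of the condition number of the diagonalizing eigenvector matrix, which you have not controlled uniformly in $N,T$. The paper sidesteps this entirely by staying with the symmetric $T\times T$ matrices $F_0^* S_M F_0^{*\prime}/(NT)$ and $F_0^* I_N F_0^{*\prime}/(NT)$: the perturbation $F_0^*(S_M - I_N)F_0^{*\prime}/(NT)$ is then bounded in Frobenius norm by $\|F_0^*/\sqrt{T}\|_F^2\,\|S_M - I_N\|_F = O(1/M)$ via Cauchy--Schwarz and boundedness of $f_{0,t}^*$, after which plain Weyl on symmetric matrices (applied to consecutive eigenvalue gaps) gives the distinctness. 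Your argument can be repaired without abandoning the $r\times r$ reduction by similarity-transforming to the symmetric matrix $(F_0^{*\prime}F_0^*/T)^{1/2} S_M (F_0^{*\prime}F_0^*/T)^{1/2}$ and applying Weyl there, using Assumption \ref{assum.normalization}-(ii) to keep the conditioning of $(F_0^{*\prime}F_0^*/T)^{1/2}$ under control; but the paper's route is more direct. The remaining "minor technical nuisance" you raise about $(0,1)\setminus\mathcal{U}$ is handled the same way in the paper, so that part is not a point of departure.
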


By this observation, by letting $M$ be a large fixed constant or $M=h(T)$ where $h$ is a known increasing function, we can focus on estimating the factors and loadings at $\tau_{m},m=1,\ldots,M$, which leads to both an implementable estimator and tractable theoretical derivations.  In what follows, we let $M$ be a sufficiently large fixed constant for simplicity. It is straightforward to extend the results in the paper to the case where $M$ is slowing growing with $T$.  

Next, we impose normalization to the true parameters for identification. Let
\begin{equation}\label{eq.eigen}
  F^{*}_{0}\left(\frac{1}{MNT}\sum_{m=1}^{M}\Lambda^{*'}_{0}(\tau_{m})\Lambda^{*}_{0}(\tau_{m})\right)F_{0}^{*'}=F_{0}\left(\frac{1}{MNT}\sum_{m=1}^{M}\Lambda^{'}_{0}(\tau_{m})\Lambda_{0}(\tau_{m})\right)F_{0}',
\end{equation}
where the right-hand side is an eigendecomposition of the left-hand side; the matrix $F_{0}/\sqrt{T}$ collects all the eigenvectors of the left-hand side, and $\sum_{m=1}^{M}\Lambda^{'}_{0}(\tau_{m})\Lambda_{0}(\tau_{m})/MN$ is a diagonal matrix with diagonal entries $\sigma_{1}^{2}\geq \cdots\geq \sigma_{r}^{2}$ equal to the eigenvalues.\footnote{The diagonal matrix of the eigenvalues still has the additive structure in $m$ as on the right-hand side of equation \eqref{eq.eigen}. This is because by definition of eigenvectors, there exists an $m$-independent full-rank matrix $H$ such that $F_{0}^{*}/\sqrt{T}=F_{0}H/\sqrt{T}$. So, the eigenvalue matrix is equal to $\sum_{m=1}^{M}H\Lambda^{*'}_{0}(\tau_{m})\Lambda^{*}_{0}(\tau_{m})H'/MN$, and our $\Lambda_{0}(\tau_{m})=\Lambda_{0}^{*}(\tau_{m})H'$.} Then Lemma \ref{lem.discrete} implies that
\[\frac{F_{0}'F_{0}}{T}=I_{r}\text{ and }\sigma_{1}^{2}>\cdots>\sigma_{r}^{2}>0.\]
Moreover, $f_{0,t}$ and $\lambda_{0,i}(\tau_{m})$ are also uniformly bounded; with a bit abuse of notation, still assume that both are in $\mathcal{B}^{r}$ for all $i,t$ and $m$. Meanwhile, $\lambda_{0,i}(\cdot)$ is also Lipschitz uniformly on $\mathcal{U}$ under Assumption \ref{assum.lip}. Similar to \cite{chen2021quantile}, in the rest of the paper, we will treat these diagonalized $F_{0}$ and $\Lambda_{0}(\tau_{m})$s as our parameters of interest for simplicity.

\subsection{Smooth Quantile Regression}\label{sec.estimator1.estimator}
To achieve uniform consistency for the factors and loadings, we adopt the smoothed quantile regression in \cite{fernandes2021smoothing} and \cite{he2023smoothed}. Unlike \cite{horowitz1998bootstrap}, \cite{galvao2016smoothed} and \cite{chen2021quantile} where the indicator function in the check function is smoothed by some CDF kernel, this version of smoothed quantile regression keeps the check function but smooths the empirical distribution. \cite{fernandes2021smoothing} and \cite{he2023smoothed} show that it has superior theoretical and computational properties compared to the traditional smoothing techniques. 

Specifically, let $k(\cdot)$ be some smooth kernel function and $h$ be some bandwidth converging to 0 as $N,T\to\infty$. Under diagonalization of the true factors and loadings, we impose the following normalization for the estimator. Let $\Lambda(\cdot)\coloneqq (\Lambda(\tau_{m}))_{m=1,\ldots,M}$. Define the following parameter spaces:
 \begin{align*}
\mathcal{F}\coloneqq&\Bigg\{F\in \mathcal{B}^{T\times r}:\frac{F'F}{T}=I_{r}\Bigg\},\\
\Xi\coloneqq&\Bigg\{\Lambda(\cdot)\in \left(\mathcal{B}^{N\times r}\right)^{M}:\sum_{m=1}^{M}\frac{\Lambda'(\tau_{m})\Lambda(\tau_{m})}{MN}\text{ is diagonal}\\
& \ \ \ \ \ \ \ \ \ \ \ \ \ \ \ \ \ \ \ \ \text{with the diagonal entries in decreasing order}\Bigg\}.\label{eq.norm.Lambda}
\end{align*}
By construction, the diagonalized true loadings and factors satisfy $(\Lambda_{0}(\cdot),F_{0})\in\Xi\times\mathcal{F}$.

Our baseline estimator, \textit{universal factor analysis} (UFA), is defined as follows.
\begin{equation}\label{eq.estimator}
  \left(\hat{\Lambda}(\cdot),\hat{F}\right)\coloneqq \arg\min_{\Lambda(\cdot)\in\Xi, F\in\mathcal{F}}\frac{1}{M}\sum_{m=1}^{M}\int\rho_{\tau_{m}}(s)\frac{1}{NTh}\sum_{i,t}k\left(\frac{s-(Y_{it}-\lambda_{i}'(\tau_{m})f_{t})}{h}\right)ds, 
\end{equation}
where $\rho_{\tau}(\cdot)$ is the check function at $\tau$. Note that we treat $r$ as known for the moment. A consistent estimator of $r$ is introduced in Section \ref{sec.r}.

\sloppy Let $\hat{R}_{h,t}(f;\Lambda(\cdot))\coloneqq\sum_{m}\int \rho_{\tau_{m}}(s)\sum_{i}k\left((s-(Y_{it}-\lambda_{i}'(\tau_{m})f))/h\right)ds/(MNh)$ and $\hat{R}_{h,i,\tau}(\lambda;F)\coloneqq\int \rho_{\tau}(s)\sum_{t}k\left((s-(Y_{it}-\lambda'f_{t}))/h\right)ds/(Th)$. We propose the following algorithm.
\vspace{0.5em}

\begin{algorithm}[H]\label{alg.1}
\SetAlgoLined
\SetKwComment{Comment}{ \%\%\% }{}
\KwInput{Set the initial values $F^{0}$ and $\Lambda^{0}(\tau_{m})$ for every $\tau_{m}$.}
\While{not converged}{
\textbf{1.1} For each $t$, $f^{(k+1)}_{t,temp}\leftarrow \arg\min_{f\in\mathcal{B}^{r}}\hat{R}_{h,t}(f;\Lambda^{(k)}_{temp}(\cdot))$; $F^{(k+1)}_{temp}\leftarrow(f^{(k+1)}_{t,temp})$\;
\textbf{1.2} For each $i$ and $\tau_{m}$, $\lambda^{(k+1)}_{i,temp}(\tau_{m})\leftarrow \arg\min_{\lambda\in\mathcal{B}^{r}}\hat{R}_{h,i,\tau_{m}}(\lambda;F^{(k+1)}_{temp})$; $\Lambda^{(k+1)}_{temp}(\tau_{m})\leftarrow(\lambda^{(k+1)}_{i,temp}(\tau_{m}))$.
}
\Return $F_{temp}^{(\infty)},\Lambda_{temp}^{(\infty)}(\tau_{m})$ for every $m$\;
\textbf{2.} Normalization\;
  \begin{itemize}
\item[\textbf{2.1}] $L^{(\infty)}(\tau_{m})\leftarrow \Lambda^{(\infty)}_{temp}(\tau_{m})F^{(\infty)'}_{temp}$.
\item[\textbf{2.2}] $\mathcal{L}^{(\infty)}\leftarrow\sum_{m}L^{(\infty)'}(\tau_{m})L^{(\infty)}(\tau_{m})/MNT$; eigendecompose $\mathcal{L}^{(\infty)}$.
\item[\textbf{2.3}] $F^{(\infty)}\leftarrow \sqrt{T}$ times the $T\times r$ eigenvector matrix corresponding to the $r$-largest eigenvalues of $\mathcal{L}^{(\infty)}$.
\item[\textbf{2.4}] $\Lambda^{(\infty)}(\tau_{m})\leftarrow L^{(\infty)}(\tau_{m})F^{(\infty)}/T,\forall m$.
\end{itemize}
\KwOutput{$F^{(\infty)},\Lambda^{(\infty)}(\tau_{m})$ for every $m$.}
\caption{Universal Factor Analysis (UFA)}
\end{algorithm}
\vspace{0.5em}
In Algorithm \ref{alg.1}, steps 1.1 and 1.2 can be carried out by gradient descent type methods by the smoothness of the objective function. 
Note that in these steps, $\lambda$ and $f$ are simply treated as $r\times 1$ real vectors without normalization. Compared to \cite{ando2019quantile} and \cite{chen2021quantile}, the most distinctive feature of Algorithm \ref{alg.1} is the normalization from steps 2.1 to 2.4. The rationale behind step 2.4 is that $L_{0}(\tau)F_{0}/T=\Lambda_{0}(\tau)$ for all $\tau$ by definition.  

\begin{remark}
Under step 2.4, $\Lambda^{(\infty)}_{temp}(\tau)F^{(\infty)'}_{temp}=\Lambda^{(\infty)}(\tau)F^{(\infty)'}$. An important implication is that the pair $(\Lambda^{(\infty)}(\cdot),F^{(\infty)})$, although not obtained by directly solving the minimization problem, yield the same value of the objective function as under $(\Lambda^{(\infty)}_{temp}(\cdot),F^{(\infty)}_{temp})$, which is the minimum. Therefore, 
\begin{align*}
  \hat{f}_{t}=\arg\min_{f\in\mathcal{B}^{r}}\hat{R}_{h,t}\left(f;\hat{\Lambda}(\cdot)\right),\ \ \hat{\lambda}_{i}(\tau_{m})=\arg\min_{\lambda\in\mathcal{B}^{r}}R_{h,i,\tau_{m}}\left(\lambda;\hat{F}\right),\forall m=1,\ldots,M.
\end{align*}
Similar to \cite{chen2021quantile}, this observation is important when deriving the asymptotic properties of our estimator. 
\end{remark}

\begin{remark}\label{rem.initial}
The minimization problem \eqref{eq.estimator} is nonconvex. A good initial guess can improve the performance of the algorithm. Our recommendation is to use a nuclear-norm penalization type estimator of the factors and loadings as the initial guess; this estimator is a by-product of our estimator of $r$; it is consistent in the average squared Frobenius norm and obtained by solving a convex problem; see details in Remark \ref{rem.initial.confirm} in Section \ref{sec.r}. Monte Carlo simulations suggest that the algorithm works well under this initial guess.
\end{remark}

\subsection{Rate of Convergence}\label{sec.roc}
We show in this section that, under the assumption $N$ and $T$ having the same order, the estimator \eqref{eq.estimator} can estimate the space spanned by the factors at $\sqrt{N}$-rate. This is equal to the rate of the QFA estimator in \cite{chen2021quantile} and PCA in \cite{bai2003inferential} under strong factors. We also derive uniform rates for $\hat{\lambda}_{i}(\tau_{m})$ and $\hat{f}_{t}$ that match the preliminary pointwise rates in Lemma S.5 in \cite{chen2021quantile}.

Define $\varepsilon_{it}(\tau_{m})=Y_{it}-\lambda_{0,i}'(\tau_{m})f_{0,t}$. Denote the density of $\varepsilon_{it}(\tau)$ conditional on $F_{0}$ by $\textsf{f}_{\tau,it}(\cdot)$; note that $\textsf{f}_{\tau,it}(0)$ is also equal to the density of $Y_{it}$ conditional on the factors evaluated at the true common component. We now impose the following assumptions.
\begin{assum}\label{assum.iid}
Conditional on $F_{0}$, the $\varepsilon_{it}(\tau_{m})$s are independent across $i$ and $t$ for each $m$.
\end{assum}
\begin{assum}\label{assum.interior}
The true factors and loadings $(\lambda_{0,i}'(\tau_{m}),f_{0,t}')_{m,i,t}$ lie in the interior of $\mathcal{B}^{MN\times r}\times \mathcal{B}^{T\times r}$.
\end{assum}
\begin{assum}\label{assum.density}
(i) For any compact set $C\subset \mathbb{R}$, there exists a $\underline{\textsf{f}}_{C}>0$ such that the conditional density satisfies $\inf_{i,t,\tau\in\mathcal{U},c\in C}\textsf{f}_{\tau,it}(c)\geq \underline{\textsf{f}}_{C}$. (ii) For some positive integer $\gamma\geq 14$, $\textsf{f}_{\tau,it}$ is $\gamma+2$ times continuously differentiable. For $j=0,\ldots,\gamma+2$, the absolute value of the $j$-th derivative $\textsf{f}^{(j)}_{\tau,i,t}(u)$ is uniformly bounded in $i,t$ and $u$.
\end{assum}

\begin{assum}\label{assum.kernel}
(i) The kernel $k$ is symmetric around 0, twice continuously differentiable with $\int_{-\infty}^{\infty} |k^{(1)}(z)|dz<\infty$, $\int_{-\infty}^{\infty}k(z)dz=1$, $\int_{-\infty}^{\infty}s^{j}k(s)ds=0$ for $j=1,\ldots,\gamma-1$ and $\int_{-\infty}^{\infty}s^{\gamma}k(s)ds\neq 0$. (ii) As $N,T\to\infty$, $N\asymp T$ and the bandwidth $h\propto T^{-c}$ where $\gamma^{-1}<c<1/12$.
\end{assum}

Similar to \cite{ando2019quantile} and \cite{chen2021quantile}, the independence assumption in Assumption \ref{assum.iid} is made so that we can adopt some concentration inequalities from the random matrix theory. Note that only conditional independence is assumed, so serial or cross-sectional correlation among $Y_{it}$s are allowed, captured by the correlation among the factors and loadings. Assumption \ref{assum.interior} ensures that the estimators satisfy the first order conditions. Assumptions \ref{assum.density} to \ref{assum.kernel} are similar to \cite{galvao2016smoothed} and \cite{chen2021quantile}. Note that Assumptions \ref{assum.density}-(i) does not rule out the case of unbounded support since the lower bound $\underline{\textsf{f}}_{C}$ is $C$ specific. Assumption \ref{assum.density} implies differentiability of $\lambda_{0,i}(\cdot)$. Assumption \ref{assum.kernel} says that the kernel function is of bounded variation on $\mathbb{R}$ and has order $\gamma$. Compared with the aforementioned literature, our requirements on $c$ and $\gamma$ are stronger, needed for asymptotic normality for the inverse density weighted estimator in the next section; in this section, we can relax them to be $\gamma \geq 4$ and $\gamma^{-1}<c<1/2$. Due to Assumption \ref{assum.kernel}-(ii), which is made for simplicity, we will use $N$ and $T$ exchangeably when discussing rates of convergence. 

Let $\zeta_{NT}\coloneqq \sqrt{1/N}+\sqrt{1/T}$. Let $\|\cdot\|_{F}$ denote the Frobenius norm of a matrix. For a real number $a$, let $\text{sgn}(a)=1$ if $a\geq 0$ and $\text{sgn}(a)=-1$ if $a< 0$. Let $\hat{F}_{j}$ and $F_{0,j}$ be the $j$-th column in the $T\times r$ matrices $\hat{F}$ and $F_{0}$, respectively. Let $H_{NT,1}\coloneqq\text{diag}(\text{sgn}(\hat{F}_{j}'F_{0,j}))$ be an $r\times r$ diagonal matrix.
\begin{thm}\label{thm.roc}
We have the following results under Assumptions \ref{assum.foundation} to \ref{assum.kernel}. 
\begin{itemize}
  \item[(i)] Average rate:
\begin{align*}
  \frac{1}{T}\left\|\hat{F}-F_{0}H_{NT,1}\right\|^{2}_{F}=O_{p}\left(\zeta_{NT}^{2}\right),\ \ \ \ \frac{1}{MN}\sum_{m=1}^{M}\left\|\hat{\Lambda}(\tau_{m})-\Lambda_{0}(\tau_{m})H_{NT,1}^{'-1}\right\|^{2}_{F}=O_{p}\left(\zeta_{NT}^{2}\right).
\end{align*}
\item[(ii)] Uniform rate:
\begin{align*}
\max_{t=1,\ldots,T}\left\|\hat{f}_{t}-H_{NT,1}'f_{0,t}\right\|_{F}=&O_{p}\left(\frac{\zeta_{NT}}{h}\right),\\
\max_{i=1,\ldots,N;m=1,\ldots,M}\left\|\hat{\lambda}_{i}(\tau_{m})-H_{NT,1}^{-1}\lambda_{0,i}(\tau_{m})\right\|_{F}=&O_{p}\left(\frac{\zeta_{NT}}{h}\right),\\
\max_{i=1,\ldots,N;m=1,\ldots,M;t=1,\ldots,T}\left|\hat{\lambda}_{i}'(\tau_{m})\hat{f}_{t}-\lambda_{0,i}'(\tau_{m})f_{0,t}\right|=&O_{p}\left(\frac{\zeta_{NT}}{h}\right).
\end{align*}
\end{itemize}
\end{thm}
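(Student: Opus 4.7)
My plan is to establish the average rate in part~(i) first by exploiting the optimality of $(\hat{\Lambda}(\cdot),\hat{F})$ together with strong convexity of the smoothed-quantile objective, and then to bootstrap that bound to obtain the uniform rates in part~(ii) via the first-order conditions for individual $\hat{f}_t$ and $\hat{\lambda}_i(\tau_m)$. Throughout, the working quantity is the common component $\hat{L}(\tau_m)\coloneqq \hat{\Lambda}(\tau_m)\hat{F}'$: bounding $\|\hat{L}(\tau_m)-L_0(\tau_m)\|_F$ comes first, and only afterwards do I translate that into rates on the factors and loadings separately through the normalization in step~2 of Algorithm~\ref{alg.1}.

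For part~(i), I would begin from the optimality inequality $\hat{R}(\hat{\Lambda},\hat{F})\le \hat{R}(\Lambda_0,F_0)$, valid because $(\Lambda_0,F_0)\in \Xi\times\mathcal{F}$ by the normalization in Section~\ref{sec.M}, and Taylor-expand each summand in the common component $L_{it}(\tau_m)$ around $L_{0,it}(\tau_m)$. Integration by parts against $k$ converts the gradient into a smoothed analogue of the standard quantile score $\tau_m-I\{\varepsilon_{it}(\tau_m)\ge 0\}$ with a bias of order $h^{\gamma}$ by the $\gamma$-th order kernel condition in Assumption~\ref{assum.kernel}(i); its cross-sectional variance is $O(1/(MNT))$ under Assumption~\ref{assum.iid}. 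The Hessian localizes to the conditional density $\textsf{f}_{\tau_m,it}(0)$, which is bounded below by Assumption~\ref{assum.density}(i). Strong convexity then yields $(MNT)^{-1}\sum_m \|\hat{L}(\tau_m)-L_0(\tau_m)\|_F^2=O_p(\zeta_{NT}^2)$. Using Assumption~\ref{assum.normalization}(i)--(ii) with Lemma~\ref{lem.discrete}, I can invert $F_0'F_0/T$ and $\sum_m\Lambda_0'(\tau_m)\Lambda_0(\tau_m)/MN$ and, via a Davis--Kahan-type argument applied to the eigendecomposition in step~2.2 of the algorithm, translate the common-component bound into the stated rates for $\hat{F}-F_0H_{NT,1}$ and $\hat{\Lambda}(\tau_m)-\Lambda_0(\tau_m)H_{NT,1}^{'-1}$; the eigenvalue distinctness in Assumption~\ref{assum.normalization}(iii) (via Lemma~\ref{lem.discrete}) is what makes the sign-only rotation $H_{NT,1}$ sufficient.

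For part~(ii), the interior condition in Assumption~\ref{assum.interior} guarantees that $\hat{f}_t$ satisfies $\nabla_f \hat{R}_{h,t}(\hat{f}_t;\hat{\Lambda}(\cdot))=0$ exactly. I would linearize this FOC around $H_{NT,1}'f_{0,t}$, obtaining $\hat{A}_t(\hat{f}_t-H_{NT,1}'f_{0,t})=\hat{S}_t+\hat{R}_t$, where $\hat{A}_t$ involves a kernel-smoothed density estimator that is bounded below once $\hat{\Lambda}$ is replaced by $\Lambda_0H_{NT,1}^{'-1}$ (with the substitution error controlled by part~(i)), $\hat{S}_t$ is an average of centered quantile scores of variance $O(1/N)$, and $\hat{R}_t$ is the remainder from smoothing bias and plug-in error in $\hat{\Lambda}$. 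A Bernstein-type bound with a union bound over $t$ gives $\max_t\|\hat{S}_t\|=O_p(\sqrt{\log T/N})$, while $\hat{R}_t$ picks up a factor $1/h$ from derivatives of the smoothed indicator; combined with the $O_p(\zeta_{NT})$ rate from part~(i), the right-hand side is $O_p(\zeta_{NT}/h)$ uniformly in $t$. The argument for $\max_{i,m}\|\hat{\lambda}_i(\tau_m)-H_{NT,1}^{-1}\lambda_{0,i}(\tau_m)\|$ is symmetric, and the uniform common-component bound follows by adding and subtracting $\hat{\lambda}_i'(\tau_m)H_{NT,1}'f_{0,t}$ and using boundedness in Assumption~\ref{assum.normalization}(iv).

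The main obstacle is controlling the remainder term $\hat{R}_t$ while achieving exactly the stated $1/h$ penalty rather than a worse power of $h$. This requires a careful second-order Taylor expansion of the smoothed score in which Assumption~\ref{assum.iid} is used to kill the leading stochastic piece, Assumption~\ref{assum.kernel}(ii) to ensure that the $h^{\gamma}$ bias terms are negligible relative to $\zeta_{NT}/h$, and uniform consistency of $\hat{\Lambda}$ on the design side (from part~(i) plus a standard covering argument in $\mathcal{B}$) to absorb the plug-in error. Balancing these three error sources under the bandwidth constraint $\gamma^{-1}<c<1/12$ is precisely what the kernel assumption is calibrated to deliver, and is the most delicate technical step of the proof.
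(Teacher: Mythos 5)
Your sketch for part (ii) tracks the paper's actual argument closely: linearize the first-order conditions for $\hat f_t$ and $\hat\lambda_i(\tau_m)$ (valid by Assumption~\ref{assum.interior}), invert a conditional-density-weighted Gram matrix that is bounded below, absorb the plug-in error in the other block via the part-(i) rate, and accept a $1/h$ penalty from derivatives of the smoothed score. That matches the proof in Appendix~\ref{appx.roc}.

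The genuine gap is in part (i), at the sentence ``Strong convexity then yields $(MNT)^{-1}\sum_m\|\hat L(\tau_m)-L_0(\tau_m)\|_F^2=O_p(\zeta_{NT}^2)$.'' Strong convexity plus the optimality inequality only gives $c\,d_1^2(\hat\theta,\theta_0)\le |\hat S_h(\hat\theta)|+O(h^\gamma)$, where $\hat S_h(\hat\theta)$ is the centered empirical process evaluated at the \emph{random} estimator. Because the parameter has effective dimension $(MN+T)r$ (growing with the sample) and $\hat L-L_0$ is correlated with the score, a pointwise variance computation or a naive Cauchy--Schwarz bound $\|\hat S\|_F\|\hat L-L_0\|_F$ gives only $d_1(\hat\theta,\theta_0)=O_p(1)$, i.e.\ consistency (this is exactly Lemma~\ref{lem.avgcons}) and not the rate. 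Your claimed score variance ``$O(1/(MNT))$'' would, taken at face value, even predict a parametric rate much faster than $\zeta_{NT}^2\asymp 1/\min(N,T)$, which is a sign that the dimensional accounting is off. To extract $\zeta_{NT}^2$ one must exploit the low complexity of the admissible set of common components: the paper does this by bounding $\mathbb{E}\bigl[\sup_{\theta\in\Theta(\delta,d_1)}|\hat S_h(\theta)|\bigr]\lesssim\delta\,\zeta_{NT}$ via a covering/chaining argument over the $r(MN+T)$-dimensional factor-and-loading parametrization (Lemma~\ref{lem.psi2}), and then peeling over dyadic shells $\{2^{j-1}<d_1(\theta,\theta_0)/\zeta_{NT}\le 2^j\}$. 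An alternative that would also close the gap is the low-rank duality route: since $\hat L(\tau_m)-L_0(\tau_m)$ has rank at most $2r$, one can bound the score inner product by an operator-norm bound on the $N\times T$ score matrix (giving $(\sqrt N+\sqrt T)/(NT)$) times the Frobenius norm of the increment. Your proposal invokes neither mechanism, so the crux of part (i) is unsupported; the Davis--Kahan translation (Lemma~\ref{lem.metrics}) and the rest of your outline are fine once that is supplied.
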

Theorem \ref{thm.roc}-(i) shows that, even if (some of) the factors are weak to the mean or some quantiles of $Y$, the space they span can be consistently estimated at the optimal rate, in contrast to \cite{bai2023approximate} and \cite{chen2021quantile}. Note that similar to \cite{chen2021quantile}, Theorem \ref{thm.roc}-(i) can also be shown by replacing the smoothed objective function with the standard check-function-based objective function; this result does not rely on smoothing. 

The uniform rate in Theorem \ref{thm.roc}-(ii), on the other hand, is slower than the optimal rate $\sqrt{\log N}/\sqrt{N}$. However, the current rate is sufficient to establish $\sqrt{N}$-asymptotic normality and to achieve the optimal uniform rate for the inverse density estimator to be introduced in the following section.

\section{The Inverse Density Weighted Estimator}\label{sec.est2}
\subsection{The Value of Inverse Density Weighting}\label{sec.value_of_density}

The main challenge to derive $\sqrt{N}$- and $\sqrt{T}$-asymptotic normality for $\hat{f}_{t}$ and $\hat{\lambda}_{i}(\tau_{m})$ defined in \eqref{eq.estimator} is the heterogeneity of the conditional density $\textsf{f}_{\tau,it}(0)$ in $i$ and $t$.
To see it, let $\eta_{h,\tau_{m},it}\coloneqq K\left((\lambda_{0,i}'(\tau_{m})f_{0,t}-Y_{it})/h\right)-\mathbb{E}\left[K\left((\lambda_{0,i}'(\tau_{m})f_{0,t}-Y_{it})/h\right)\right]$ where $K(c)=\int_{-\infty}^{c}k(z)dz$. For simplicity, assume $H_{NT,1}=I_{r}$. Let $Q_{F,t}\coloneqq \sum_{m=1}^{M}\sum_{i=1}^{N}\textsf{f}_{\tau_{m},it}(0)\lambda_{0,i}(\tau_{m})\lambda_{0,i}'(\tau_{m})/MN$ and $Q_{\Lambda,mi}\coloneqq \sum_{t=1}^{T}\textsf{f}_{\tau_{m},it}(0)f_{0,t}f_{0,t}'/T$. Assumptions \ref{assum.normalization} and \ref{assum.density} ensure that $Q_{F,t}$ and $Q_{\Lambda,mi}$ are invertible for sufficiently large $N$ and $T$. We can show that the Taylor expansion of the first order conditions leads to the following stochastic expansion for $\hat{f}_{t}$:
\begin{align*}
Q_{F,t}\left(\hat{f}_{t}-Q_{F,t}^{-1}A_{1}f_{0,t}\right)
=-\frac{1}{MN}\sum_{m=1}^{M}\sum_{i=1}^{N}\eta_{h,\tau_{m},it}\lambda_{0,i}(\tau_{m})+\text{remainder},
\end{align*}
where $A_{1}=\sum_{i=1}^{N}\sum_{m=1}^{M}\sum_{s=1}^{T}\textsf{f}_{\tau_{m},it}(0)\textsf{f}_{\tau_{m},is}(0)\lambda_{0,i}(\tau_{m})\lambda_{0,i}'(\tau_{m})\hat{f}_{s}f'_{0,s}Q_{\Lambda,mi}^{-1}/MNT$. The first term on the right-hand side is $\sqrt{N}$-asymptotically normal under our assumptions. 
The term $Q_{F,t}^{-1}A_{1}$ serves as an $r\times r$ rotation matrix on $f_{0,t}$. However, this rotation matrix loses the simplicity and interpretability compared to the rotation matrix in PCA \citep{bai2003inferential,bai2023approximate}.

More importantly, terms such as \[A_{2t}=\sum_{i=1}^{N}\sum_{m=1}^{M}\sum_{s=1}^{T}\eta_{h,\tau_{m},it}\cdot\textsf{f}_{\tau_{m},is}(0)Q_{\Lambda,mi}^{-1}f_{0,s}\lambda_{0,i}'(\tau_{m})\left(\hat{f}_{s}-f_{0,s}\right)/MNT\]
appear in the remainder. It is unclear whether the order of $A_{2}$ is $o_{p}(1/\sqrt{N})$: Although $A_{2}$ contains a mean zero random variable $\eta_{h,\tau_{m},it}$, the variable depends on $t$ instead of $s$, so it is not to be averaged out in the time dimension, and is correlated with $\tilde{f}_{s}$. The conditional density $\textsf{f}_{\tau_{m},is}(0)$, on the other hand, depends on all indices to be summed over; it thus prevents us to separately handle the average of $\eta_{h,\tau_{m},it}$ over $i$ and the average of $\hat{f}_{s}-f_{0,s}$ over $s$. Hence, we can only show that $A_{2}$ is $O_{p}(\zeta_{NT})$ by Theorem \ref{thm.roc}.\footnote{An important special case is when $\textsf{f}_{\tau_{m},is}(0)$ also has a factor structure, i.e., there exist some $a_{\tau_{m},i}$ and $b_{\tau_{m},s}$ such that $\textsf{f}_{\tau_{m},is}(0)=a_{\tau_{m},i}'b_{\tau_{m},s}$. In that case, $A_{2}$ and in fact the whole remainder term are $o_{p}(\zeta_{NT})$ and thus UFA is asymptotically normal. One such an example is a model where $\textsf{f}_{\tau_{m},is}(0)$ only depends on $i$ or $s$; the Monte Carlo design for normal approximation in Section 5.3 in \cite{chen2021quantile} satisfies this condition. Another example is when $r=1$: One can verify that $\textsf{f}_{\tau_{m},is}(0)=1/(\lambda_{0,i}'(\tau_{m})f_{0,t})$ (see equation \eqref{lem.density.formula}), so it has a factor structure if $r=1$.} 

Now if, instead, the objective function in \eqref{eq.estimator} is weighted by $1/\textsf{f}_{\tau_{m},it}(0)$ for each $m,i$ and $t$, then $Q_{F,t}$ becomes $\Phi\coloneqq \sum_{m=1}^{M}\sum_{i=1}^{N}\lambda_{0,i}(\tau_{m})\lambda_{0,i}'(\tau_{m})/MN$ whereas $Q_{\Lambda,mi}=I_{r}$. 
 For $A_{2}$, since $\textsf{f}_{\tau,is}(0)$ is now cancelled out, we can rewrite the summations as follows
\begin{align*}
A_{2}=&\left(\frac{1}{T}\sum_{s=1}^{T}\left[f_{0,s}\left(\hat{f}_{s}-f_{0,s}\right)'\right]\right)\cdot\left(\frac{1}{MN}\sum_{i=1}^{N}\sum_{m=1}^{M}\frac{\eta_{h,\tau_{m},it}}{\textsf{f}_{\tau_{m},it}(0)}\lambda_{0,i}(\tau_{m})\right)=o_{p}\left(\frac{1}{\sqrt{N}}\right).
\end{align*}
Indeed, we can show that the whole remainder is $o_{p}(1/\sqrt{N})$. Moreover, $Q_{F,t}^{-1}A_{1}$ now becomes $\left(F_{0}'\hat{F}/T\right)'$. We thus have the following equation which admits $\sqrt{N}$-asymptotic normality of the estimator.
\begin{equation}\label{eq.expansion.eg.weighted}
   \Phi\cdot\left(\hat{f}_{t}-\left(\frac{F_{0}'\hat{F}}{T}\right)'f_{0,t}\right)=-\frac{1}{MN}\sum_{m=1}^{M}\sum_{i=1}^{N}\frac{\eta_{h,\tau_{m},it}}{\textsf{f}_{\tau,it}(0)}\lambda_{0,i}(\tau_{m})+o_{p}\left(\frac{1}{\sqrt{N}}\right).
 \end{equation} 
 
 One nice feature of this expansion is that the rotation matrix $F_{0}'\hat{F}/T$ is exactly equal to the rotation matrix $H_{NT,2}$ for PCA in Lemma 3 in \cite{bai2023approximate} under $F_{0}'F_{0}/T=I_{r}$. \cite{bai2023approximate} show that this matrix is equivalent to\footnote{Equivalence is in the sense that the difference of those two rotation matrices is $o_{p}(1/\sqrt{N})$.} the rotation matrix in \cite{bai2003inferential} for PCA for an AFM under strong factors. Hence, it draws a close analogy between our inverse density weighted quantile estimator and PCA. We will further explain the intuition behind after we formally introduce our estimator in this section.
\begin{remark}
The difficulties in determining the stochastic order of $A_{2}$ and in simplifying $A_{1}$ are not caused by the possibility of the existence of weak factors. Nor is it related to the fact that we simultaneously estimate the loadings at multiple $\tau_{m}$s or the specific smoothing method we adopt. The same difficulties exist in \cite{chen2021quantile} as well; indeed, the inverse density weighted estimator we introduce later in this section applies to their setup as well. 
\end{remark}

The above analysis is based on that $\textsf{f}_{\tau,it}(0)$ is known which in most applications is not the case. This motivates us to consider how to estimate those densities in a way such that the estimation error does not lead to further complications. 

 \subsection{Inverse Density Estimation}
The observation in Section \ref{sec.value_of_density} motivates us to reconstruct the objective function \eqref{eq.estimator} by weighting the kernel function at each $(m,i,t)$ by a consistent estimator of $1/\textsf{f}_{\tau_{m},it}(0)$. However, the estimation error of these inverse densities can be correlated with $\eta_{h,\tau_{m},it}$, causing technical challenges. Specifically, in the asymptotic expansion \eqref{eq.expansion.eg.weighted}, the first term on the right-hand side, which drives the asymptotic distribution, is now modified as follows:
\begin{equation}\label{eq.prob}
  \frac{1}{MN}\sum_{m=1}^{M}\sum_{i=1}^{N}\frac{\eta_{h,\tau_{m},it}}{\textsf{f}_{\tau_{m},it}(0)}\lambda_{0,i}(\tau_{m})+\frac{1}{MN}\sum_{m=1}^{M}\sum_{i=1}^{N}\left(\widehat{\frac{1}{\textsf{f}_{\tau_{m},it}(0)}}-\frac{1}{\textsf{f}_{\tau_{m},it}(0)}\right)\eta_{h,\tau_{m},it}\lambda_{0,i}(\tau_{m}),
\end{equation}
where $\widehat{1/\textsf{f}_{\tau_{m},it}(0)}$ is some uniformly consistent estimator of $1/\textsf{f}_{\tau_{m},it}(0)$. If $\widehat{1/\textsf{f}_{\tau_{m},it}(0)}$ and $\eta_{h,\tau_{m},it}$ are correlated, it is unclear whether the second term is $o_{p}(1/\sqrt{N})$. 

In this subsection, we introduce a novel subsample estimation approach to estimate $1/\textsf{f}_{\tau,it}(0)$ so that the problem above is avoided; the key observation is that we only need \textit{three quarters} of data to estimate the whole set of factors and loadings. This idea may be of independent interest and can be applied to other two-stage estimation procedures that involve estimating some latent factor structure via in the first stage. An idea sharing a similar spirit can be found in \cite{bai2021matrix} where they focus on factor analysis with missing data. 

\subsubsection*{Representation of the Inverse Density}

Recall that $\textsf{f}_{\tau,it}(0)$ is equal to the conditional density of $Y_{it}$ at $L_{0,it}(\tau)\coloneqq\lambda_{0,i}'(\tau)f_{0,t}$ given $f_{0,t}$. Since only $\lambda_{0,i}(\cdot)$ depends on $\tau$, one can show that  (e.g. \cite{koenker1999goodness})
\begin{equation}\label{lem.density.formula}
  \frac{1}{\textsf{f}_{\tau,it}(0)}=\lambda_{0,i}^{(1)'}(\tau)f_{0,t},
\end{equation}
where $\lambda_{0,i}^{(1)}(\tau)$ is the derivative of $\lambda_{0,i}(\cdot)$ evaluated at $\tau$. For each $\tau$, we can approximate this derivative by numerical differentiation. \cite{koenker1999goodness} approximate it by the three-point central difference formula; letting $h_{d}$ be some bandwidth, the approximation error is $O(h_{d}^{2})$. We propose to use a five-point difference formula (FPDF) to reduce the approximation error to $O(h_{d}^{4})$. For instance, we can use the central-difference-FPDF to approximate $\lambda_{0,i}^{(1)}(\tau)$ as follows:\footnote{For $\tau$ that is close to 0 or 1, one can instead use forward-difference-FPDF or backward-difference-FPDF, respectively, to mitigate the performance drop near the boundaries. The forward-difference formula reads \[
  \frac{-25\lambda_{0,i}(\tau)+48\lambda_{0,i}(\tau+h_{d})-36\lambda_{0,i}(\tau+2h_{d})+16\lambda_{0,i}(\tau+3h_{d})-3\lambda_{0,i}(\tau+4h_{d})}{12h_{d}},
\]
whereas the backward-difference formula is obtained by flipping the sign of $h_{d}$. Both forward and backward formulae have approximation error $O(h_{d}^{4})$ as the central-difference version.}
\begin{equation}\label{eq.FPDF}
  \frac{-\lambda_{0,i}(\tau+2h_{d})+8\lambda_{0,i}(\tau+h_{d})-8\lambda_{0,i}(\tau-h_{d})+\lambda_{0,i}(\tau-2h_{d})}{12h_{d}}.
\end{equation}

We can thus estimate the inverse density $1/\textsf{f}_{\tau_{m},it}(0)$ by substituting proper estimators of the factors and loadings at $\tau_{m}$ into \eqref{eq.FPDF}. 
\begin{remark}
We estimate the inverse density by treating it as an entity and directly estimating it without actually taking the inverse of some estimated density. This guarantees numerical stability in implementation. An alternative representation of the density is $1/\textsf{f}_{\tau,it}(0)=(\sum_{s=1}^{T}f_{0,s}'/T)(\sum_{s=1}^{T}\textsf{f}_{\tau,is}(0)f_{0,s}f_{0,s}'/T)^{-1}f_{0,t}$; see \cite{fernandes2021smoothing}. However, an estimator based on this representation involves inverting a matrix so is numerically less stable.
\end{remark}

\subsubsection*{A New Sample Splitting Strategy}
To avoid dependence between density estimation and the second stage estimation, we develop the following subsample estimation approach. Let $\mathcal{N}_{1}\coloneqq\{1,\ldots,\lfloor N/2\rfloor\}$, $\mathcal{N}_{2}\coloneqq\{\lfloor N/2\rfloor+1,\ldots,N\}$, $\mathcal{T}_{1}\coloneqq\{1,\ldots,\lfloor T/2\rfloor\}$, $\mathcal{T}_{2}\coloneqq\{\lfloor T/2\rfloor+1,\ldots,T\}$, where $\lfloor c\rfloor$ equals the largest integer that is no greater than $c$. Let $N_{a}=|\mathcal{N}_{a}|$ and $T_{b}=|\mathcal{T}_{b}|,a,b\in\{1,2\}$. Divide the $N\times T$ data matrix $Y$ into four regions:
\begin{align*}
Y=\begin{pmatrix}
Top\ Left&Top\ Right\\
Bottom\ Left& Bottom\ Right
\end{pmatrix},
\end{align*}
where formally, for instance, $Top\ Left=\{Y_{it}:i\in\mathcal{N}_{1},t\in\mathcal{T}_{1}\}$. Let the subsample $Top\coloneqq Top\ Left\cup Top\ Right$. Subsamples $Bottom$, $Left$ and $Right$ are defined similarly.

The key observation is that we can estimate \textit{the full set} of factors $\{f_{0,t}\}$ using, for example, $Top$ only, and estimate the full set of loadings $\{\lambda_{0,i}(\tau_{m}\pm h_{d}),\lambda_{0,i}(\tau_{m}\pm 2h_{d})\}$ using $Left$ only. Together, they consist of only approximately $3/4$ of the full data set. Under Assumption \ref{assum.iid}, these estimated factors and loadings are by construction independent of all $Y_{it}$ in $Bottom\ Right$ even if they share the same $t$ and $i$ indices. 

One subtlety when applying this idea is the mismatch of rotation matrices. If we separately obtain the factors and loadings using $Top$ and $Left$ by UFA, respectively, their rotation matrices may not cancel out when computing the product. Hence, we propose a sequential process.
For illustration, suppose we are to estimate $f_{0,t}$ and $\lambda_{0,i}(\cdot)$ for every $(i,t)\in\mathcal{N}_{2}\times \mathcal{T}_{2}$, i.e., $Y_{it}\in Bottom\ Right$. We first obtain the whole set $\{\hat{f}_{s}^{top}\}$ where $s=1,\ldots,T$ by UFA using $Top$. Then, taking out the $i$-th row in $Left$, regress these $T_{1}$ data points of $Y$ onto the first half of the estimated factors $\{\hat{f}^{top}_{1},\ldots,\hat{f}^{top}_{\lfloor T/2\rfloor}\}$ by smoothed quantile regression. Denote the obtained loading by $\hat{\lambda}^{(t,l)}_{i}(\cdot)$, where $(t,l)$ indicates that the estimate is obtained using $Y$ in $Left$ and the estimated factors using $Top$. The estimators $\hat{\lambda}^{(t,l)}_{i}(\tau)$ and $\hat{f}^{top}_{t}$ are independent of any data point in $Bottom\ Right$ even though $(i,t)$ falls into that region. Now the product $\hat{\lambda}^{(t,l)'}_{i}(\tau)\hat{f}^{top}_{t}$ can consistently estimate $\lambda_{0,i}'(\tau)f_{0,t}$ for any $\tau$ because the rotation matrices for $\hat{\lambda}^{(t,l)}_{i}(\cdot)$ and $\hat{f}^{top}_{t}$ automatically cancel out by construction when doing the multiplication.

Generally, for an arbitrary fixed $(i,t)\in \mathcal{N}_{a}\times \mathcal{T}_{b}$, we estimate the inverse density by the following estimator:
\begin{equation}\label{eq.density.est.N}
  \widehat{\frac{1}{\textsf{f}_{\tau_{m},it}(0)}}\coloneqq\left[\frac{-\hat{\lambda}^{w}_{i}(\tau_{m}+2h_{d})+8\hat{\lambda}^{w}_{i}(\tau_{m}+h_{d})-8\hat{\lambda}^{w}_{i}(\tau_{m}-h_{d})+\hat{\lambda}^{w}_{i}(\tau_{m}-2h_{d})}{12h_{d}}\right]'\hat{f}_{t}^{v},
\end{equation}
where 
\begin{equation*}
  (w,v)=\begin{cases} \left((b,r),bottom\right),&\text{if }a=1,b=1;\\
  \left((b,l),bottom\right),&\text{if }a=1,b=2;\\
  \left((t,r),top\right),&\text{if }a=2,b=1;\\
  \left((t,l),top\right),&\text{if }a=2,b=2.
  \end{cases}
\end{equation*}

We now make the following assumption and derive the uniform rate of convergence of \eqref{eq.density.est.N}.

\begin{assum}\label{assum.subsample}
For sufficiently large $N$ and $T$, the $r$ largest eigenvalues of matrices $\sum_{m=1}^{M}\sum_{i\in\mathcal{N}_{a}}\lambda_{0,i}(\tau_{m})\lambda_{0,i}'(\tau_{m})/MN$ and $\sum_{t\in\mathcal{T}_{b}}f_{0,t}f_{0,t}'/T$
are bounded away from 0, and the $r$ nonzero eigenvalues of $F_{0}\sum_{m=1}^{M}\sum_{i\in\mathcal{N}_{a}}\lambda_{0,i}(\tau_{m})\lambda_{0,i}'(\tau_{m})F_{0}'/MNT$ are distinct for all $a,b\in\{1,2\}$.
\end{assum}
Assumption \ref{assum.subsample} is the subsample counterpart of Assumption \ref{assum.normalization}. Note that the matrices $\sum_{m=1}^{M}\sum_{i\in\mathcal{N}_{a}}\lambda_{0,i}(\tau_{m})\lambda_{0,i}'(\tau_{m})/MN$ and $\sum_{t\in\mathcal{T}_{b}}f_{0,t}f_{0,t}'/T$ are in general no longer diagonal. Let $\psi_{NT}\coloneqq \zeta_{NT}/(hh_{d})+h_{d}^{4}$. We have the following theorem.
\begin{thm}\label{thm.firststage}
Under Assumptions \ref{assum.foundation} to \ref{assum.subsample}, $\max_{m,i,t}|\widehat{1/\textsf{f}_{\tau_{m},it}(0)}-1/\textsf{f}_{\tau_{m},it}(0)|=O_{p}(\psi_{NT})$.
\end{thm}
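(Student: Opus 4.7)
The plan is to decompose $\widehat{1/\textsf{f}_{\tau_m,it}(0)}-1/\textsf{f}_{\tau_m,it}(0)$ into a deterministic finite-difference approximation error and a stochastic plug-in error, then bound each uniformly in $(m,i,t)$. Let
\[
\tilde\lambda_{0,i}^{(1)}(\tau_m)\coloneqq\frac{-\lambda_{0,i}(\tau_m+2h_d)+8\lambda_{0,i}(\tau_m+h_d)-8\lambda_{0,i}(\tau_m-h_d)+\lambda_{0,i}(\tau_m-2h_d)}{12h_d}
\]
be the oracle FPDF computed from the true loadings. Using the identity $1/\textsf{f}_{\tau,it}(0)=\lambda_{0,i}^{(1)}(\tau)'f_{0,t}$ together with Assumption \ref{assum.density} (which, via the ODE $\partial_\tau[\lambda_{0,i}(\tau)'f_{0,t}]=1/\textsf{f}_{\tau,it}(0)$, transfers enough smoothness to $\lambda_{0,i}(\cdot)$), a standard Taylor expansion gives $\tilde\lambda_{0,i}^{(1)}(\tau_m)'f_{0,t}-\lambda_{0,i}^{(1)}(\tau_m)'f_{0,t}=O(h_d^4)$ uniformly in $i,t,m$. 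This supplies the $h_d^4$ term in $\psi_{NT}$.

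For the stochastic error I would exploit the sequential sample-splitting design so that rotations cancel. Fix $(a,b)$ and the associated $(w,v)$. Applying Theorem \ref{thm.roc}-(ii) to UFA on subsample $v$ (whose identification conditions are provided by Assumption \ref{assum.subsample}) yields a rotation $H^v$ with $\max_t\lVert\hat f_t^v-(H^v)'f_{0,t}\rVert=O_p(\zeta_{NT}/h)$. The loading $\hat\lambda_i^w(\tau)$ is then obtained by a single smoothed quantile regression of row $i$ of Left or Right on the already-computed $\hat f^v$; because $\hat f^v$ is held fixed in this step and, by Assumption \ref{assum.iid} together with the splitting, is independent of the data used in the row-$i$ regression, the same first-order-condition/Taylor argument behind Theorem \ref{thm.roc}-(ii), restricted to a single row and a single $\tau$, delivers $\max_{i,m,j\in\{\pm1,\pm2\}}\lVert\hat\lambda_i^w(\tau_m+jh_d)-(H^v)^{-1}\lambda_{0,i}(\tau_m+jh_d)\rVert=O_p(\zeta_{NT}/h)$. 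The crucial structural observation is that the same $H^v$ appears in both expansions, so in the product $\hat\lambda_i^{w\prime}(\tau)\hat f_t^v$ the rotations cancel exactly by construction, removing the need to align rotations across subsamples.

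Applying the FPDF linearly to the loading decomposition then yields $\hat\lambda_i^{(1),w}(\tau_m)=(H^v)^{-1}\tilde\lambda_{0,i}^{(1)}(\tau_m)+O_p(\zeta_{NT}/(hh_d))$ uniformly, the extra $1/h_d$ coming from the $12h_d$ denominator. Combining with the factor expansion and using boundedness of $\tilde\lambda_{0,i}^{(1)}$ and $f_{0,t}$, the dominant cross term is $O_p(\zeta_{NT}/(hh_d))$ and the product-of-errors $O_p(\zeta_{NT}^2/(h^2h_d))$ is absorbed, so $\hat\lambda_i^{(1),w}(\tau_m)'\hat f_t^v=\tilde\lambda_{0,i}^{(1)}(\tau_m)'f_{0,t}+O_p(\zeta_{NT}/(hh_d))$ uniformly, and adding the $O(h_d^4)$ approximation error gives the claimed $O_p(\psi_{NT})$ bound. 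The main obstacle is the uniform rate of the second-stage row-wise QR $\hat\lambda_i^w(\tau)$ with estimated regressors $\hat f^v$ at the four off-grid points $\tau_m\pm jh_d$: it requires reproducing the per-$i$ uniform-convergence argument underlying Theorem \ref{thm.roc}-(ii) while tracking how the $O_p(\zeta_{NT}/h)$ error in $\hat f^v$ propagates through the smoothed FOC, but this goes through because the smoothed FOC has Lipschitz gradient under Assumption \ref{assum.density} and the kernel rates in Assumption \ref{assum.kernel}, and the independence of $\hat f^v$ from the row-$i$ data allows conditioning that preserves the concentration inequalities used in Section \ref{sec.roc}.
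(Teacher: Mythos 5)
Your proposal follows essentially the same route as the paper: decompose the error into an $O(h_d^4)$ FPDF approximation term and a stochastic plug-in term, obtain uniform rates $O_p(\zeta_{NT}/h)$ for the subsample factor and row-wise loading estimators via the arguments of Theorem \ref{thm.roc}, divide by $h_d$ from the FPDF denominator, and crucially note that the rotation cancels in the product $\hat\lambda_i^{w\prime}\hat f_t^v$ by the sequential construction. The only cosmetic difference is that the paper explicitly factors the subsample rotation into a full-rank $H^{eigen}$ (arising because the full-sample diagonalization is not preserved on subsamples) times a sign matrix, and invokes the invariance identity \eqref{eq.density.invariant} to dispose of it, whereas you bundle this into a single rotation $H^v$ — either bookkeeping works since only cancellation in the product is needed.
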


Finally, let us revisit the problem raised in the beginning of this subsection. We can see that the second term in equation \eqref{eq.prob} can be split into two parts:
\[
  \frac{1}{N}\sum_{i\in \mathcal{N}_{1}}\sum_{m=1}^{M}\left(\widehat{\frac{1}{\textsf{f}_{\tau_{m},it}(0)}}-\frac{1}{\textsf{f}_{\tau_{m},it}(0)}\right)\eta_{h,\tau_{m},it}\lambda_{0,i}(\tau_{m})+\frac{1}{N}\sum_{i\in \mathcal{N}_{2}}\sum_{m=1}^{M}\left(\widehat{\frac{1}{\textsf{f}_{\tau_{m},it}(0)}}-\frac{1}{\textsf{f}_{\tau_{m},it}(0)}\right)\eta_{h,\tau_{m},it}\lambda_{0,i}(\tau_{m}).
\]
These two parts are correlated because the estimated density functions in each part are correlated with the $\eta_{h,\tau_{m},it}$ in the other part by construction. However, within each part, all the $\widehat{1/\textsf{f}_{\tau_{m},it}(0)}$s are independent of the $\eta_{h,\tau,it}$s. Hence, conditional on the $\widehat{1/\textsf{f}_{\tau_{m},it}(0)}$s, both parts are $o_{p}(1/\sqrt{N})$ by the Hoeffding's inequality.
\begin{remark}
Our particular sample splitting approach is designed to overcome the $(i,t)$-specific incidental nuisance parameter problem. It is different from the standard sample splitting approaches such as \cite{chernozhukov2018double} where the nuisance parameters to be estimated, for instance a conditional expectation function, stay constant across subsamples. Our approach also differs from the split-panel jackknife method in \cite{dhaene2015split} where estimation is done using half panels (in our nation, $Left$ and $Right$) to remove bias in maximum-likelihood estimation of nonlinear models with fixed effects. 
\end{remark}

\subsection{Inverse Density Weighted Universal Factor Analysis}

Once we obtain the estimated densities, we use \textit{the full sample} to estimate the factors and loadings. Hence, our estimator does not lose efficiency. Specifically, we define our inverse density weighted estimator (IDW-UFA) as follows:
  \begin{equation}\label{eq.estimator2}
  \left(\tilde{\Lambda}(\cdot),\tilde{F}\right)\coloneqq \arg\min_{\Lambda(\cdot)\in\Xi, F\in\mathcal{F}}\frac{1}{M}\sum_{m=1}^{M}\int\rho_{\tau_{m}}(s)\frac{1}{NTh}\sum_{i=1}^{N}\sum_{t=1}^{T}\widehat{\frac{1}{\textsf{f}_{\tau_{m},it}(0)}}k\left(\frac{s-(Y_{it}-\lambda_{i}'(\tau_{m})f_{t})}{h}\right)ds. 
\end{equation}

 Define $\tilde{R}_{h,t}(f;\Lambda(\cdot))$ and $\tilde{R}_{h,i,\tau}(\lambda;F)$ similar to $\hat{R}_{h,t}(f;\Lambda(\cdot))$ and $\hat{R}_{h,i,\tau}(\lambda;F)$ in Section \ref{sec.estimator1.estimator} with the estimated inverse densities as weights. We implement our estimator by the following algorithm.

\vspace{0.5em}
\begin{algorithm}[H]\label{alg.2}
\SetAlgoLined
\SetKw{State}{State}
\KwInput{Density estimation\;
\textbf{1.} Obtain $\hat{f}^{top}_{t}$ and $\hat{f}^{bottom}_{t}$ for all $t$ by running Algorithm \ref{alg.1} twice, using $Top$ and $Bottom$ respectively; $\hat{F}^{top}\leftarrow(\hat{f}^{top}_{t}),\hat{F}^{bottom}\leftarrow(\hat{f}^{bottom}_{t})$\;
\textbf{2.} Construct $\widehat{1/\textsf{f}_{\tau_{m},it}(0)}$ by equation \eqref{eq.density.est.N} for each $m,i,t$\;
\textbf{3.} Set the initial values $F^{0}$ and $\Lambda^{0}(\tau_{m})$ for every $\tau_{m}$. }
\While{not converged}{
Repeat steps 1.1 to 1.2 in Algorithm \ref{alg.1} using the full sample with $\hat{R}_{h,t}(f;\Lambda^{(k)}(\cdot))$ and $\hat{R}_{h,i,\tau}(\lambda;F_{temp}^{(k+1)})$ in the algorithm replaced by $\tilde{R}_{h,t}(f;\Lambda^{(k)}(\cdot))$ and $\tilde{R}_{h,i,\tau}(\lambda;F_{temp}^{(k+1)})$, respectively.
}
\Return $F_{temp}^{(\infty)},\Lambda_{temp}^{(\infty)}(\tau_{m})$ for every $m$\;
\textbf{4.} Follow step 2 in Algorithm \ref{alg.1} for normalization.

\KwOutput{$F^{(\infty)},\Lambda^{(\infty)}(\tau_{m})$ for every $m$.}
\caption{Inverse Density Weighted Universal Factor Analysis (IDW-UFA)}
\end{algorithm}

\subsection{Asymptotic Theory of IDW-UFA}
In this subsection, we show that our inverse density weighted estimator can both estimate the spaces spanned by the factors and loadings at $\sqrt{N}$ or $\sqrt{T}$ rate, and is pointwise asymptotically normal up to rotation.

One key step to achieve asymptotic normality is to recenter the estimator. Following a similar argument as in Theorem \ref{thm.roc}, we can show that, for instance, $\|\tilde{F}-F_{0}\tilde{H}_{NT,1}\|_{F}/\sqrt{T}=O_{p}(\zeta_{NT})$ for a diagonal matrix $\tilde{H}_{NT,1}$ whose $j$-th diagonal entry is $\text{sgn}(\tilde{F}_{j}'F_{0,j})$. However, to achieve asymptotic normality for $\tilde{f}_{t}$ for each $t$, letting $H_{NT,2}\coloneqq F_{0}'\tilde{F}/T$, we show that we need to recenter $\tilde{f}_{t}$ around $H_{NT,2}'f_{0,t}$ instead of $\tilde{H}_{NT,1}'f_{0,t}$. This result echoes \cite{bai2003inferential} and \cite{bai2023approximate} as $H_{NT,2}$ is equivalent to all their rotation matrices under $F_{0}'F_{0}/T=I_{r}$ as mentioned in Section \ref{sec.value_of_density}.

Recall $\Phi\coloneqq \sum_{m=1}^{M}\sum_{i=1}^{N}\lambda_{0,i}(\tau_{m})\lambda'_{0,i}(\tau_{m})/(MN)$. For any $\tau^{*}\in\{\tau_{1},\ldots,\tau_{M}\}$, let
\begin{align*}
\Sigma_{F,t}&\coloneqq \frac{1}{M^{2}N}\sum_{m=1}^{M}\sum_{m'=1}^{M}\sum_{i=1}^{N}\frac{\left(\min(\tau_{m},\tau_{m'})-\tau_{m}\tau_{m'}\right)\lambda_{0,i}(\tau_{m})\lambda_{0,i}'(\tau_{m'})}{\textsf{f}_{\tau_{m},it}(0)\textsf{f}_{\tau_{m'},it}(0)},\\
\Sigma_{\Lambda,\tau^{*},i}&\coloneqq \tau^{*}(1-\tau^{*})\cdot \frac{1}{T}\sum_{t=1}^{T}\frac{f_{0,t}f_{0,t}'}{\textsf{f}_{\tau^{*},it}^{2}(0)}.
\end{align*}

\begin{thm}\label{thm.dist2}
Under Assumptions \ref{assum.foundation} to \ref{assum.subsample}, if $h_{d}\propto T^{-d}$ and $\zeta_{NT}/h^{5}<h_{d}<h$, we have the following results.
\begin{itemize}
\item[(i)] Average rate: For all $m=1,\ldots,M$,
\begin{align*}
\frac{1}{T}\left\|\tilde{F}-F_{0}H_{NT,2}\right\|_{F}^{2}=O_{p}\left(\frac{1}{N}\right),\ \ \frac{1}{N}\left\|\tilde{\Lambda}(\tau_{m})-\Lambda_{0}(\tau_{m})H_{NT,2}^{'-1}\right\|_{F}^{2}=O_{p}\left(\frac{1}{T}\right).
\end{align*}
\item[(ii)] Limiting distributions: For all fixed $i,t$ and $\tau^{*}\in\{\tau_{1},\ldots,\tau_{M}\}$,
\begin{align*}
\Sigma_{F,t}^{-1/2}\Phi H_{NT,2}^{'-1}\cdot\sqrt{N}\left(\tilde{f}_{t}-H_{NT,2}'f_{0,t}\right)&\overset{d}{\to}\mathcal{N}\left(0,I_{r}\right),\\
\Sigma_{\Lambda,\tau^{*},i}^{-1/2}H_{NT,2}^{'-1}\cdot\sqrt{T}\left(\tilde{\lambda}_{i}(\tau^{*})-H_{NT,2}^{-1}\lambda_{0,i}(\tau^{*})\right)&\overset{d}{\to}\mathcal{N}\left(0,I_{r}\right),\\
\frac{\tilde{\lambda}_{i}'(\tau^{*})\tilde{f}_{t}-\lambda_{0,i}'(\tau^{*})f_{0,t}}{\sqrt{\frac{1}{N}\lambda_{0,i}'(\tau^{*})\Phi^{-1}\Sigma_{F,t}\Phi^{-1}\lambda_{0,i}(\tau^{*})+\frac{1}{T}f_{0,t}'\Sigma_{\Lambda,\tau^{*},i}f_{0,t}}}&\overset{d}{\to}\mathcal{N}(0,1).
\end{align*} 
\end{itemize}
The rotation matrix $H_{NT,2}=\tilde{H}_{NT,1}+o_{p}\left(1\right)$. In particular, when $r=1$, $H_{NT,2}=\tilde{H}_{NT,1}+O_{p}\left(\zeta_{NT}^{2}\right)$.
\end{thm}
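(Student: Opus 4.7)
The plan is to split the argument into three stages that parallel the proof of Theorem \ref{thm.roc} but exploit the inverse density weighting to remove the $(i,t)$-specific nuisance parameters that blocked asymptotic normality for the baseline UFA. Throughout, Theorem \ref{thm.firststage} together with the density lower bound in Assumption \ref{assum.density}(i) will be used to control the gap between the feasible IDW objective in \eqref{eq.estimator2} and the ``infeasible'' version that weights by the true $1/\textsf{f}_{\tau_m,it}(0)$; this gap is uniformly $O_p(\psi_{NT})$, which goes to 0 under the rate conditions on $c,\gamma$ and $h_d$.

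\textbf{Average rates.} First I would repeat the identification and rate argument of Theorem \ref{thm.roc} applied to the infeasible IDW objective. Weighting by $1/\textsf{f}_{\tau_m,it}(0)=\lambda_{0,i}^{(1)'}(\tau_m)f_{0,t}$ preserves the low-rank structure and replaces the $(i,t)$-dependent ``Hessian'' $Q_{F,t}$ of Section \ref{sec.value_of_density} by $\Phi$, which is positive definite by Assumption \ref{assum.normalization}(i). This yields $(1/T)\|\tilde F - F_0 H\|_F^2 = O_p(1/N)$ and its loading analogue for some rotation $H$. To switch to the specific rotation $H_{NT,2}=F_0'\tilde F/T$ stated in the theorem, I would use $\tilde F = F_0 \tilde H_{NT,1}+R$ with $\|R\|_F^2/T=O_p(1/N)$ and combine this with $\tilde F'\tilde F/T=F_0'F_0/T=I_r$; an elementary algebraic identity then gives the same average-rate bound with $H_{NT,2}$ in place of $\tilde H_{NT,1}$.

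\textbf{Pointwise CLT.} Using the remark after Algorithm \ref{alg.1}, I would write the first-order condition for $\tilde f_t$ minimizing $\tilde R_{h,t}(\cdot;\tilde\Lambda(\cdot))$ and Taylor expand around $(H_{NT,2}'f_{0,t},\Lambda_0(\cdot))$. After factoring out the leading Jacobian $\Phi H_{NT,2}'^{-1}$, the expansion takes the form
\begin{equation*}
\Phi H_{NT,2}'^{-1}\bigl(\tilde f_t - H_{NT,2}'f_{0,t}\bigr) = -\frac{1}{MN}\sum_{m=1}^{M}\sum_{i=1}^{N}\frac{\eta_{h,\tau_m,it}}{\textsf{f}_{\tau_m,it}(0)}\lambda_{0,i}(\tau_m) + \text{remainder},
\end{equation*}
where the inverse-density weighting is exactly what makes the cross-product remainder term $A_{2t}$ from Section \ref{sec.value_of_density} factor into a product of two quantities each controlled by part (i), hence $o_p(1/\sqrt N)$. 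The leading term is a sum of conditionally independent mean-zero terms under Assumption \ref{assum.iid}; after absorbing the $O(h^\gamma)$ smoothing bias via Assumption \ref{assum.kernel}(i), a Lindeberg--Feller CLT delivers $\mathcal{N}(0,\Sigma_{F,t})$. The symmetric expansion for $\tilde\lambda_i(\tau^*)$ uses $T$ in place of $N$ and yields $\Sigma_{\Lambda,\tau^*,i}$; the common-component limit then follows by combining the two expansions and noting that the cross-covariance vanishes.

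\textbf{Main obstacle.} The hardest step will be showing that replacing the true inverse densities by $\widehat{\frac{1}{\textsf{f}_{\tau_m,it}(0)}}$ contributes only $o_p(1/\sqrt N)$ to the stochastic term, i.e.\ that
\begin{equation*}
\frac{1}{MN}\sum_{m,i}\left(\widehat{\frac{1}{\textsf{f}_{\tau_m,it}(0)}}-\frac{1}{\textsf{f}_{\tau_m,it}(0)}\right)\eta_{h,\tau_m,it}\lambda_{0,i}(\tau_m)=o_p(1/\sqrt N).
\end{equation*}
The issue is that $\widehat{1/\textsf{f}_{\tau_m,it}(0)}$ is data-dependent and need not be independent of $\eta_{h,\tau_m,it}$ globally. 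The resolution is the sample-splitting construction in \eqref{eq.density.est.N}: I would split the sum over $i$ into $\mathcal{N}_1$ and $\mathcal{N}_2$ according to which half-panel $t$ falls into; by the table defining $(w,v)$, within each subsum the estimated density uses only data from the ``opposite'' half-panel and is therefore independent of $\eta_{h,\tau_m,it}$ under Assumption \ref{assum.iid}. Conditioning on the first-stage estimates, each subsum becomes a sum of independent bounded mean-zero terms with prefactor $O_p(\psi_{NT})$, and Hoeffding's inequality gives $o_p(1/\sqrt N)$ provided $\sqrt N\,\psi_{NT}\to 0$, which is guaranteed by $\gamma\ge 14$, $\gamma^{-1}<c<1/12$ and $\zeta_{NT}/h^5<h_d<h$. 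Finally, for the rotation identity, combining $\tilde F = F_0\tilde H_{NT,1}+R$ with $\tilde F'\tilde F/T=F_0'F_0/T=I_r$ yields $H_{NT,2}=\tilde H_{NT,1}+F_0'R/T=\tilde H_{NT,1}+O_p(\zeta_{NT})=\tilde H_{NT,1}+o_p(1)$ by Cauchy--Schwarz; when $r=1$ the same identity sharpens to $\tilde H_{NT,1}H_{NT,2}=1-\|\tilde F-F_0\tilde H_{NT,1}\|^2/(2T)$, giving the claimed $O_p(\zeta_{NT}^2)$ rate.
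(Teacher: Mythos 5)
Your strategy mirrors the paper's proof chain: a preliminary rate for the IDW estimator under the sign rotation $\tilde H_{NT,1}$ (the paper's Lemma \ref{lem.tilde.rate}), algebraic properties of $H_{NT,2}$ (Lemma \ref{lem.inv}), a stochastic expansion that exploits the inverse-density weighting and the sample split (Lemma \ref{lem.expansion}), and then a CLT. Your projection identity $\tilde F - F_0H_{NT,2}=(I_T-F_0F_0'/T)(\tilde F-F_0\tilde H_{NT,1})$ together with the scalar identity $\tilde H_{NT,1}H_{NT,2}=1-\|\tilde F-F_0\tilde H_{NT,1}\|^2_F/(2T)$ for $r=1$ are both correct; they are essentially what Lemma \ref{lem.inv} proves, and for the factor part of (i) your route is shorter than the paper's, which instead derives the $H_{NT,2}$-average rate by plugging the stochastic expansion from Lemma \ref{lem.expansion} into a variance calculation. (For the loadings the analogous projection trick is unavailable because $\tilde\Lambda$ is only diagonal-normalized, not orthonormal, but the triangle inequality together with $H_{NT,2}-\tilde H_{NT,1}=O_p(\zeta_{NT})$ still closes it.)

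One step would not survive inspection as written. You assert that the Hoeffding bound on $\frac{1}{MN}\sum_{m,i}\bigl(\widehat{1/\textsf{f}_{\tau_m,it}(0)}-1/\textsf{f}_{\tau_m,it}(0)\bigr)\eta_{h,\tau_m,it}\lambda_{0,i}(\tau_m)$ requires $\sqrt{N}\,\psi_{NT}\to 0$, and that this is "guaranteed" by the rate conditions. Neither is true: with $h\propto T^{-c}$, $h_d\propto T^{-d}$ and $N\asymp T$ one has $\psi_{NT}\gtrsim T^{c+d-1/2}$, so $\sqrt{N}\,\psi_{NT}\gtrsim T^{c+d}\to\infty$ for any $c,d>0$. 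What Hoeffding, applied conditionally on the first-stage estimates and on the high-probability event $\Omega^{(a,b)}$ from Theorem \ref{thm.firststage}, actually delivers is a bound of order $\psi_{NT}/\sqrt{N}$ for each split half-sum (with an extra $\sqrt{\log(MN)}$ when taking the max over $m,i$), and the condition that does the work in the paper is $\sqrt{\log(MN)}\,\psi_{NT}=o(h^2)$, which $\gamma\ge14$, $\gamma^{-1}<c<1/12$ and $\zeta_{NT}/h^5<h_d<h$ do deliver. Relatedly, in your factorization of the $A_{2t}$ remainder only the first factor is controlled by the part-(i) rate; the second factor $\frac{1}{MN}\sum_{m,i}\eta_{h,\tau_m,it}\lambda_{0,i}(\tau_m)/\textsf{f}_{\tau_m,it}(0)$ is $O_p(\sqrt{\log T}/\sqrt{N})$ by a concentration bound, not by part (i). Your conclusion is correct, but the stated sufficient conditions are not the ones that make it true.
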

\begin{remark}\label{rem.uniformrate.good}
Theorem \ref{thm.dist2}-(ii) implies $\sqrt{N}$ pointwise convergence rate for the factor, loading and common component estimators. We can also show that the uniform rate of these estimators is $\sqrt{\log N/N}$.
\end{remark}
\begin{remark}\label{rem.bandwidth}
The bandwidth requirement on $h_{d}$ is to guarantee that the inverse density  estimation error has an asymptotically negligible impact on the estimator of factors and loadings. Under Assumption \ref{assum.kernel}-(ii), the inequality constraints for $h_{d}$ are feasible and contain the optimal $h_{d}$ that satisfies $h_{d}^{4}\asymp \zeta_{NT}/(hh_{d})$.
\end{remark}
\begin{remark}\label{rem.covariance.estimator}
All the covariance matrices in Theorem \ref{thm.dist2}-(ii) can be consistently estimated by simply plugging our estimated factors, loadings and inverse densities. Using the variance of $\sqrt{N}(\tilde{f}_{t}-H_{NT,2}'f_{0,t})$ as an example, notice that 
\begin{align*}
&\left(\Sigma_{F,t}^{-1/2}\Phi H_{NT,2}^{'-1}\right)^{-1}\left(\Sigma_{F,t}^{-1/2}\Phi H_{NT,2}^{'-1}\right)^{'-1}\\
=&H_{NT,2}^{'}\Phi^{-1}\Sigma_{F,t}\Phi^{-1}H_{NT,2}\\
  =&H_{NT,2}^{'}\Phi^{-1}H_{NT,2}H_{NT,2}^{-1}\Sigma_{F,t}H_{NT,2}^{-1'}H_{NT,2}'\Phi^{-1}H_{NT,2}\\
  =&\left(H_{NT,2}^{-1}\Phi H_{NT,2}^{'-1}\right)^{-1}\left(H_{NT,2}^{-1}\Sigma_{F,t}H_{NT,2}^{-1'}\right)\left(H_{NT,2}^{-1}\Phi H_{NT,2}^{'-1}\right)^{-1}.
\end{align*}
Both $H_{NT,2}^{-1}\Phi H_{NT,2}^{'-1}$ and $H_{NT,2}^{-1}\Sigma_{F,t}H_{NT,2}^{-1'}$ can be consistently estimated by plugging in $\tilde{\lambda}_{i}(\tau_{m})$ and $\widehat{1/\textsf{f}_{\tau_{m},it}(0)}$ because they are respectively consistent of $H_{NT,2}^{-1}\lambda_{0,i}(\tau_{m})$ and $1/\textsf{f}_{\tau_{m},it}(0)$ uniformly in $m$ and $i$. Consistency of plug-in estimators of the other two variances follows similarly.
\end{remark}

\begin{remark}
As mentioned earlier, our $H_{NT,2}$ is identical to that in Lemma 3 in \cite{bai2023approximate} under $F_{0}'F_{0}/T=I_{r}$, and the latter is shown to be equivalent to the rotation matrix in \cite{bai2003inferential} for PCA in AFMs under strong factors. Indeed, our inverse density weighted estimator can be asymptotically equivalent to an \textit{infeasible} PCA estimator. For illustration, let $M=1$ and assume strong factor at $\tau\coloneqq \tau_{1}$. Define $Y^{*}_{\tau,it}\coloneqq \lambda_{0,i}'(\tau)f_{0,t}+\varepsilon^{*}_{\tau,it}$ with $\varepsilon^{*}_{\tau,it}\coloneqq -\eta_{h,\tau,it}/\textsf{f}_{\tau,it}(0)$ where recall that $\eta_{h,\tau_{m},it}\coloneqq K\left((\lambda_{0,i}'(\tau_{m})f_{0,t}-Y_{it})/h\right)-\mathbb{E}\left[ K\left((\lambda_{0,i}'(\tau_{m})f_{0,t}-Y_{it})/h\right)\right]$. Note that $Y^{*}_{\tau,it}$ is unobserved. Based on this model, since $\eta_{h,\tau,it}$ is bounded and has mean zero by construction, one can verify that the asymptotic distribution of the infeasible PCA estimator of $\lambda_{0,i}(\tau)$ and $f_{0,t}$ are equal to ours under $F_{0}'F_{0}/T=I_{r}$ and $\Phi\coloneqq \Lambda_{0}'(\tau)\Lambda_{0}(\tau)/N$ being diagonal with distinct diagonal entries. 
\end{remark}

 \section{Estimating the Mean Factor Loadings}\label{sec.mean}
In some applications, the parameters of interest are the factors and loadings affecting the mean, rather than the quantiles, of the outcome variable. However, directly estimating an AFM using PCA requires strong factors. Our method provides an alternative approach to estimate the mean factor loadings.

Under \eqref{eq.model}, we have 
\begin{equation}\label{eq.mean.model}
  Y_{it}=\bar{\lambda}_{0,i}'f_{0,t}+\nu_{it},\ \ \mathbb{E}(\nu_{it}|f_{0,t})=0,
\end{equation}
if $\mathbb{E}(Y_{it}|f_{0,t})$ exists. The mean factor loading $\bar{\lambda}_{0,i}$ is by construction $\int_{0}^{1}\lambda_{0,i}(\tau)d\tau$. We can estimate $\bar{\lambda}_{0,i}$ for each $i$ by solving the following least square problem:
\begin{equation}\label{eq.estimatormean}
  \min_{\lambda}\frac{1}{T}\sum_{t=1}^{T}\left(Y_{it}-\lambda'\tilde{f}_{t}\right)^{2},
\end{equation}
where $\tilde{f}_{t}$ is obtained by estimator \eqref{eq.estimator2}. Our normalization gives a simple analytical solution:
\begin{equation}\label{eq.meanlambda}
  \tilde{\bar{\lambda}}_{i}=\frac{1}{T}\sum_{t=1}^{T}\tilde{f}_{t}Y_{it}.
\end{equation}
Let the mean common component matrix be $\bar{L}_{0}\coloneqq \bar{\Lambda}_{0}F_{0}'$. The estimator $\tilde{\bar{\lambda}}_{i}$ has the following properties.
\begin{thm}\label{thm.dist3}
Suppose $Y_{it}$ is bounded. Then under the conditions in Theorem \ref{thm.dist2},
\begin{align*}
&\frac{1}{N}\left\|\tilde{\bar{\Lambda}}-\bar{\Lambda}_{0}H_{NT,2}^{'-1}\right\|_{F}^{2}=O_{p}\left(\frac{1}{T}\right),\\
&\bar{\Sigma}_{\Lambda,i}^{-1/2}H_{NT,2}^{'-1}\cdot \sqrt{T}\left(\tilde{\bar{\lambda}}_{i}-H_{NT,2}^{-1}\bar{\lambda}_{0,i}\right)\overset{d}{\to}\mathcal{N}\left(0,I_{r}\right),\\
&\frac{\tilde{\bar{\lambda}}_{i}'\tilde{f}_{t}-\bar{L}_{0,it}}{\sqrt{\frac{1}{N}\bar{\lambda}_{0,i}'\Phi^{-1}\Sigma_{F,t}\Phi^{-1}\bar{\lambda}_{0,i}+\frac{1}{T}f_{0,t}'\bar{\Sigma}_{\Lambda,i}f_{0,t}}}\overset{d}{\to}\mathcal{N}\left(0,1\right),
\end{align*}
where $H_{NT,2}$, $\Phi$ and $\Sigma_{F,t}$ are the same as in Theorem \ref{thm.dist2} and $\bar{\Sigma}_{\Lambda,i}\coloneqq \sum_{t=1}^{T}\mathbb{E}\left(\nu_{it}^{2}f_{0,t}f_{0,t}'\right)/T$. 
\end{thm}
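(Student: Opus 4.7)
The plan is to exploit the closed-form solution $\tilde{\bar{\lambda}}_i = T^{-1}\sum_t \tilde f_t Y_{it}$ together with the first-stage expansion of $\tilde f_t$ delivered by Theorem~\ref{thm.dist2}. Substituting $Y_{it} = \bar{\lambda}_{0,i}'f_{0,t} + \nu_{it}$ yields the identity
\begin{equation*}
\tilde{\bar{\lambda}}_i = \Bigl(\tfrac{\tilde F' F_0}{T}\Bigr)\bar{\lambda}_{0,i} + \tfrac{1}{T}\sum_{t=1}^T \tilde f_t \nu_{it} = H_{NT,2}'\bar{\lambda}_{0,i} + \tfrac{1}{T}\sum_{t=1}^T \tilde f_t \nu_{it}.
\end{equation*}
Combining Theorem~\ref{thm.dist2}(i) with the fact that $\tilde F'\tilde F/T = I_r$ forces $H_{NT,2}' H_{NT,2} = I_r + O_p(1/N)$, so that $H_{NT,2}^{-1} - H_{NT,2}' = O_p(1/N) = o_p(T^{-1/2})$; hence recentering at $H_{NT,2}^{-1}\bar{\lambda}_{0,i}$ is equivalent to recentering at $H_{NT,2}'\bar{\lambda}_{0,i}$ at the rates we need.

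For the average-rate claim, I split $\tilde f_t = H_{NT,2}'f_{0,t} + (\tilde f_t - H_{NT,2}'f_{0,t})$. The first piece, $H_{NT,2}' T^{-1}\sum_t f_{0,t}\nu_{it}$, has squared norm $O_p(1/T)$ for each $i$ by Chebyshev, using boundedness of $\nu_{it}$ (from boundedness of $Y_{it}$ and of the common component) and conditional independence across $t$ under Assumption~\ref{assum.iid}. The second piece is controlled by Cauchy--Schwarz: $\|T^{-1}\sum_t(\tilde f_t - H_{NT,2}'f_{0,t})\nu_{it}\|^2 \le \|\tilde F - F_0 H_{NT,2}\|_F^2/T \cdot T^{-1}\sum_t \nu_{it}^2 = O_p(1/N)$. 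Averaging over $i$ delivers $\tfrac{1}{N}\|\tilde{\bar{\Lambda}} - \bar{\Lambda}_0 H_{NT,2}^{'-1}\|_F^2 = O_p(1/T)$.

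The main obstacle is the pointwise normality, because the cross term $T^{-1}\sum_t(\tilde f_t - H_{NT,2}'f_{0,t})\nu_{it}$ is only $O_p(N^{-1/2}) = O_p(T^{-1/2})$ from Cauchy--Schwarz, which is \emph{not} $o_p(T^{-1/2})$. I refine this by plugging in the stochastic expansion of $\tilde f_t$ derived in the proof of Theorem~\ref{thm.dist2}:
\begin{equation*}
\tilde f_t - H_{NT,2}'f_{0,t} = -\Phi^{-1}\cdot \tfrac{1}{MN}\sum_{m=1}^M\sum_{j=1}^N \tfrac{\eta_{h,\tau_m,jt}}{\textsf f_{\tau_m,jt}(0)}\lambda_{0,j}(\tau_m) + r_t,
\end{equation*}
where the remainder contribution $T^{-1}\sum_t r_t\nu_{it}$ is $o_p(T^{-1/2})$ by the same arguments that justify the remainder bounds in the proof of Theorem~\ref{thm.dist2}. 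After substitution, the leading sum splits into a diagonal part ($j=i$), which is $O_p(1/N) = o_p(T^{-1/2})$ by boundedness, and an off-diagonal part ($j\neq i$), which is mean-zero conditional on $F_0$ because $\nu_{it}$ is independent of $\eta_{h,\tau_m,jt}$ for $j\neq i$ under Assumption~\ref{assum.iid}; direct variance computation using conditional independence across $(j,t)$ gives variance $O((NT)^{-1})$, hence rate $O_p((NT)^{-1/2}) = o_p(T^{-1/2})$. This mirrors the cancellation driving Theorem~\ref{thm.dist2}, now applied with $\nu_{it}$ in place of the self-normalized score.

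With the cross term absorbed, the expansion reduces to
\begin{equation*}
\sqrt T H_{NT,2}^{'-1}\bigl(\tilde{\bar{\lambda}}_i - H_{NT,2}^{-1}\bar{\lambda}_{0,i}\bigr) = \tfrac{1}{\sqrt T}\sum_{t=1}^T f_{0,t}\nu_{it} + o_p(1),
\end{equation*}
and a Lyapunov CLT conditional on $F_0$ (using boundedness of $\nu_{it}$ and conditional independence across $t$) yields $\mathcal N(0,\bar\Sigma_{\Lambda,i})$, producing the second claim after left-multiplication by $\bar\Sigma_{\Lambda,i}^{-1/2}$. For the common-component statement, I use the decomposition $\tilde{\bar{\lambda}}_i'\tilde f_t - \bar\lambda_{0,i}'f_{0,t} = (\tilde{\bar\lambda}_i - H_{NT,2}^{-1}\bar\lambda_{0,i})'H_{NT,2}'f_{0,t} + (H_{NT,2}^{-1}\bar\lambda_{0,i})'(\tilde f_t - H_{NT,2}'f_{0,t})$ plus a quadratic remainder of order $O_p((NT)^{-1/2})$. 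The two leading pieces scale as $T^{-1/2}$ and $N^{-1/2}$ respectively and are asymptotically independent conditional on $F_0$, since one averages $\nu_{is}$ over $s$ at fixed cross-section $i$ while the other averages the quantile scores over $j$ at fixed time $t$, with the diagonal overlap at $(i,t)$ contributing only $O_p(\max(N^{-1},T^{-1}))$. Adding the two variances and normalizing yields the standard normal limit.
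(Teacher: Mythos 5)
Your proposal is correct and follows essentially the same route as the paper's proof: both start from the closed-form $\tilde{\bar\lambda}_i = T^{-1}\sum_t\tilde f_t Y_{it}$, substitute the mean model, use $T^{-1}\sum_t\tilde f_t f_{0,t}' = H_{NT,2}'$, invoke the stochastic expansion of $\tilde f_t$ from Lemma \ref{lem.expansion} to control the cross term $T^{-1}\sum_t(\tilde f_t - H_{NT,2}'f_{0,t})\nu_{it}$, and then apply $H_{NT,2}' = H_{NT,2}^{-1} + O_p(\zeta_{NT}^2)$ from Lemma \ref{lem.inv}. Your diagonal/off-diagonal split with the variance computation is simply an unpacking of the paper's ``similar arguments as in the proof of Lemma \ref{lem.expansion}'' remark, so the two proofs coincide in substance.
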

Two remarks are in order. First, the boundedness assumption on $Y_{it}$ in the theorem is for simplicity; it can be replaced by, for example, the existence of higher order moments of $Y_{it}$. Second, all the variances can be consistently estimated by plugging in the estimated factors, loadings and conditional densities, under a similar argument as Remark \ref{rem.covariance.estimator}.  
 \section{Estimating the Number of (Strong) Factors}\label{sec.r}
 So far, we have assumed that $r$ is known or can be consistently estimated. In this section, we first propose a consistent estimator of $r$ that is robust to weak factors. We achieve this by estimating the common component $L_{0}(\tau_{m})$ for each $m$ by a nuclear norm penalized estimator that does not require strong factors. We also introduce estimators of the number of factors that have any tolerated level of influence on the conditional quantile or the conditional mean of the outcome variable; these ``strong'' factor selectors can be useful in applications where the researcher would like to include factors that have relatively large influence.

\subsection{Estimating the Number of Factors}
 For a constant $C>0$ and compact interval $\mathcal{B}_{L}\subset\mathbb{R}$, for each $m=1,\ldots,M$, define
 \begin{equation}\label{eq.obj.nuc}
   \hat{L}^{pel}(\tau_{m})=\arg\min_{L\in \mathcal{B}_{L}^{N\times T}}\frac{1}{NT}\sum_{i=1}^{N}\sum_{t=1}^{T}\rho_{\tau_{m}}\left(Y_{it}-L_{it}\right)+\frac{C\sqrt{\log(NT)}\max\{\sqrt{N},\sqrt{T}\}}{NT}\|L\|_{*}, 
 \end{equation}
where $\|\cdot\|_{*}$ is the nuclear norm of a matrix. Applying the results in \cite{feng2023nuclear} without regressors, we can show that $\hat{L}^{pel}(\tau_{m})$ is consistent of $L_{0}(\tau_{m})$ in the average squared Frobenius norm with the rate equal to $O_{p}(\log(NT)\max\{1/N,1/T\})$ uniformly in $m$, regardless of the order of the singular values of $L_{0}(\tau_{m})$. We can then estimate $\sum_{m=1}^{M}L_{0}'(\tau_{m})L_{0}(\tau_{m})/(MNT)$ by $\sum_{m=1}^{M}\hat{L}^{pel'}(\tau_{m})\hat{L}^{pel}(\tau_{m})/(MNT)$, where under $F_{0}'F_{0}/T=I_{r}$, all the nonzero eigenvalues of the former have order $O(1)$ by Lemma \ref{lem.discrete}. Therefore, between the $r$-th and the $(r+1)$-th largest  eigenvalues of $\sum_{m=1}^{M}\hat{L}^{pel'}(\tau_{m})\hat{L}^{pel}(\tau_{m})/(MNT)$, denoted by $\hat{\sigma}^{2}_{r}$ and $\hat{\sigma}^{2}_{r+1}$, we can show that there is a sufficiently large gap with probability approaching 1. We thus propose the following thresholding estimator for $r$:
\begin{equation}\label{eq.estimator.r}
  \hat{r}=\sum_{j=1}^{\min\{N,T\}}1\left(\hat{\sigma}_{j}^{2}\geq C_{r}\right),
\end{equation}
where $C_{r}$ is any sequence of $(N,T)$ satisfying $C_{r}\to 0$  and $\sqrt{\log(NT)}/(C_{r}\sqrt{\min\{N,T\}})\to 0$. The following theorem shows consistency of $\hat{r}$.
\begin{thm}\label{thm.r}
Under Assumptions \ref{assum.foundation}, \ref{assum.normalization}, \ref{assum.lip}, \ref{assum.iid} and \ref{assum.density}, $\Pr(\hat{r}=r)\to 1$.
\end{thm}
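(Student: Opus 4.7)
The plan is to reduce the consistency of $\hat{r}$ to an eigenvalue-perturbation argument for the symmetric PSD matrices. Define the population target
\[
S_{0}\coloneqq \frac{1}{MNT}\sum_{m=1}^{M}L_{0}'(\tau_{m})L_{0}(\tau_{m})=\frac{1}{MNT}\sum_{m=1}^{M}F_{0}\Lambda_{0}'(\tau_{m})\Lambda_{0}(\tau_{m})F_{0}',
\]
and its empirical counterpart $\hat{S}\coloneqq \sum_{m=1}^{M}\hat{L}^{pel\prime}(\tau_{m})\hat{L}^{pel}(\tau_{m})/(MNT)$, and let $\sigma_{j}^{2}(S_{0})$ denote the $j$-th largest eigenvalue of $S_{0}$. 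Under the normalization $F_{0}'F_{0}/T=I_{r}$, the nonzero eigenvalues of $S_{0}$ coincide with those of $\sum_{m=1}^{M}\Lambda_{0}'(\tau_{m})\Lambda_{0}(\tau_{m})/(MN)$, so Lemma \ref{lem.discrete} yields a constant $\underline{c}>0$ such that $\sigma_{j}^{2}(S_{0})\geq \underline{c}$ for $j\leq r$ and $\sigma_{j}^{2}(S_{0})=0$ for $j>r$ with probability approaching one.

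The first step is to invoke Weyl's inequality to obtain $|\hat{\sigma}_{j}^{2}-\sigma_{j}^{2}(S_{0})|\leq \|\hat{S}-S_{0}\|_{op}$ for every $j$; it then suffices to bound $\|\hat{S}-S_{0}\|_{op}$. Using the telescoping identity $\hat{L}^{pel\prime}\hat{L}^{pel}-L_{0}'L_{0}=\hat{L}^{pel\prime}(\hat{L}^{pel}-L_{0})+(\hat{L}^{pel}-L_{0})'L_{0}$ termwise in $m$ together with $\|\cdot\|_{op}\leq\|\cdot\|_{F}$, each summand is at most $\bigl(\|\hat{L}^{pel}(\tau_{m})\|_{op}+\|L_{0}(\tau_{m})\|_{op}\bigr)\cdot \|\hat{L}^{pel}(\tau_{m})-L_{0}(\tau_{m})\|_{F}$.

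The second step controls each factor uniformly in $m$. Boundedness of loadings and factors, combined with $F_{0}'F_{0}/T=I_{r}$, gives $\|L_{0}(\tau_{m})\|_{op}=O(\sqrt{NT})$. Applying the consistency result of \cite{feng2023nuclear} (specialized to the no-regressor case) to each of the $M$ fixed quantile levels yields $\|\hat{L}^{pel}(\tau_{m})-L_{0}(\tau_{m})\|_{F}^{2}/(NT)=O_{p}(\log(NT)\max\{1/N,1/T\})$, hence also $\|\hat{L}^{pel}(\tau_{m})\|_{op}=O_{p}(\sqrt{NT})$. Substituting and dividing by $NT$ gives
\[
\|\hat{S}-S_{0}\|_{op}=O_{p}\!\left(\sqrt{\tfrac{\log(NT)}{\min\{N,T\}}}\right).
\]
Crucially, this rate does not invoke any lower bound on the singular values of $L_{0}(\tau_{m})$, so the argument is immune to weak factors.

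The conclusion then follows from the two conditions imposed on $C_{r}$. For $j\leq r$, $\hat{\sigma}_{j}^{2}\geq \underline{c}-O_{p}(\sqrt{\log(NT)/\min\{N,T\}})\geq \underline{c}/2>C_{r}$ with probability approaching one, since $C_{r}\to 0$. For $j>r$, $\hat{\sigma}_{j}^{2}=O_{p}(\sqrt{\log(NT)/\min\{N,T\}})=o_{p}(C_{r})$, where the last step uses $\sqrt{\log(NT)}/(C_{r}\sqrt{\min\{N,T\}})\to 0$, so $\hat{\sigma}_{j}^{2}<C_{r}$ with probability approaching one. Combining the two cases delivers $\Pr(\hat{r}=r)\to 1$. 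The only delicate point is verifying that the nuclear-norm consistency result of \cite{feng2023nuclear} can be applied uniformly over the finite grid $\{\tau_{1},\ldots,\tau_{M}\}$ under the assumptions in the statement (no strong-factor condition is needed), which is where the proof actually engages with the mechanism that makes the number-of-factors estimator robust to weak factors.
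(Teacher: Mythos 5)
Your proposal is correct and follows essentially the same route as the paper: both bound the deviation of $\sum_m\hat{L}^{pel\prime}(\tau_m)\hat{L}^{pel}(\tau_m)/(MNT)$ from its population target via the \cite{feng2023nuclear} rate (which needs no strong-factor condition), apply Weyl's inequality to control the eigenvalue gaps, and then use the two conditions on $C_r$ to separate the top $r$ from the rest. The only cosmetic difference is that you make the telescoping and operator-norm steps explicit (and use $\|\cdot\|_{op}$ where the paper uses $\|\cdot\|_F$ in Weyl), which does not change the argument or the rate.
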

\begin{remark}
Our estimator of the number of factors differs from the ones in \cite{bai2002determining} and \cite{chen2021quantile} in three aspects. First, we do not require the factors to be strong in their sense. Second, all their estimators require a known upper bound on the number of factors while we do not need such prior information, taking advantage of the nuclear norm penalized estimation. Third, our estimator is based on a convex minimization problem that solves quickly and accurately.
\end{remark}
\begin{remark}\label{rem.initial.confirm}
Once $\hat{r}$ is obtained, we can use the associated estimated factors and loadings as the initial guess of $F_{0}$ and $\Lambda_{0}(\cdot)$ for our Algorithms \ref{alg.1} and \ref{alg.2}; see Remark \ref{rem.initial}. Specifically, let $\hat{F}^{pel}$ be $\sqrt{T}$ times the eigenvectors of $\sum_{m=1}^{M}\hat{L}^{pel'}(\tau_{m})\hat{L}^{pel}(\tau_{m})/(MNT)$ corresponding to the largest $\hat{r}$ eigenvalues. Then construct $\hat{\Lambda}^{pel}(\tau_{m})=\hat{L}^{pel}(\tau_{m})\hat{F}^{pel}/T$. One can show that these estimators are consistent in the average squared Frobenius norm. Starting from these ``nearby'' initial guesses leads to stable numerical performance when computing the UFA and IDW-UFA estimators.
\end{remark}
\subsection{Selecting Factors of Tolerated Strength}
In applications, researchers may be interested in an AFM or a QFM at a specific quantile level, and only wish to include sufficiently influential factors. In this section, we propose a method in a similar spirit of $\hat{r}$ to select factors in each model that have any tolerated strength.

We start from the AFM \eqref{eq.mean.model}. Let the nonzero singular values of $\bar{L}_{0}/\sqrt{NT}\coloneqq \bar{\Lambda}_{0}F'_{0}/\sqrt{NT}$ be $\bar{\sigma}_{1}\geq \ldots \geq \bar{\sigma}_{r}$. Under $F_{0}'F_{0}/T=I_{r}$, the strong factor condition in the literature of AFM (e.g. \cite{bai2003inferential}) refers to the case where $\bar{\sigma}_{j}$ has order $O(1)$, away from 0. A factor is weak in the sense of \cite{bai2023approximate} refers to the case where $\bar{\sigma}_{j}$ has order $O(N^{\alpha_{j}/2-1/2})$ for some $j$ and $0<\alpha_{j}<1$. 

We can show that, by the first part of Theorem \ref{thm.dist3} and by $N\asymp T$, $|\tilde{\bar{\sigma}}_{j}-\bar{\sigma}_{j}|$ is $O_{p}(N^{-1/2})=o_{p}(N^{\alpha_{j}/2-1/2})$ for any $\alpha_{j}>0$ uniformly in $j=1,\ldots,\min\{N,T\}$, where $\tilde{\bar{\sigma}}_{j}$ is the $j$-th largest singular value of $\tilde{\bar{\Lambda}}\tilde{F}'/\sqrt{NT}$. Hence, for any $\alpha\in (0,1]$, we estimate the number of factors that influence the conditional mean of $Y$ with strength at least $\alpha$ by 
\begin{equation}
   \tilde{\bar{r}}(\alpha)\coloneqq \sum_{j=1}^{r}1\left(\tilde{\bar{\sigma}}_{j}\geq \frac{CN^{\frac{\alpha-1}{2}}}{\log(N)}\right),
 \end{equation} 
 where $C$ is an arbitrary constant. 

 Similarly, let $\sigma_{j}(\tau_{m})$ be the $j$-th largest singular value of $L_{0}(\tau_{m})/\sqrt{NT}$. The strong factor assumption in \cite{chen2021quantile} is the case that $\sigma_{j}(\tau_{m})$ is $O(1)$ and away from 0 for all $j=1,\ldots,r$. Now similar to \cite{bai2023approximate}, consider weak factors such that $\sigma_{j}(\tau_{m})$ has order $O(N^{\alpha_{j}/2-1/2})$ for some $j$ and $0<\alpha_{j}<1$. We estimate the number of factors that influence the conditional quantile of $Y$ at $\tau_{m}$ with strength at least $\alpha$ by
\begin{equation}
   \tilde{r}_{\tau_{m}}(\alpha)\coloneqq \sum_{j=1}^{r}1\left(\tilde{\sigma}_{j}(\tau_{m})\geq \frac{CN^{\frac{\alpha-1}{2}}}{\log(N)}\right),
 \end{equation} 
 where $\tilde{\sigma}_{j}(\tau_{m})$ is the $j$-th largest singular value of $\tilde{\Lambda}(\tau_{m})\tilde{F}'/\sqrt{NT}$.
\

\begin{thm}\label{thm.rstrong}
Suppose for all $j=1,\ldots,r$, the $j$-th largest nonzero eigenvalues of $\bar{L}_{0}'\bar{L}_{0}/NT$ and $L_{0}'(\tau_{m})L_{0}(\tau_{m})/NT$ have order $N^{\bar{\alpha}_{j}-1}$ and $N^{\alpha_{\tau_{m},j}-1}$ with some constants $\bar{\alpha}_{j},\alpha_{\tau_{m},j}\in (0,1]$, respectively. Under the conditions in Theorem \ref{thm.dist2}, $\Pr(\tilde{\bar{r}}(\alpha)=\bar{r}(\alpha))\to 1$ and $\Pr(\tilde{r}_{\tau_{m}}(\alpha)=r_{\tau_{m}}(\alpha))=1$ for every $\alpha\in (0,1]$ and every fixed $\tau_{m}$, where $\bar{r}(\alpha)$ and $r_{\tau_{m}}(\alpha)$ are the numbers of factors whose $\bar{\alpha}_{j}\geq \alpha$ and $\alpha_{\tau_{m},j}\geq \alpha$, respectively.
\end{thm}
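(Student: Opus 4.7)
The plan is to reduce the theorem to a singular value perturbation argument: bound the discrepancy between the estimated common component matrices $\tilde{\bar{\Lambda}}\tilde{F}'/\sqrt{NT}$ and $\bar{L}_{0}/\sqrt{NT}$ (respectively $\tilde{\Lambda}(\tau_{m})\tilde{F}'/\sqrt{NT}$ and $L_{0}(\tau_{m})/\sqrt{NT}$) in operator norm, and then invoke Weyl's inequality for singular values, $\max_{j}|\tilde{\bar{\sigma}}_{j}-\bar{\sigma}_{j}|\leq \|\tilde{\bar{\Lambda}}\tilde{F}'/\sqrt{NT}-\bar{L}_{0}/\sqrt{NT}\|_{\text{op}}$, to transfer the rate to the singular values. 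Once $\max_{j}|\tilde{\bar{\sigma}}_{j}-\bar{\sigma}_{j}|$ is shown to be $o_{p}(N^{(\alpha-1)/2}/\log N)$ for every $\alpha\in(0,1]$, the correctness of the thresholding selector will follow by a routine two-case comparison between the population singular value and the threshold.

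For the key perturbation bound, I would use the identity $\bar{\Lambda}_{0}F_{0}'=(\bar{\Lambda}_{0}H_{NT,2}^{'-1})(F_{0}H_{NT,2})'$ and split
\[
\tilde{\bar{\Lambda}}\tilde{F}'-\bar{\Lambda}_{0}F_{0}'=\left(\tilde{\bar{\Lambda}}-\bar{\Lambda}_{0}H_{NT,2}^{'-1}\right)\tilde{F}'+\left(\bar{\Lambda}_{0}H_{NT,2}^{'-1}\right)\left(\tilde{F}-F_{0}H_{NT,2}\right)'.
\]
Applying $\|AB\|_{F}\leq\|A\|_{F}\|B\|_{\text{op}}$ together with $\|\tilde{F}\|_{\text{op}}=\sqrt{T}$, the bounds $\|\tilde{\bar{\Lambda}}-\bar{\Lambda}_{0}H_{NT,2}^{'-1}\|_{F}=O_{p}(\sqrt{N/T})$ and $\|\tilde{F}-F_{0}H_{NT,2}\|_{F}=O_{p}(\sqrt{T/N})$ from Theorems \ref{thm.dist2} and \ref{thm.dist3}, and $\|\bar{\Lambda}_{0}H_{NT,2}^{'-1}\|_{\text{op}}\leq\|\bar{\Lambda}_{0}\|_{\text{op}}\|H_{NT,2}^{'-1}\|_{\text{op}}=O(\sqrt{N})$ (using boundedness of $\bar{\lambda}_{0,i}$ from Assumption \ref{assum.normalization}-(iv) and invertibility of $H_{NT,2}$ with $\|H_{NT,2}^{'-1}\|_{\text{op}}=O_{p}(1)$), I get $\|\tilde{\bar{\Lambda}}\tilde{F}'-\bar{L}_{0}\|_{F}=O_{p}(\sqrt{N}+\sqrt{T})=O_{p}(\sqrt{N})$ under $N\asymp T$. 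Dividing by $\sqrt{NT}$ and using $\|\cdot\|_{\text{op}}\leq\|\cdot\|_{F}$ yields $\max_{j}|\tilde{\bar{\sigma}}_{j}-\bar{\sigma}_{j}|=O_{p}(N^{-1/2})$. The argument for $\tilde{\sigma}_{j}(\tau_{m})$ is identical upon replacing $\bar{\Lambda}_{0},\tilde{\bar{\Lambda}}$ with $\Lambda_{0}(\tau_{m}),\tilde{\Lambda}(\tau_{m})$.

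It remains to verify the thresholding consistency. Under the hypothesis, $\bar{\sigma}_{j}$ has exact order $N^{(\bar{\alpha}_{j}-1)/2}$. For $j$ with $\bar{\alpha}_{j}\geq\alpha$, $\bar{\sigma}_{j}-C N^{(\alpha-1)/2}/\log N\gtrsim N^{(\alpha-1)/2}$ dominates the $O_{p}(N^{-1/2})$ perturbation since $\alpha>0$, so $\tilde{\bar{\sigma}}_{j}\geq C N^{(\alpha-1)/2}/\log N$ w.p.a.\ 1. For $j$ with $\bar{\alpha}_{j}<\alpha$, the ratio $\bar{\sigma}_{j}/(CN^{(\alpha-1)/2}/\log N)\asymp \log N\cdot N^{(\bar{\alpha}_{j}-\alpha)/2}\to 0$, so the threshold minus $\bar{\sigma}_{j}$ is of order $N^{(\alpha-1)/2}/\log N$, which again dominates the perturbation because $N^{\alpha/2}/\log N\to\infty$. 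Hence $\tilde{\bar{\sigma}}_{j}<CN^{(\alpha-1)/2}/\log N$ w.p.a.\ 1, establishing $\Pr(\tilde{\bar{r}}(\alpha)=\bar{r}(\alpha))\to 1$. The same two-case comparison applied to $\tilde{\sigma}_{j}(\tau_{m})$ gives the quantile-level claim. I expect the main obstacle to be purely bookkeeping: confirming that the rotation matrix $H_{NT,2}^{'-1}$ can be controlled uniformly (its operator norm is $O_{p}(1)$ under Assumption \ref{assum.normalization}) so that the cross term $\|\bar{\Lambda}_{0}H_{NT,2}^{'-1}\|_{\text{op}}$ only inflates by the harmless factor $\sqrt{N}$, rather than by any power of $N$ that would spoil the $N^{-1/2}$ rate.
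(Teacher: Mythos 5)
Your proposal is correct and follows essentially the same route as the paper: bound the Frobenius (hence operator) norm of $\tilde{\bar{\Lambda}}\tilde{F}'/\sqrt{NT}-\bar{L}_{0}/\sqrt{NT}$ via a telescoping decomposition around the rotation $H_{NT,2}$, transfer this to the singular values via Weyl's inequality, and then run the standard two-case threshold comparison using the polynomial gap between $N^{(\bar{\alpha}_j-1)/2}$ and $N^{(\alpha-1)/2}/\log N$. The only cosmetic difference is in how the product of the two error factors is controlled: you use the mixed bound $\|AB\|_F\le\|A\|_F\|B\|_{\text{op}}$ together with $\|\tilde{F}\|_{\text{op}}=\sqrt{T}$ and $\|H_{NT,2}^{'-1}\|_{\text{op}}=O_p(1)$, whereas the paper simply applies Frobenius-norm submultiplicativity $\|AB\|_F\le\|A\|_F\|B\|_F$ to normalized matrices and thereby avoids needing an explicit bound on $\|H_{NT,2}^{'-1}\|_{\text{op}}$; if you keep the operator-norm version, note that the $O_p(1)$ control on $H_{NT,2}^{'-1}$ comes from Lemma \ref{lem.inv} (which gives $H_{NT,2}'H_{NT,2}=I_r+O_p(\zeta_{NT}^2)$) rather than directly from Assumption \ref{assum.normalization}.
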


Our factor selectors provide alternative approaches to select empirically relevant factors compared to the existing methods. Unlike \cite{freyaldenhoven2022factor} who essentially imposes sparsity on $\bar{\Lambda}_{0}$ and requires that the smallest $\alpha_{j}$ to be greater than $1/2$ or \cite{bai2019rank} who use a ridge penalty to filter out factors that have relatively small influence on the mean of $Y$, our method can select out either mean or quantile factors with any $\alpha>0$.

\section{Monte Carlo Simulations}\label{sec.simu}
In this section, we demonstrate the finite sample performance of our estimators. We first compare the performance of UFA and IDW-UFA with QFA and PCA in estimating the factor space when there is a relatively weak quantile/mean factor. We then present the quality of Gaussian approximation of IDW-UFA in finite samples.
\subsection{Estimating the Factor Space}\label{sec.MC.space}
We let $r=1$ for simplicity. We consider sample size $(N,T)\in\{(50,50),(100,100),(150,150)\}$. We first draw $N\times 1$ and $T\times 1$ vectors $F^{*}_{0}$ and $\Lambda^{*}_{0,base}$ independently from $\text{Unif}[0,2]$. Draw an $N\times T$ matrix $U$ independently from $\text{Unif}[0,1]$. Let $\beta(U_{it})\coloneqq -0.99+2U_{it}$ and $\lambda_{0,i}^{*}(U_{it})\coloneqq \beta(U_{it})\lambda_{0,base,i}^{*}$. Construct $Y$ by $Y_{it}=\lambda_{0,i}^{*}(U_{it})f_{0,t}^{*}$.
By construction, $q_{Y|F_{0}^{*}}(\tau)=\Lambda^{*}(\tau)F_{0}^{*'},\tau\in (0,1)$. 

We can verify that $F_{0}^{*}$ is a strong universal factor because $\int_{0}^{1}\|\Lambda_{0}^{*}(\tau)\|^{2}_{F}d\tau/N$ is well bounded away from 0. However, $F_{0}^{*}$ is a ``relatively weak'' quantile factor near $\tau=0.5$ because $\|\Lambda_{0}^{*}(\tau)\|^{2}_{F}/N$ is close to 0 when $\tau$ is around 0.5, and a ``relatively weak''  mean factor because $\|\int_{0}^{1}\Lambda_{0}^{*}(\tau)d\tau)\|^{2}_{F}/N$ is close to 0, too;  ``relatively weak'' because these values, though close to zero, are still fixed, not diminishing as $N$ and $T$ grow to infinity. Consequently, when the sample size is sufficiently large, we should expect that QFA in \cite{chen2021quantile} and PCA still work. However, as shown below, these estimators' performance in the sample size we consider is not as well as UFA and IDW-UFA.
 
To implement our estimators, we first estimate $r$ by $\hat{r}$ proposed in Section \ref{sec.r}. We set $C$ in \eqref{eq.obj.nuc} equal to $0.2$ and $C_{r}$ in \eqref{eq.estimator.r} equal to $1/(12(\min(N,T))^{1/3})$. Table \ref{tab.r} presents the average, maximum and minimum $\hat{r}$ in 1000 simulation repetitions. The results show that our estimator of $r$ performs very good; for sample size $(50,50)$, only in 16 repetitions $r$ is overestimated by 1. Under larger sample size, $r$ is correctly estimated in all repetitions.

\begin{table}[htbp]
  \centering
  \caption{$\hat{r}$}
    \begin{tabular}{ccccc}
    \toprule
    \toprule
         \multicolumn{2}{c}{$(N,T)$} & (50,50) & (100,100)& (150,150)\\
    \midrule
    \multirow{3}{*}{$\hat{r}$} & Average &1.016 & 1 & 1\\
    & Max &2 & 1&1\\
    & Min & 1& 1&1 \\
    \bottomrule
    \end{tabular}
\label{tab.r}
\end{table}
 
Next, we estimate the factor by UFA and IDW-UFA. To satisfy Assumption \ref{assum.kernel}, we let $\gamma=14$, set $h=1/\min(N,T)^{1/13}$, and choose the following fourteenth-order Gaussian-based kernel (see \cite{wand1990gaussian} and \cite{marron1992exact}): $k(z)\coloneqq \left(\sum_{i=0}^{6}c_{2i}z^{2i}\right)\phi(z)$ where $\phi$ is the standard normal density and 
\begin{equation*}
c_{2i}=\frac{(-1)^{i}\cdot 2^{i-13}\cdot 14!}{7!(2i+1)!(6-i)!}.
\end{equation*}
As discussed below Assumption \ref{assum.kernel}, a fourth-order kernel is sufficient to deliver $\sqrt{N}$-consistency for factor space estimation. In fact, we find in simulations that even a second-order kernel has similar performance. To save space, we present results under the current kernel just to be coherent with the assumption. For the inverse density estimator, we set $h_{d}=0.04$. We fix $M=9$, $\tau_{m}\in \{0.1,0.2,\ldots,0.9\}$, and use $\hat{r}$ as the number of factors for computation. For the initial guess $F^{0}$ and $\Lambda^{0}(\cdot)$ in Algorithm \ref{alg.1} to compute UFA, we use the ones proposed in Remark \ref{rem.initial.confirm}, whereas we use $\hat{F}$ and $\hat{\Lambda}(\cdot)$ obtained from UFA as the initial guess for Algorithm \ref{alg.2}.
 
In addition to the two estimators proposed in this paper, we also estimate the factor by PCA and by smoothed QFA \citep{chen2021quantile} at each $\tau_{m}$ to see how weak mean/quantile factor affects the performance. For these two estimators, we directly use the true $r$ for the number of factors. For the smoothed QFA, we slightly modify the approach in \cite{chen2021quantile} by using our smoothed objective function.

Note that since the smoothed QFA is also obtained from a nonconvex optimization problem without an analytical solution, no matter how smoothing is conducted, the initial guess should play a crucial role. We use two sets of initial guesses for it. The first one is the same as the one for our UFA; denoted by $\text{ini}_{UFA}$. This initial guess, by utilizing the information across all $M$ quantile levels, is already consistent in the average squared Frobenius norm. In practice, however, if a researcher is to use QFA, she is taking a quantile-level-specific approach, so it is more likely that she only solves the nuclear norm penalized estimator \eqref{eq.obj.nuc} at one $\tau$, obtains $\hat{L}^{pel}(\tau)$, sets $\hat{F}^{pel}(\tau)$ equal to the eigenvectors corresponding to the $r$ largest eigenvalues of $\hat{L}^{pel'}(\tau)\hat{L}^{pel}(\tau)/NT$ multiplied by $\sqrt{T}$ and $\hat{\Lambda}^{pel}(\tau)=\hat{L}^{pel}(\tau)\hat{F}^{pel}(\tau)/T$, and uses $(\hat{F}^{pel}(\tau),\hat{\Lambda}^{pel}(\tau))$ as the initial guess, denoted by $\text{ini}_{\tau}$. We compute QFA under either initial guess.

 To evaluate the performance, we regress the true factor $F_{0}^{*}$ on the estimated factors and obtain the adjusted $R^{2}$. A higher adjusted $R^{2}$ indicates that the estimated factor space is closer to the true one.
 Table \ref{tab.Rsq} shows the results averaged over 1000 simulation repetitions. Since our estimators and PCA are not quantile-level-specific, they have identical results across $\tau_{m}$. 

 Three observations can be drawn from Table \ref{tab.Rsq}. First, the performance of UFA and IDW-UFA are very similar, confirming Theorems \ref{thm.roc} and \ref{thm.dist2}. The adjusted $R^{2}$, already high when $(N,T)=(50,50)$, increases as the sample size increases. Second, QFA, on the other hand, has inferior performance, especially when the factor becomes weaker around $\tau=0.5$. The initial guess is indeed important, but even under the already consistent initial guess $\text{ini}_{UFA}$, QFA still only has an adjusted $R^{2}$ of $0.51$ at $\tau=0.5$ under $(N,T)=(150,150)$. Under the initial guess $\text{ini}_{\tau}$ that is more likely to be adopted in practice, QFA no longer works at $\tau=0.5$ as the adjusted $R^{2}$ is close to 0. Third, PCA does not work due to the weak strength of $F_{0}^{*}$ as a mean factor.

\begin{table}[htbp]
  \centering
  \caption{Adjusted $R^{2}$ in Regressions of the True Factor on the Estimated Factor}
  \addtolength{\tabcolsep}{-0.35em}
    \begin{tabular}{cccccccccccccccc}
    \toprule
    \toprule
          & \multicolumn{3}{c}{UFA} & \multicolumn{3}{c}{IDW-UFA} & \multicolumn{3}{c}{QFA: $\text{ini}_{UFA}$} & \multicolumn{3}{c}{QFA: $\text{ini}_{\tau}$} & \multicolumn{3}{c}{PCA}\\

    \cmidrule(lr){2-4} \cmidrule(lr){5-7} \cmidrule(lr){8-10} \cmidrule(lr){11-13} \cmidrule(lr){14-16}
    $N=T=$ & 50 & 100 & 150 & 50 & 100 & 150 & 50 & 100 & 150 & 50 & 100 & 150 & 50 & 100 & 150 \\
    \midrule
    $\tau=0.1$   & \multirow{9}{*}{0.95} & \multirow{9}{*}{0.97} & \multirow{9}{*}{0.98} & \multirow{9}{*}{0.93}&\multirow{9}{*}{0.97} & \multirow{9}{*}{0.98}& 0.93 & 0.96& 0.98 & 0.93& 0.96 & 0.97& \multirow{9}{*}{0.01}& \multirow{9}{*}{0.01}& \multirow{9}{*}{0.01} \\
   $\tau=0.2$ &  &  &  & & & & 0.84 & 0.91& 0.94 & 0.84& 0.91 & 0.94& && \\
    $\tau=0.3$ &  &  &  & & & & 0.66 & 0.79& 0.85 & 0.64& 0.78 & 0.84& && \\
   $\tau=0.4$  &  &  &  & & & & 0.36 & 0.53& 0.64 & 0.11& 0.31 & 0.46& && \\
    $\tau=0.5$  &  &  &  & & & & 0.19 & 0.29& 0.51 & 0.01& 0.01 & 0.002& && \\
     $\tau=0.6$ &  &  &  & & & & 0.39 & 0.56& 0.67 & 0.16& 0.41 & 0.54& && \\
   $\tau=0.7$  &  &  &  & & & & 0.68 & 0.81& 0.86 & 0.66& 0.80 & 0.85& && \\
    $\tau=0.8$ &  &  &  & & & & 0.85 & 0.92& 0.94 & 0.85& 0.92 & 0.94& && \\
     $\tau=0.9$ &  &  &  & & & & 0.93 & 0.97& 0.98 & 0.93& 0.97 & 0.98& && \\
    \bottomrule
    \end{tabular}
\label{tab.Rsq}
\end{table}

\subsection{Gaussian Approximation}\label{sec.MC.dist}
We now demonstrate the quality of the normal approximation in Theorem \ref{thm.dist2} for our IDW-UFA estimator. We follow the same data generating process as in Section \ref{sec.MC.space}. However, now we only draw the factors and loadings once and keep them fixed across the 1000 simulation repetitions, the same as \cite{chen2021quantile}. We estimate the fixed factors and loadings using IDW-UFA with sample size $(N,T)\in\{(50,50),(75,75),(100,100),(150,150)\}$. The implementation details are identical to Section \ref{sec.MC.space} except that now we use the true number of factors $r=1$ to avoid the rare cases of $\hat{r}\neq r$ where the dimensionality of $\tilde{f}_{t}$ is different from that of $f_{0,t}$.

Before presenting the results, first note that in Theorem \ref{thm.dist2}, $\tilde{f}_{t}$ is asymptotically normal up to rotation. Same as \cite{bai2003inferential}, this rotation matrix $H_{NT,2}$ is unknown as it depends on the true $F_{0}$. However, under $r=1$, Theorem \ref{thm.dist2} shows that $H_{NT,2}=\tilde{H}_{NT,1}+o_{p}(1/\sqrt{N})$, and thus by normalizing the sign of $\tilde{f}_{t}$ and $f_{0,t}$ which leads to $\tilde{H}_{NT,1}=1$, we can drop the rotation matrix and directly look at $\tilde{f}_{t}-f_{0,t}$. To verify, Table \ref{tab.rotation} presents $|H_{NT,2}-1|$  averaged over 1000 repetitions under different sample size. It can be seen that the difference is indeed very small, and gets closer to 0 as the sample size increases.

\begin{table}[htbp]
  \centering
  \caption{$H_{NT,2}$}
    \begin{tabular}{ccccc}
    \toprule
    \toprule
      $N=T=$ & 50 & 75& 100 & 150\\
    \midrule
     $|H_{NT,2}-1|$& 0.008 &0.005& 0.004 & 0.003\\
    \bottomrule
    \end{tabular}
\label{tab.rotation}
\end{table}
 
Now, we consider the standardized estimated factor and common component. Define $\tilde{f}^{std}_{t}\coloneqq (\tilde{f}_{t}-f_{0,t})/se_{\tilde{f}_{t}}$,
where $se_{\tilde{f}}$ is the standard error, constructed by the plug-in estimator described in Remark \ref{rem.covariance.estimator}. Note that it is not the standard deviation of the estimate across simulation repetitions. Under Theorem \ref{thm.dist2}, the distribution of $\tilde{f}^{std}_{t}$ approaches to standard normal as $N,T\to\infty$ for any fixed $t$. Define $\tilde{L}_{it}^{std}(\tau)$ similarly. Let $i=\lfloor N/2\rfloor$, $t=\lfloor T/2\rfloor$, where $\lfloor a \rfloor$ is the largest integer that is no greater than $a$. Table \ref{tab.mean.std} presents the sample mean and standard deviation of $\tilde{f}^{std}_{t}$ and $\tilde{L}^{std}_{it}(\tau)$ for $\tau=0.2,0.5,0.8$ over 1000 simulation repetitions. Note that the factor is relatively weak at $\tau=0.5$ whereas $0.2$ and $0.8$ are quantile levels near the boundaries. From the results, the sample mean is in general close to 0; it is relatively large at $(N,T)=(50,50)$, but soon gets smaller when the sample size reaches $(75,75)$. The standard deviation is close to 1, especially for $N>50$ and $T>50$. 
\begin{table}[htbp]
  \centering
  \caption{Mean and Standard Deviation of the Standardized Estimators}
    \begin{tabular}{ccccccccc}
    \toprule
    \toprule
          & \multicolumn{4}{c}{Mean} & \multicolumn{4}{c}{Standard Dev.} \\
    \cmidrule(lr){2-5} \cmidrule(lr){6-9}
    $N=T=$ & 50 & 75 & 100 & 150 & 50 & 75 & 100 & 150 \\
    \midrule
    $\tilde{f}_{t}^{std}$  &$-0.64$ &$-0.04$ &0.06  & $-0.08$ & 1.42&1.16 &1.06 &1.07 \\
    $\tilde{L}_{it}^{std}(0.5)$           &0.04 & 0.004   &$-0.05$  &$-0.03$ &  1.23 &1.00&1.11 &1.00\\
    $\tilde{L}_{it}^{std}(0.2)$           &0.37 &0.12  &0.04  &0.05      & 1.13&0.96 &0.98 & 1.01\\
    $\tilde{L}_{it}^{std}(0.8)$           &$-0.39$& $-0.14$ & $-0.06$  &$-0.04$ &1.24 &0.95 &1.02 &1.01  \\
    \bottomrule
    \end{tabular}
\label{tab.mean.std}
\end{table}

Finally, we plot the histograms (scaled to be density functions) of these standardized estimates, superimposed by the standard normal density curve. Figure \ref{fig.dist} shows that the normal approximation is reasonably good in all cases, and performs better when the sample size increases.
\begin{figure}[h]
\begin{center}
\includegraphics[scale=0.4]{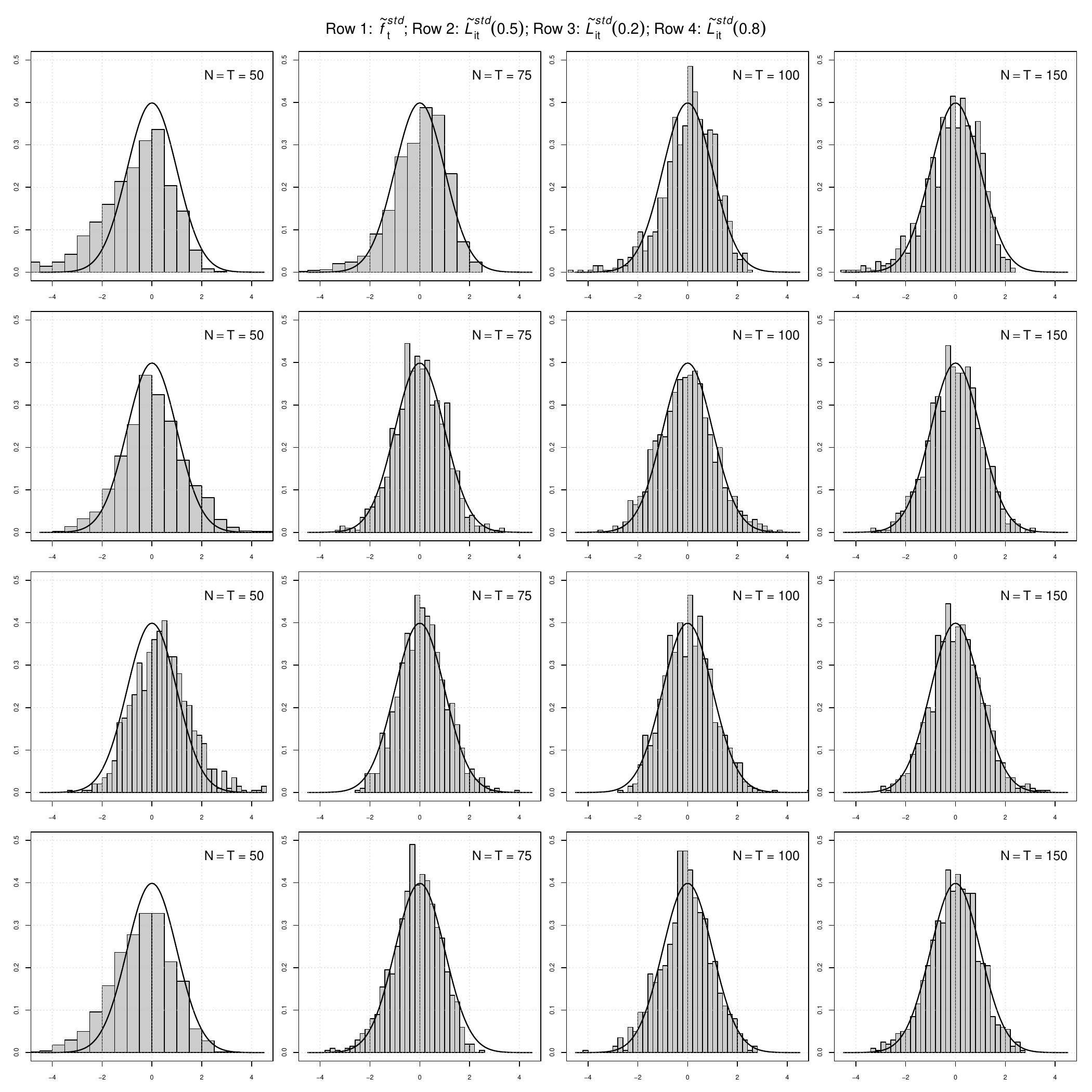}
\caption{Histograms of the standardized factor and common component estimates. Row 1 to row 4 plots the histograms of $\tilde{f}_{t}^{std}$, $\tilde{L}_{it}^{std}(0.5)$, $\tilde{L}_{it}^{std}(0.2)$, $\tilde{L}_{it}^{std}(0.8)$, respectively.}
\label{fig.dist}
\end{center}
\end{figure}

 \section{Conclusion}\label{sec.con}
This paper proposes a new factor analysis framework. By collecting all the factors that may have impact on the outcome, namely, the universal factors, our framework can induce an AFM and a QFM, but does not require the strong factor assumptions maintained in those models. 

We build two estimators for the universal factors and loadings. Both estimators achieve the optimal rate when estimating the space spanned by the factors, regardless of the strength of the factors with respect to the mean or quantile of the outcome at a given quantile level. Built on a novel sample splitting strategy, the inverse density weighted estimator further achieves $\sqrt{N}$-asymptotic normality for individual factors and loadings. Monte Carlo shows that our estimators have superior performance to QFA and PCA in the presence of weak factors.

In addition, we propose estimators concerning the number of factors. A weak-factor-robust consistent estimator for the total number of universal factors is constructed. We also develop consistent factor selectors that can select out factors having any tolerated level of influence in an AFM or a QFM.

\begin{appendices}
\section{Proof of Lemma \ref{lem.discrete}}
\begin{proof}[Proof of Lemma \ref{lem.discrete}]
Let $\mathcal{U}=[\underline{\tau},\bar{\tau}]$. Let $\tau_{0}=\underline{\tau}$.
 By the boundedness of $\lambda_{0,i}^{*}(\cdot)$ and Assumption \ref{assum.lip}, for some $C_{1},\ldots,C_{4}>0$,
\begin{align}
&\left\|\frac{1}{MN}\sum_{m=1}^{M}\Lambda^{*'}_{0}(\tau_{m})\Lambda^{*}_{0}(\tau_{m})-\frac{1}{N}\int_{\underline{\tau}}^{\bar{\tau}}\Lambda^{*'}_{0}(\tau)\Lambda^{*}_{0}(\tau)d\tau\right\|_{F}\notag\\
\leq &\max_{i=1,\ldots,N}\left\|\frac{1}{M}\sum_{m=1}^{M}\lambda_{0,i}^{*}(\tau_{m})\lambda_{0,i}^{*'}(\tau_{m})-\int_{\underline{\tau}}^{\bar{\tau}}\lambda_{0,i}^{*}(\tau)\lambda_{0,i}^{*'}(\tau)d\tau\right\|_{F}\notag\\
\leq &\max_{i=1,\ldots,N}\left\|\frac{1}{M}\sum_{m=1}^{M}\left(\lambda_{0,i}^{*}(\tau_{m})\lambda_{0,i}^{*'}(\tau_{m})-M\int_{\tau_{m-1}}^{\tau_{m}}\lambda_{0,i}^{*}(\tau)\lambda_{0,i}^{*'}(\tau)d\tau\right)\right\|_{F}+\max_{i=1,\ldots,N}\left\|\int_{\tau_{M}}^{\bar{\tau}}\lambda_{0,i}^{*}(\tau)\lambda_{0,i}^{*'}(\tau)d\tau\right\|_{F}\notag\\
\leq &\max_{i=1,\ldots,N}\left\|\sum_{m=1}^{M}\int_{\tau_{m-1}}^{\tau_{m}}\left(\lambda_{0,i}^{*}(\tau_{m})\lambda_{0,i}^{*'}(\tau_{m})-\lambda_{0,i}^{*}(\tau)\lambda_{0,i}^{*'}(\tau)\right)d\tau\right\|_{F}\notag\\
&+\left(\frac{1}{M}-\frac{1}{M+1}\right)\max_{i=1,\ldots,N}\left\|\sum_{m=2}^{M}\lambda_{0,i}^{*}(\tau_{m})\lambda_{0,i}^{*'}(\tau_{m})\right\|_{F}\notag\\
&+\left(\underline{\tau}+\frac{1}{M}-\frac{1}{M+1}\right)\cdot\max_{i=1,\ldots,N}\left\|\lambda_{0,i}^{*}(\tau_{1})\lambda_{0,i}^{*'}(\tau_{1})\right\|_{F}+\max_{i=1,\ldots,N}\left\|\int_{\tau_{M}}^{\bar{\tau}}\lambda_{0,i}^{*}(\tau)\lambda_{0,i}^{*'}(\tau)d\tau\right\|_{F}\notag\\
\leq &C_{1}\max_{i=1,\ldots,N}\sum_{m=1}^{M}\left[\left(\tau_{m}-\tau_{m-1}\right)\int_{\tau_{m-1}}^{\tau_{m}}d\tau\right]+\frac{C_{2}}{M}+C_{3}\left(\bar{\tau}-\tau_{M}\right)\leq \frac{C_{4}}{M},\label{eq.proof.discrete}
\end{align}
where the third inequality holds because for any $m$, $\int_{\tau_{m-1}}^{\tau_{m}}\lambda_{0,i}^{*}(\tau_{m})\lambda_{0,i}^{*'}(\tau_{m})d\tau=(\tau_{m}-\tau_{m-1})\lambda_{0,i}^{*}(\tau_{m})\lambda_{0,i}^{*'}(\tau_{m})$; note $\tau_{m}-\tau_{m-1}=1/(M+1)$ for $m\geq  2$ and equal to $1/(M+1)-\underline{\tau}$ for $m=1$. The penultimate inequality is due to Lipschitz continuity and boundedness of $\lambda_{0,i}^{*}(\cdot)$ on $\mathcal{U}$ and by $\underline{\tau}<\tau_{1}=1/(M+1)$. The last inequality is by $\bar{\tau}-\tau_{M}<1/(M+1)$. Therefore, arranging the eigenvalues of $\sum_{m=1}^{M}\Lambda_{0}^{*'}(\tau_{m})\Lambda_{0}^{*}(\tau_{m})/MN$ and those of $\int_{0}^{1}\Lambda_{0}^{*'}(\tau)\Lambda_{0}^{*}(\tau)d\tau/N$ in descending order, the difference between each pair is bounded by $C_{4}/M$ by the Weyl's inequality, giving the desired result for sufficiently large $M$ under Assumption \ref{assum.normalization}-(i).

 Next, we show the distinctiveness of the eigenvalues. 
Denote the $j$-th largest eigenvalues in $F_{0}^{*}\int_{\mathcal{U}}\Lambda_{0}^{*'}(\tau)\Lambda_{0}^{*}(\tau)d\tau F_{0}^{*'}/(NT)$ and $F_{0}^{*}\sum_{m=1}^{M}\Lambda_{0}^{*'}(\tau_{m})\Lambda_{0}^{*}(\tau_{m})F_{0}^{*'}/MNT$ by $\sigma^{*2}_{j}$ and $\sigma^{2}_{M,j}$, respectively. Let $c^{*}\coloneqq \min_{j=1,\ldots,r-1}(\sigma^{*2}_{j}-\sigma^{*2}_{j+1})$. By Assumption \ref{assum.normalization}, $c^{*}>0$ for all $N$. Therefore, there exists $C_{5}>0$ such that
\begin{align*}
  \min_{j=1,\ldots,r-1}&\left(\sigma_{M,j}^{2}-\sigma_{M,j+1}^{2}\right)\geq  \min_{j=1,\ldots,r-1}(\sigma^{*2}_{j}-\sigma^{*2}_{j+1})  -\max_{j=1,\ldots,r-1}\left|\sigma_{M,j}^{2}-\sigma_{j}^{*2}\right|-\max_{j=1,\ldots,r-1}\left|\sigma_{M,j+1}^{2}-\sigma_{j+1}^{*2}\right|\\
  \geq & c^{*}-2\left\|\frac{1}{MNT}F_{0}^{*}\sum_{m=1}^{M}\Lambda^{*'}_{0}(\tau_{m})\Lambda^{*}_{0}(\tau_{m})F_{0}^{*'}-\frac{1}{NT}F_{0}^{*}\int_{\underline{\tau}}^{\bar{\tau}}\Lambda^{*'}_{0}(\tau)\Lambda^{*}_{0}(\tau)d\tau F_{0}^{*'}\right\|_{F}\geq c^{*}-\frac{2C_{5}}{M},
\end{align*}
where the first and second inequality are by the triangle inequality and the Weyl's inequality, respectively. The last inequality is by \eqref{eq.proof.discrete}, by the Cauchy-Schwarz inequality and by boundedness of $f_{0,t}^{*}$. The desired results then follow under sufficiently large $M$.
\end{proof}
\section{Proof of Theorem \ref{thm.roc}}\label{appx.roc}
We start by introducing some notation. Recall that $\mathcal{B}$ is a compact subset of $\mathbb{R}$ such that $\lambda_{0,i}(\tau_{m}),f_{0,t}\in\mathcal{B}^{r}$ for all $m,i,t$. Let $\Theta_{\lambda}\coloneqq\{(\lambda_{1}'(\tau_{1}),\cdots,\lambda_{N}'(\tau_{1}),\ldots,\lambda_{1}'(\tau_{M}),\cdots,\lambda_{N}'(\tau_{M}))'\in \mathcal{B}^{MNr}:(\sum_{i=1}^{N}\lambda_{i}(\tau_{m})\lambda_{i}'(\tau_{m})/N)_{m}\in\Xi\}$ and $\Theta_{f}\coloneqq\{(f_{1}',\ldots,f_{T}')'\in \mathcal{B}^{Tr}:\sum_{t=1}^{T}f_{t}f_{t}'/T\in\mathcal{F}\}$.
Let $\Theta\coloneqq\Theta_{\lambda}\times\Theta_{f}$. Denote an arbitrary element in $\Theta$ by $\theta$. Let the vector of the diagonalized true loadings and factors be $\theta_{0}$ and the estimated loadings and factors by $\hat{\theta}$. Under Lemma \ref{lem.discrete}, by the definition of $F_{0}$ and $\Lambda_{0}(\cdot)$ and the definition of the estimator \eqref{eq.estimator}, $\theta_{0},\hat{\theta}\in\Theta$.

Define $\hat{R}_{h,\tau}(c;Y_{it})\coloneqq \int_{s}\rho_{\tau}(s)k\left((s-Y_{it}+c)/h\right)ds/h$. For $j=1,2,3$, $\hat{R}^{(j)}_{h,\tau}(c;Y_{it})=(\partial/\partial c)^{j}\hat{R}_{h,\tau}(c;Y_{it})$. When $c=\lambda_{0,i}'(\tau_{m})f_{0,t}$, denote $\hat{R}^{(j)}_{h,\tau_{m}}(c;Y_{it})$ by $\hat{R}^{(j)}_{h,\tau_{m},it}$. By \cite{fernandes2021smoothing}, $\hat{R}^{(1)}_{h,\tau_{m}}\left(c;Y_{it}\right) =K\left((c-Y_{it})/h\right)-\tau_{m}$, $\hat{R}^{(2)}_{h,\tau_{m},it}\left(c;Y_{it}\right)=k\left((c-Y_{it})/h\right)/h$, and $\hat{R}^{(3)}_{h,\tau_{m},it}\left(c;Y_{it}\right)=k^{(1)}\left((c-Y_{it})/h\right)/h^{2}$, where $K(z)\coloneqq\int_{-\infty}^{z}k(v)dv$ and $k^{(1)}(z)$ is the derivative of $k$ at $z$. 
Finally, for any $\theta\in\Theta$, let $\hat{R}_{h}(\Lambda(\cdot),F)\coloneqq \sum_{m=1}^{M}\sum_{i=1}^{N}\sum_{t=1}^{T}\hat{R}_{h,\tau_{m}}(\lambda_{i}(\tau_{m})'f_{t};Y_{it})/MNT$. Let $\Delta_{\hat{R}_{h}}(\theta)\coloneqq \hat{R}_{h}(\Lambda(\cdot),F)-\hat{R}_{h}(\Lambda_{0}(\cdot),F_{0})$, $\bar{\Delta}_{\hat{R}_{h}}(\theta)\coloneqq \mathbb{E}(\Delta_{\hat{R}_{h}}(\theta))$ and $\hat{S}_{h}(\theta)\coloneqq \Delta_{\hat{R}_{h}}(\theta)-\bar{\Delta}_{\hat{R}_{h}}(\theta)$.

We need the following lemmas to show Theorem \ref{thm.roc}. They are similar to \cite{chen2021quantile}, and we give their proofs in the Online Appendix. 
\begin{lem}\label{lem.nonpara}
Under Assumptions \ref{assum.foundation} to \ref{assum.kernel},
\begin{itemize}
  \item[(i)] There exists a constant $C>0$ such that $h^{j-1}\cdot\sup_{c\in\mathbb{R},\tau\in\mathcal{U},i,t}|\hat{R}_{h,\tau}^{(j)}(c;Y_{it})|\leq C$ for $j=1,2,3$.
  \item[(ii)]  $\max_{m,i,t}|\mathbb{E}(\hat{R}^{(1)}_{h,\tau_{m},it})|=O(h^{\gamma})$, $\sup_{m,i,t,\lambda\in\mathcal{B}^{r},f\in\mathcal{B}^{r}}|\mathbb{E}(\hat{R}^{(2)}_{h,\tau_{m}}(\lambda'f;Y_{it}))-\textsf{f}_{\tau_{m},it}(\lambda'f-\lambda_{0,i}'(\tau_{m})f_{0,t})|=O(h^{\gamma})$, and $\sup_{m,i,t,\lambda\in\mathcal{B}^{r},f\in\mathcal{B}^{r}}|\mathbb{E}(\hat{R}^{(3)}_{h,\tau_{m}}(\lambda'f;Y_{it}))+\textsf{f}_{\tau_{m},it}^{(1)}(\lambda'f-\lambda_{0,i}'(\tau_{m})f_{0,t})|=O(h^{\gamma})$.
  \item[(iii)] $\mathbb{E}(\hat{R}_{h,\tau_{m},it}^{(1)})^{2}=\tau_{m}(1-\tau_{m})+O(h)$, $\mathbb{E}(\hat{R}_{h,\tau_{m},it}^{(1)}\cdot \hat{R}_{h,\tau_{m'},it}^{(1)})=\min(\tau_{m},\tau_{m'})-\tau_{m}\tau_{m'}+o(1)$, and $h\mathbb{E}(\hat{R}_{h,\tau_{m},it}^{(2)})^{2}=O(1)$.
\end{itemize}
\end{lem}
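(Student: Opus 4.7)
The proof proposal rests on three standard reductions: integration by parts to move the bandwidth out of the differentiated kernel, a change of variables $v=(c-Y_{it})/h$ (or $v=-u/h$) to normalize the scale, and a Taylor expansion of the true density against the kernel's moment conditions from Assumption \ref{assum.kernel}.

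For part (i), I would first derive the closed-form expressions $\hat{R}_{h,\tau}^{(1)}(c;Y_{it})=K((c-Y_{it})/h)-\tau$, $\hat{R}_{h,\tau}^{(2)}(c;Y_{it})=k((c-Y_{it})/h)/h$, and $\hat{R}_{h,\tau}^{(3)}(c;Y_{it})=k^{(1)}((c-Y_{it})/h)/h^{2}$ by differentiating under the integral and integrating by parts once in the definition of $\hat R_{h,\tau}$ (boundary terms vanish because $k$ is integrable and $\int |k^{(1)}|<\infty$). The stated bounds then follow immediately from $\sup|K|\le 1+\tfrac{1}{2}\int|k^{(1)}|<\infty$ (by $K(\pm\infty)\in\{0,1\}$ and the FTC), $\sup|k|<\infty$, and $\sup|k^{(1)}|<\infty$, all of which are consequences of Assumption \ref{assum.kernel}(i).

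For part (ii), I would compute $\mathbb{E}\hat{R}_{h,\tau_m,it}^{(1)}=\mathbb{E}K(-\varepsilon_{it}(\tau_m)/h)-\tau_m$, integrate by parts once in the variable $u$ (the boundary vanishes since $K(-u/h)F_{\tau_m,it}(u)$ tends to $0$ at both infinities), change variables $v=-u/h$ to obtain $\int k(v)F_{\tau_m,it}(-hv)\,dv$, and then Taylor expand $F_{\tau_m,it}(-hv)$ to order $\gamma$ around $0$. Using $F_{\tau_m,it}(0)=\tau_m$ by the conditional quantile normalization, together with $\int k=1$ and $\int v^{j}k=0$ for $j=1,\ldots,\gamma-1$, all intermediate terms drop out and only the $h^{\gamma}$ remainder survives; uniformity in $m,i,t$ follows from the uniform bounds on the derivatives of $\textsf{f}_{\tau,it}$ in Assumption \ref{assum.density}(ii). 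The same scheme—change of variables $v=(c-u)/h$ and Taylor expansion in $\textsf{f}_{Y|F_0}$—yields the second and third bias formulas directly, since here no integration by parts is needed: we simply land on $\int k(v)\textsf{f}_{\tau_m,it}(\lambda'f-\lambda_{0,i}'(\tau_m)f_{0,t}-hv)\,dv$ and an analogous expression involving $\textsf{f}_{\tau_m,it}^{(1)}$ after rewriting $k^{(1)}/h^{2}$ and integrating by parts.

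For part (iii), the variance computations reduce to the same family of integrals. For $\mathbb{E}(\hat{R}_{h,\tau_m,it}^{(1)})^{2}$, expanding the square and using part (ii) gives $\mathbb{E}K(-\varepsilon/h)^{2}-2\tau_m^{2}+\tau_m^{2}+O(h^{\gamma})$; integration by parts on $\mathbb{E}K(-\varepsilon/h)^{2}$ (where $(K^{2})'=2Kk$) together with $\int 2K(v)k(v)\,dv=[K(v)^{2}]_{-\infty}^{\infty}=1$ yields $\tau_m+O(h)$, delivering the $\tau_m(1-\tau_m)+O(h)$ rate. The cross term at two different quantile levels exploits that $\varepsilon_{it}(\tau_{m'})$ and $\varepsilon_{it}(\tau_m)$ differ only by the deterministic shift $\delta_{it}=(\lambda_{0,i}(\tau_m)-\lambda_{0,i}(\tau_{m'}))'f_{0,t}$, so the expectation can be expressed as a one-dimensional integral against $\textsf{f}_{\tau_m,it}$; change of variables and the pointwise convergence $K(-\varepsilon/h)\to \mathbf{1}(\varepsilon<0)$ combined with dominated convergence and $F_{Y|F_0}(\lambda_{0,i}'(\tau_m)f_{0,t})=\tau_m$ give the limit $\min(\tau_m,\tau_{m'})-\tau_m\tau_{m'}+o(1)$. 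Finally, $h\,\mathbb{E}(\hat{R}_{h,\tau_m,it}^{(2)})^{2}$ reduces after change of variables to $\int k(v)^{2}\textsf{f}_{\tau_m,it}(-hv)\,dv=\textsf{f}_{\tau_m,it}(0)\int k^{2}+O(h)=O(1)$.

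The main obstacles are bookkeeping rather than conceptual: ensuring that the $O(h^{\gamma})$ and $O(h)$ remainders are uniform across $m,i,t$ (and across $\lambda,f$ ranging over $\mathcal{B}^{r}$ in the second claim of (ii)), which is handled by the uniform derivative bounds in Assumption \ref{assum.density}(ii); and correctly tracking signs in the integration-by-parts steps, since $k$ may be signed for $\gamma\ge 2$ so that $K$ need not be a CDF. The cross-covariance in (iii) is the most delicate point because the two integrands depend on the \emph{same} $Y_{it}$ through two different recentered residuals, but because the shift $\delta_{it}$ is deterministic conditional on $F_{0}$, the computation remains a single scalar integral and the limit follows from dominated convergence.
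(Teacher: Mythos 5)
The paper gives no proof of this lemma; it only cites \cite{galvao2016smoothed} and \cite{fernandes2021smoothing}. Your proposal supplies the standard kernel-smoothing calculus those references rely on — the convolution identity behind the closed forms $\hat R^{(1)}_{h,\tau}=K((c-Y)/h)-\tau$, $\hat R^{(2)}_{h,\tau}=k((c-Y)/h)/h$, $\hat R^{(3)}_{h,\tau}=k^{(1)}((c-Y)/h)/h^{2}$; the change of variables $v=(c-u)/h$ followed by a Taylor expansion against the order-$\gamma$ moment conditions for (ii); integration by parts with $(K^{2})'=2Kk$, $\int 2Kk=1$ for the first claim of (iii); and, for the cross term, the dominated-convergence limit $K(-\varepsilon/h)\to\mathbf{1}(\varepsilon<0)$ combined with the observation that $\varepsilon_{it}(\tau_m)$ and $\varepsilon_{it}(\tau_{m'})$ differ by the deterministic shift $(\lambda_{0,i}(\tau_{m'})-\lambda_{0,i}(\tau_{m}))'f_{0,t}$. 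This is correct and is exactly the argument the citation points to.

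Two small points. First, the justification you give for the boundary terms vanishing — integrability of $k$ and $\int|k^{(1)}|<\infty$ alone — does not by itself dominate the linear growth of $\rho_\tau$; you also need the finite moments $\int|s^{j}k(s)|\,ds<\infty$ for $j\le\gamma$ that are implicit in Assumption~\ref{assum.kernel}(i), together with bounded variation, to conclude $s\,k(s)\to 0$. Likewise $\sup|K|\le 1+\tfrac12\int|k^{(1)}|$ is not the right constant; the clean bound is $\sup|K|\le 1+\int|k|$. Second, and more substantively, if you actually track the sign in the $\hat R^{(3)}$ bias computation (e.g.\ by differentiating the $\hat R^{(2)}$ bias expansion in $c$, or by one more integration by parts), you obtain
\[
\mathbb{E}\bigl(\hat R^{(3)}_{h,\tau_m}(\lambda'f;Y_{it})\bigr)=\textsf{f}^{(1)}_{\tau_m,it}\bigl(\lambda'f-\lambda_{0,i}'(\tau_m)f_{0,t}\bigr)+O(h^{\gamma}),
\]
which is the \emph{negative} of what the lemma's phrasing $|\mathbb{E}(\hat R^{(3)})+\textsf{f}^{(1)}|=O(h^{\gamma})$ asserts. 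This looks like a sign typo in the lemma (harmless for the paper, since only the magnitude of $\hat R^{(3)}$ is used in the downstream remainder bounds), but your write-up leaves the sign implicit; making it explicit would have surfaced the discrepancy.
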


\begin{lem}\label{lem.avgcons}
Under Assumptions \ref{assum.foundation} to \ref{assum.kernel},
\begin{equation*}
  \frac{1}{MNT}\sum_{m=1}^{M}\left\|\hat{L}(\tau_{m})-L_{0}(\tau_{m})\right\|_{F}^{2}=o_{p}(1).
\end{equation*}
\end{lem}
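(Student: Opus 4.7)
\textbf{Proof proposal for Lemma \ref{lem.avgcons}.}

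The plan is to combine the optimality of $\hat{\theta}$ with a quadratic lower bound on the population criterion and uniform control of the centered empirical fluctuation. By definition of the UFA estimator, $\hat{R}_{h}(\hat{\Lambda}(\cdot),\hat{F})\le \hat{R}_{h}(\Lambda_{0}(\cdot),F_{0})$, so $\Delta_{\hat{R}_{h}}(\hat{\theta})\le 0$. Decomposing $\Delta_{\hat{R}_{h}}(\hat{\theta})=\bar{\Delta}_{\hat{R}_{h}}(\hat{\theta})+\hat{S}_{h}(\hat{\theta})$ yields $\bar{\Delta}_{\hat{R}_{h}}(\hat{\theta})\le -\hat{S}_{h}(\hat{\theta})$. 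The strategy is then to lower-bound $\bar{\Delta}_{\hat{R}_{h}}$ by a positive multiple of the target distance $\sum_{m}\|L(\tau_{m})-L_{0}(\tau_{m})\|_{F}^{2}/(MNT)$ and to show $\sup_{\theta\in\Theta}|\hat{S}_{h}(\theta)|=o_{p}(1)$.

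For the first step, I would apply a second-order Taylor expansion of the map $c\mapsto\mathbb{E}[\hat{R}_{h,\tau_{m}}(c;Y_{it})]$ about $c=L_{0,it}(\tau_{m})$ at each $(m,i,t)$, evaluated at $c=\lambda_{i}'(\tau_{m})f_{t}$. Lemma \ref{lem.nonpara}(ii) gives $|\mathbb{E}\hat{R}_{h,\tau_{m},it}^{(1)}|=O(h^{\gamma})$ for the linear term and $\mathbb{E}\hat{R}_{h,\tau_{m}}^{(2)}(c;Y_{it})=\textsf{f}_{\tau_{m},it}(c-L_{0,it}(\tau_{m}))+O(h^{\gamma})$ uniformly over the relevant $c\in[-B,B]$ (with $B$ determined by compactness of $\mathcal{B}$) for the quadratic term. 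Assumption \ref{assum.density}(i) supplies a uniform lower bound $\underline{\textsf{f}}>0$ on $\textsf{f}_{\tau_{m},it}$ over the compact set in which $c-L_{0,it}(\tau_{m})$ lies. Combining these ingredients and summing over $(m,i,t)$ produces, uniformly in $\theta\in\Theta$,
\begin{equation*}
\bar{\Delta}_{\hat{R}_{h}}(\theta)\ \ge\ \frac{\underline{\textsf{f}}}{2MNT}\sum_{m=1}^{M}\|L(\tau_{m})-L_{0}(\tau_{m})\|_{F}^{2}\ -\ C\bigl(h^{\gamma}+h^{\gamma}\cdot\tfrac{1}{MNT}\sum_{m,i,t}|L_{it}(\tau_{m})-L_{0,it}(\tau_{m})|\bigr),
\end{equation*}
and the error term is $O(h^{\gamma})=o(1)$ since the common components are uniformly bounded.

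The main obstacle is the second step: showing $\sup_{\theta\in\Theta}|\hat{S}_{h}(\theta)|=o_{p}(1)$ despite $\Theta$ having dimension growing in both $N$ and $T$. The key observation is that $\hat{S}_{h}$ depends on $\theta$ only through the common-component matrices $L(\tau_{m})=\Lambda(\tau_{m})F'$, each of which lies in the set $\mathcal{M}_{r}$ of $N\times T$ matrices of rank at most $r$ with entries bounded by a constant depending only on $\mathcal{B}$. Such a set admits an $\varepsilon$-net in the entrywise sup norm of cardinality bounded by $(C/\varepsilon)^{C(N+T)r}$. Since, by Lemma \ref{lem.nonpara}(i) with $j=1$, the map $c\mapsto \hat{R}_{h,\tau_{m}}(c;Y_{it})$ is Lipschitz with a constant uniform in $h$, the difference $\hat{R}_{h,\tau_{m}}(\lambda_{i}'(\tau_{m})f_{t};Y_{it})-\hat{R}_{h,\tau_{m},it}$ is bounded uniformly and Lipschitz in $L_{it}(\tau_{m})$. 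Independence across $(i,t)$ (Assumption \ref{assum.iid}) then allows a Hoeffding bound at each net point, which combined with the union bound and a Lipschitz extension off the net gives $\sup_{\theta}|\hat{S}_{h}(\theta)|=O_{p}\bigl(\sqrt{(N+T)r/(NT)}\bigr)=o_{p}(1)$ under $N\asymp T$.

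Putting the pieces together, $(\underline{\textsf{f}}/2)\cdot (MNT)^{-1}\sum_{m}\|\hat{L}(\tau_{m})-L_{0}(\tau_{m})\|_{F}^{2}\le -\hat{S}_{h}(\hat{\theta})+O(h^{\gamma})=o_{p}(1)$, and the lemma follows. The step I expect to be the most technical is the uniform bound on $\hat{S}_{h}$: one must verify both the low-rank covering argument and that the bounded-differences Hoeffding bound is not degraded as $h\to 0$, which is possible because the bounded-Lipschitz constant of $\hat{R}_{h,\tau_{m}}(\cdot;Y_{it})$ in its first argument is controlled independently of $h$ by Lemma \ref{lem.nonpara}(i).
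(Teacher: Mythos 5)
Your proposal is correct and follows the same basic-inequality-plus-maximal-inequality route as the paper: decompose $\Delta_{\hat R_h}(\hat\theta)=\bar\Delta_{\hat R_h}(\hat\theta)+\hat S_h(\hat\theta)$, lower-bound the expected criterion quadratically in the common-component distance using Lemma~\ref{lem.nonpara}(ii) and the density lower bound (the paper uses the constant $\underline{\textsf f}/4$; yours is $\underline{\textsf f}/2$, a cosmetic difference in the Taylor-remainder bookkeeping), then show $\sup_{\theta\in\Theta}|\hat S_h(\theta)|=o_p(1)$ by a covering argument.

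The one genuinely different wrinkle is how you cover: you cover the image set of tuples of rank-$r$ common-component matrices $L(\tau_m)=\Lambda(\tau_m)F'$ in entrywise sup norm, whereas the paper covers the parameter cube $\Theta\subset\mathcal B^{(MN+T)r}$ directly and then pushes through the Lipschitz bound \eqref{eq.obj.lip} (your observation that $\hat S_h$ depends on $\theta$ only through the $L(\tau_m)$'s is exactly why the two are interchangeable). Your low-rank covering number $(C/\varepsilon)^{C(N+T)r}$ should really be stated for the shared-factor tuple $(L(\tau_1),\dots,L(\tau_M))$, i.e.\ with exponent $O((MN+T)r)$; covering each $L(\tau_m)$ independently would also work for fixed $M$ but is wasteful. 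Beyond that, the paper controls the max over the net with an Orlicz-norm bound (Lemmas~2.2.1--2.2.2 of van der Vaart and Wellner) while you propose Hoeffding plus a direct union bound; these are equivalent up to constants. Your flag about the $h$-uniformity of the Lipschitz constant from Lemma~\ref{lem.nonpara}(i), $j=1$, correctly identifies why the Hoeffding bound does not degrade as $h\to0$, which is exactly the point the paper exploits.
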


Let $\mathcal{H}$ be the set of $r\times r$ diagonal matrices whose diagonal entries only consist of $1$ and $-1$. Notice that $|\mathcal{H}|=2^{r}$, and for any element $H\in\mathcal{H}$, $H^{2}=I_{r}$. For any $\theta_{1},\theta_{2}\in\Theta$ and $H\in\mathcal{H}$, define 
\begin{align*}
d_{1}(\theta_{1},\theta_{2})\coloneqq
& \sqrt{\frac{1}{M}\sum_{m=1}^{M}\frac{1}{NT}\left\|L_{1}\left(\tau_{m}\right)-L_{2}\left(\tau_{m}\right)\right\|_{F}^{2}} ,\\
d_{2}(\theta_{1},\theta_{2};H)\coloneqq
&\sqrt{\frac{1}{MN}\sum_{m=1}^{M}\left\|\Lambda_{1}(\tau_{m})-\Lambda_{2}(\tau_{m})H\right\|_{F}^{2}+\frac{1}{T}\left\|F_{1}-F_{2}H\right\|_{F}^{2}},\\
d_{3}(\theta_{1},\theta_{2};H)\coloneqq&\sqrt{\frac{1}{MN}\sum_{m=1}^{M}\left\|\Lambda_{1}(\tau_{m})-\Lambda_{2}(\tau_{m})H\right\|_{F}^{2}}+\sqrt{\frac{1}{T}\left\|F_{1}-F_{2}H\right\|_{F}^{2}}.
\end{align*}

Let $\Theta(\delta,d_{1})\coloneqq \{\theta\in \Theta:d_{1}\left(\theta,\theta_{0}\right)\leq \delta\}$, and $\Theta(\delta,d_{k})\coloneqq \{\theta\in \Theta:\exists H\in\mathcal{H}\ s.t.\ d_{k}\left(\theta,\theta_{0};H\right)\leq \delta\},k=2,3$.
\begin{lem}\label{lem.metrics}
Under Assumptions \ref{assum.normalization} and \ref{assum.lip}, for any $\delta>0$, there exists a constant $C>0$ such that $d_{1}(\theta,\theta_{0})\leq Cd_{3}(\theta,\theta_{0};H(\theta))$ for all $\theta\in\Theta$ where $H(\theta):= \text{diag}(\text{sgn}(F'_{j}F_{0,j}))$, implying that $\Theta(\delta,d_{1})\subseteq \Theta(C\delta,d_{3})$. 
\end{lem}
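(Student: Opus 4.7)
The plan is to establish the algebraic inequality $d_1(\theta,\theta_0) \le C\, d_3(\theta,\theta_0;H(\theta))$ directly, from which the set inclusion is immediate. The key observation is that $H(\theta)\in\mathcal{H}$ is a diagonal matrix with $\pm 1$ entries (since $\mathrm{sgn}(\cdot)\in\{-1,1\}$ under the convention above Lemma \ref{lem.metrics}), so $H(\theta)'=H(\theta)$ and $H(\theta)^2=I_r$. This involution property lets me insert a rotated copy of the true parameters as a pivot and split the common-component difference into a loading-difference piece and a factor-difference piece.

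First I would carry out the decomposition. Writing $H\equiv H(\theta)$, for each $m$, add and subtract $\Lambda_0(\tau_m)H F'$ and use $H^2=I_r$, $H=H'$ to obtain
\begin{equation*}
\Lambda(\tau_m)F' - \Lambda_0(\tau_m)F_0' = \bigl[\Lambda(\tau_m)-\Lambda_0(\tau_m)H\bigr]F' + \Lambda_0(\tau_m)H\bigl[F - F_0 H\bigr]'.
\end{equation*}
Next, apply the triangle inequality in Frobenius norm together with the submultiplicative bounds $\|AB\|_F\le \|A\|_F\|B\|_{op}$ and $\|AB\|_F\le \|A\|_{op}\|B\|_F$ to control the two terms separately. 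I would then invoke the uniform bounds that follow from the parameter-space definitions and Assumption \ref{assum.normalization}(iv): (i) $F\in\mathcal{F}$ forces $F'F/T=I_r$, hence $\|F\|_{op}=\sqrt{T}$; (ii) $\|H\|_{op}=1$ since the diagonal entries are $\pm 1$; (iii) boundedness $\lambda_{0,i}(\tau_m)\in\mathcal{B}^r$ uniformly in $i,m$ yields $\|\Lambda_0(\tau_m)\|_{op}\le \|\Lambda_0(\tau_m)\|_F\le C_\Lambda \sqrt{N}$ for a constant $C_\Lambda$ depending only on the compact set $\mathcal{B}$.

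Putting these together delivers, for every $m$,
\begin{equation*}
\|L(\tau_m)-L_0(\tau_m)\|_F \le \sqrt{T}\,\|\Lambda(\tau_m)-\Lambda_0(\tau_m)H\|_F + C_\Lambda\sqrt{N}\,\|F-F_0 H\|_F.
\end{equation*}
Squaring via $(a+b)^2\le 2a^2+2b^2$, summing over $m$, and dividing by $MNT$ gives
\begin{equation*}
d_1^2(\theta,\theta_0) \le \frac{2}{MN}\sum_{m=1}^{M}\|\Lambda(\tau_m)-\Lambda_0(\tau_m)H\|_F^2 + \frac{2C_\Lambda^2}{T}\|F-F_0 H\|_F^2 \le 2\max(1,C_\Lambda^2)\, d_3^2(\theta,\theta_0;H),
\end{equation*}
where the last inequality uses $d_3^2=(\sqrt{a}+\sqrt{b})^2\ge a+b$. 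Taking square roots yields the stated inequality with $C=\sqrt{2\max(1,C_\Lambda^2)}$, and the set inclusion follows at once.

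Honestly, there is no real obstacle here: the argument reduces to a one-line algebraic identity plus two standard matrix-norm inequalities. The only delicate point is noticing that the normalizations $F'F/T=I_r$ and $\lambda_{0,i}(\tau_m)\in\mathcal{B}^r$ absorb the factors of $\sqrt{T}$ and $\sqrt{N}$ precisely so as to balance the $1/T$ and $1/N$ weights that distinguish the two pieces of $d_3$, and this is exactly why the involution structure of $\mathcal{H}$, which avoids having to choose or estimate a more general rotation, is enough to make the bookkeeping go through.
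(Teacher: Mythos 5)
There is a genuine gap here, and it stems from the direction of the inequality. The inequality as printed in the lemma, $d_1(\theta,\theta_0)\le C\,d_3(\theta,\theta_0;H(\theta))$, is a typo: it is the \emph{reverse} bound $d_3(\theta,\theta_0;H(\theta))\le C\,d_1(\theta,\theta_0)$ that is needed, because only that direction yields the stated set inclusion $\Theta(\delta,d_1)\subseteq\Theta(C\delta,d_3)$ (if $d_1\le\delta$ then $d_3\le C\delta$), and only that direction is used downstream in Lemma~\ref{lem.psi2} and in the proof of Theorem~\ref{thm.roc}, where an average-Frobenius bound on $\hat L(\tau_m)-L_0(\tau_m)$ must be converted into separate bounds on $\hat F - F_0H$ and $\hat\Lambda(\tau_m)-\Lambda_0(\tau_m)H$. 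Your argument proves the easy direction — that a small loading error plus a small factor error implies a small common-component error — which is an immediate triangle inequality and does not require Assumption~\ref{assum.normalization} at all. Your final sentence, ``and the set inclusion follows at once,'' is where the logic breaks: $d_1\le Cd_3$ gives a \emph{lower} bound on $d_3$ in terms of $d_1$, so $d_1\le\delta$ gives no control on $d_3$ whatsoever; it in fact yields the opposite inclusion $\Theta(\delta/C,d_3)\subseteq\Theta(\delta,d_1)$.

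The real content of the lemma, and the part your proposal omits, is an identifiability-via-eigenvalues argument. From $d_1(\theta,\theta_0)\le\delta$ one first bounds $\|\sum_m L'(\tau_m)L(\tau_m)/MNT - \sum_m L_0'(\tau_m)L_0(\tau_m)/MNT\|_F$ by a multiple of $\delta$ using boundedness of $\Theta$. Then, because the $r$ nonzero eigenvalues of $\sum_m L_0'(\tau_m)L_0(\tau_m)/MNT$ are distinct and bounded away from zero (Assumption~\ref{assum.normalization} together with Lemma~\ref{lem.discrete}), a Davis--Kahan type perturbation bound (the paper invokes Corollary~1 of Yu, Wang and Samworth, 2015) converts this into a bound on $\|F-F_0H(\theta)\|_F/\sqrt{T}$, column by column, which is precisely where the sign matrix $H(\theta)=\mathrm{diag}(\mathrm{sgn}(F_j'F_{0,j}))$ arises. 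Finally the loading bound $\|\Lambda(\tau_m)-\Lambda_0(\tau_m)H(\theta)\|_F^2/MN$ is extracted by a decomposition very much like yours, but run in reverse: one isolates $\|(\Lambda(\tau_m)-\Lambda_0(\tau_m)H)F'\|_F^2 = T\|\Lambda(\tau_m)-\Lambda_0(\tau_m)H\|_F^2$ from $\|L(\tau_m)-L_0(\tau_m)\|_F^2$ and the already-obtained factor bound, using $F'F/T=I_r$ and boundedness of $\Lambda_0$. Without the eigengap step, the lemma cannot be proved, because with clustered eigenvalues the individual eigenvectors (hence $F$) are not identified from the common components, so $d_1$ small need not force $d_3$ small.

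You should rework the proof in the correct direction: start from $d_1(\theta,\theta_0)\le\delta$, bound the perturbation of the Gram-type matrix $\sum_m L'(\tau_m)L(\tau_m)/MNT$, apply a sin-$\Theta$/Davis--Kahan bound using the eigengap guaranteed by Assumption~\ref{assum.normalization}(iii) via Lemma~\ref{lem.discrete} to control $\|F-F_0H(\theta)\|_F/\sqrt{T}$, and then back out the loading bound using the normalization $F'F=TI_r$.
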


\begin{lem}\label{lem.psi2}
Under Assumptions \ref{assum.foundation} to \ref{assum.density}, for sufficiently small $\delta>0$,
\begin{equation*}
  \mathbb{E}\left[\sup_{\theta\in\Theta(\delta,d_{1})}\left|\hat{S}_{h}(\theta)\right|\right]\lesssim \delta\zeta_{NT}.
\end{equation*}
\end{lem}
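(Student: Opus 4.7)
The plan is to linearize $\hat S_h(\theta)$ by a second-order Taylor expansion of $\hat R_{h,\tau_m}(\lambda_i'(\tau_m)f_t;Y_{it})$ around the true common component $L_{0,it}(\tau_m)$, and then to exploit the fact that the perturbation matrix $L(\tau_m)-L_0(\tau_m)$ has rank at most $2r$, since both $L(\tau_m)$ and $L_0(\tau_m)$ factor through an $r$-dimensional space. Writing $\Delta_{mit}\coloneqq L_{it}(\tau_m)-L_{0,it}(\tau_m)$ and $\xi_{mit}\coloneqq \hat R^{(1)}_{h,\tau_m,it}-\mathbb{E}[\hat R^{(1)}_{h,\tau_m,it}]$, subtracting expectations decomposes
\[
\hat S_h(\theta)=\mathcal{L}(\theta)+\mathcal{Q}(\theta),\qquad \mathcal{L}(\theta)=\frac{1}{MNT}\sum_{m,i,t}\xi_{mit}\Delta_{mit},
\]
where $\mathcal{Q}(\theta)$ is the centered analogue of the second-order term $\tfrac{1}{2}\hat R^{(2)}_{h,\tau_m}(\tilde c_{mit};Y_{it})\Delta_{mit}^2$ with $\tilde c_{mit}$ lying between $L_{it}(\tau_m)$ and $L_{0,it}(\tau_m)$.

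For the linear piece I would view $\mathcal{L}(\theta)=(MNT)^{-1}\sum_m\langle\Xi_m,\,L(\tau_m)-L_0(\tau_m)\rangle_F$ with $\Xi_m=(\xi_{mit})_{i,t}$, and bound it by matrix trace duality together with the standard inequality $\|A\|_*\leq\sqrt{\mathrm{rank}(A)}\,\|A\|_F$ applied to the rank-$\leq 2r$ matrix $L(\tau_m)-L_0(\tau_m)$:
\[
|\mathcal{L}(\theta)|\leq \frac{\sqrt{2r}\,\max_m\|\Xi_m\|_2}{MNT}\sum_{m=1}^M\|L(\tau_m)-L_0(\tau_m)\|_F\leq \frac{\sqrt{2r}\,\max_m\|\Xi_m\|_2}{\sqrt{NT}}\,\delta,
\]
where the second inequality uses $d_1(\theta,\theta_0)\leq\delta$ together with Cauchy-Schwarz over $m$. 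Since the $\xi_{mit}$ are, conditionally on $F_0$, independent, mean-zero, and uniformly bounded by Lemma \ref{lem.nonpara}(i), a standard operator-norm bound for random matrices with bounded independent entries (e.g.\ non-commutative Bernstein or a Latala-type inequality) gives $\mathbb{E}\|\Xi_m\|_2\lesssim\sqrt{N}+\sqrt{T}$ uniformly in $m$; with $M$ fixed the same order persists after maximizing over $m$, and therefore $\mathbb{E}\sup_{\theta\in\Theta(\delta,d_1)}|\mathcal{L}(\theta)|\lesssim\delta\zeta_{NT}$.

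For the quadratic remainder I would use the pointwise bound $|\hat R^{(2)}_{h,\tau_m}|\leq C/h$ from Lemma \ref{lem.nonpara}(i), which produces a deterministic estimate $\sup_{\theta\in\Theta(\delta,d_1)}|\mathcal{Q}(\theta)|\leq (2C/h)\,d_1(\theta,\theta_0)^2\leq 2C\delta^2/h$. For $\delta$ sufficiently small, concretely $\delta\lesssim h\zeta_{NT}$, this is absorbed into $\delta\zeta_{NT}$; since the lemma is applied in the proof of Theorem \ref{thm.roc} with $\delta$ of the eventual rate $\zeta_{NT}$ and $h\to 0$ slowly under Assumption \ref{assum.kernel}-(ii) (so $h\gg\zeta_{NT}$), this restriction is harmless.

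The hard part is the linear piece: a naive entrywise Lipschitz control of $\mathcal{L}(\theta)$ against $d_1(\theta,\theta_0)$ would only deliver order $\delta$ rather than $\delta\zeta_{NT}$. The decisive step is to recast $\mathcal{L}$ as a matrix trace inner product and, through the low-rank/nuclear-norm inequality, reduce its control to the operator norm $\|\Xi_m\|_2$; the random-matrix bound $\|\Xi_m\|_2=O_{p}(\sqrt{N}+\sqrt{T})$ is what replaces the nominal $\sqrt{NT}$ scale by the sharper $\zeta_{NT}=1/\sqrt{N}+1/\sqrt{T}$. Verifying that operator-norm bound carefully under the conditional-independence structure of Assumption \ref{assum.iid}, with conditioning on $F_0$ handled explicitly, is the one step that genuinely invokes a tool from random-matrix theory rather than direct computation.
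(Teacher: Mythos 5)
Your argument for the linear piece $\mathcal{L}(\theta)$ is a valid and genuinely different route from the paper: the paper does not Taylor-expand at all, but instead works with $\hat S_h(\theta)$ directly as a sub-Gaussian process. It obtains a Hoeffding-based increment bound $\|\hat S_h(\theta)\|_{\psi_2}\lesssim d_2(\theta,\theta_0;I_r)/\sqrt{NT}$ from the $O(1)$ Lipschitz constant of $\hat R_{h,\tau}(\cdot;Y_{it})$, bounds the metric entropy of $\Theta(C\delta,d_2;I_r)$ by $r(MN+T)\log(C_1\delta/\varepsilon)$, and applies the maximal inequality of Theorem 2.2.4 in van der Vaart and Wellner (1996); the entropy integral then produces $\delta\sqrt{MN+T}/\sqrt{NT}\asymp\delta\zeta_{NT}$. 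Your trace-duality route (rank$(L-L_0)\le 2r$, $\|A\|_*\le\sqrt{2r}\,\|A\|_F$, plus a $\sqrt{N}+\sqrt{T}$ operator-norm bound on the bounded, independent-entry matrix $\Xi_m$) recovers the same order for $\mathcal{L}$, and this style of argument is used elsewhere in the paper (compare the bound on $B_{2t,j}$ in the proof of Lemma \ref{lem.expansion}). So the two approaches differ, but the linear piece alone is not enough.

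The genuine gap is in your treatment of $\mathcal{Q}(\theta)$. The deterministic bound $\sup_{\theta\in\Theta(\delta,d_1)}|\mathcal{Q}(\theta)|\le 2C\delta^2/h$ is only dominated by $\delta\zeta_{NT}$ when $\delta\lesssim h\zeta_{NT}$, and your claim that this restriction is harmless misreads how the lemma is used. In the proof of Theorem \ref{thm.roc}, the parameter space is partitioned into shells $S_j=\{\theta:2^{j-1}<d_1(\theta,\theta_0)/\zeta_{NT}\le 2^j\}$, and Lemma \ref{lem.psi2} must be invoked with $\delta=2^j\zeta_{NT}$ for all $j$ up to $\log_2(\delta_0/\zeta_{NT})$, where $\delta_0$ is the fixed constant supplied by the initial consistency of Lemma \ref{lem.avgcons} — so $\delta$ ranges over an entire interval up to a constant, not just near $\zeta_{NT}$. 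If the bound were $\delta\zeta_{NT}+\delta^2/h$ instead of $\delta\zeta_{NT}$, each shell's Markov estimate would pick up an additive $1/h$ term that does not decay in $j$, and summing over $O(\log N)$ shells would contribute $\log N/h\to\infty$, killing the peeling argument. Nor can the quadratic piece be rescued by the same low-rank trick: after centering $\hat R^{(2)}_{h,\tau_m,it}$ around its mean, the resulting matrix $\Xi_m^{(2)}$ has entries of typical size $1/\sqrt{h}$, and the Hadamard square $(L(\tau_m)-L_0(\tau_m))^{\odot 2}$, while still of bounded rank $O(r^2)$, no longer gains a full extra power of $\delta$; the net bound one obtains is at best of order $\delta\zeta_{NT}/\sqrt{h}$, again off by a power of $h$. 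The root of the difficulty is that any Taylor expansion introduces $\hat R^{(2)}_{h,\tau}\sim h^{-1}$, whereas the paper's chaining argument never differentiates past the first order and only uses boundedness of $\hat R^{(1)}_{h,\tau}$. To complete a proof along your lines you would still need a genuine uniform concentration argument for the remainder — at which point the paper's direct chaining on $\hat S_h$ is the cleaner move.
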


\begin{lem}\label{lem.uniformcons}
Under Assumptions \ref{assum.foundation} to \ref{assum.kernel}, 
\begin{align*}
   \max_{i,m}\left\|\hat{\lambda}_{i}(\tau_{m})-H_{NT,1}^{-1}\lambda_{0,i}(\tau_{m})\right\|_{F}=o_{p}\left(1\right),\ \ \max_{t}\left\|\hat{f}_{t}-H_{NT,1}'f_{0,t}\right\|_{F}=&o_{p}\left(1\right).
\end{align*}
\end{lem}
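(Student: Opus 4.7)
The plan is to combine the pointwise first-order conditions that $\hat{f}_t$ and $\hat{\lambda}_i(\tau_m)$ satisfy with a mean value expansion around the rotated truth, exploiting the average-norm consistency already implicit in Lemmas \ref{lem.avgcons} and \ref{lem.metrics}. By step 2.4 of Algorithm \ref{alg.1} and the subsequent remark, $\hat{f}_t = \arg\min_{f\in\mathcal{B}^r}\hat{R}_{h,t}(f;\hat{\Lambda}(\cdot))$ and $\hat{\lambda}_i(\tau_m) = \arg\min_{\lambda\in\mathcal{B}^r}\hat{R}_{h,i,\tau_m}(\lambda;\hat{F})$, so under Assumption \ref{assum.interior} each minimizer lies in the interior of $\mathcal{B}^r$ with probability tending to one and satisfies its own FOC. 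Lemmas \ref{lem.avgcons} and \ref{lem.metrics} moreover yield
\[
\frac{1}{MN}\sum_{m,i}\bigl\|\hat{\lambda}_i(\tau_m)-H_{NT,1}^{-1}\lambda_{0,i}(\tau_m)\bigr\|^2 + \frac{1}{T}\sum_t \bigl\|\hat{f}_t - H_{NT,1}'f_{0,t}\bigr\|^2 = o_p(1),
\]
which I will invoke repeatedly. I focus on the bound for $\hat{f}_t$; the bound for $\hat{\lambda}_i(\tau_m)$ follows by the symmetric argument with the roles of $t$ and $(i,m)$ swapped. Applying the mean value theorem in $f$ to the FOC for $\hat{f}_t$ around $H_{NT,1}'f_{0,t}$ gives $\hat{A}_t\bigl(\hat{f}_t - H_{NT,1}'f_{0,t}\bigr) = -\hat{g}_t$, where $\hat{A}_t \coloneqq \frac{1}{MN}\sum_{m,i}\hat{R}^{(2)}_{h,\tau_m}(\bar{c}_{m,i,t};Y_{it})\hat{\lambda}_i(\tau_m)\hat{\lambda}_i(\tau_m)'$, $\hat{g}_t \coloneqq \frac{1}{MN}\sum_{m,i}\hat{R}^{(1)}_{h,\tau_m}(\hat{\lambda}_i(\tau_m)'H_{NT,1}'f_{0,t};Y_{it})\hat{\lambda}_i(\tau_m)$, and $\bar{c}_{m,i,t}$ lies between $\hat{\lambda}_i(\tau_m)'\hat{f}_t$ and $\hat{\lambda}_i(\tau_m)'H_{NT,1}'f_{0,t}$. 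The task reduces to showing $\min_t\lambda_{\min}(\hat{A}_t) \geq c > 0$ and $\max_t\|\hat{g}_t\| = o_p(1)$, both with probability tending to one.

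For the Hessian, I replace $\hat{A}_t$ by its ``full-truth'' counterpart $\tilde{A}_t \coloneqq \frac{1}{MN}\sum_{m,i}\hat{R}^{(2)}_{h,\tau_m}(\lambda_{0,i}'(\tau_m)f_{0,t};Y_{it})\lambda_{0,i}(\tau_m)\lambda_{0,i}(\tau_m)'$. The substitution error $\hat{A}_t-\tilde{A}_t$ is $o_p(1)$ uniformly in $t$ by splitting $\hat{R}^{(2)}$ into its conditional mean (of order $O(1)$ by Lemma \ref{lem.nonpara}-(ii)) and a centered fluctuation: the mean contribution is $O(1)$ times the $\ell^1$-average loading/factor error, itself $o_p(1)$ by Cauchy--Schwarz and average consistency. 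Lemma \ref{lem.nonpara}-(ii) gives $\mathbb{E}[\tilde{A}_t\mid F_0] = \frac{1}{MN}\sum_{m,i}\textsf{f}_{\tau_m,it}(0)\lambda_{0,i}(\tau_m)\lambda_{0,i}(\tau_m)' + O(h^\gamma)$, which by Assumption \ref{assum.density}-(i) and Lemma \ref{lem.discrete} has eigenvalues bounded below. The per-summand variance bound $O(1/h)$ from Lemma \ref{lem.nonpara}-(iii), Assumption \ref{assum.iid}, Bernstein's inequality, and a union bound over $t$ yield $\max_t\|\tilde{A}_t - \mathbb{E}[\tilde{A}_t\mid F_0]\| = O_p\bigl(\sqrt{\log T/(MNh)}\bigr) = o_p(1)$ since $Nh/\log T\to\infty$ under Assumption \ref{assum.kernel}-(ii). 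Combining, $\min_t\lambda_{\min}(\hat{A}_t) \geq c > 0$ with probability tending to one.

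For the gradient, decompose $\hat{g}_t = \hat{g}_t^{(1)} + \hat{g}_t^{(2)}$, where $\hat{g}_t^{(1)} \coloneqq \frac{1}{MN}\sum_{m,i}\hat{R}^{(1)}_{h,\tau_m}(\lambda_{0,i}'(\tau_m)f_{0,t};Y_{it})H_{NT,1}^{-1}\lambda_{0,i}(\tau_m)$ is the gradient at the fully rotated truth and $\hat{g}_t^{(2)}$ collects the remainder. Conditional on $F_0$, $\hat{g}_t^{(1)}$ is an $i$-average of independent (Assumption \ref{assum.iid}), uniformly bounded (Lemma \ref{lem.nonpara}-(i)), mean-$O(h^\gamma)$ (Lemma \ref{lem.nonpara}-(ii)) random vectors; Hoeffding's inequality plus a union bound over $t$ gives $\max_t\|\hat{g}_t^{(1)}\| = O_p(\sqrt{\log T/N}) + O(h^\gamma) = o_p(1)$ under $N\asymp T$ and Assumption \ref{assum.kernel}-(ii). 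The remainder $\hat{g}_t^{(2)}$ splits into a weight-substitution piece bounded by $|\hat{R}^{(1)}|\lesssim 1$ times the Cauchy--Schwarz control $\sqrt{\frac{1}{MN}\sum_{m,i}\|\hat{\lambda}_i(\tau_m)-H_{NT,1}^{-1}\lambda_{0,i}(\tau_m)\|^2} = o_p(1)$, and an argument-substitution piece obtained by a further mean value step in $\lambda$ in which $\hat{R}^{(2)}$ is split into its $O(1)$ conditional mean (whose contribution is bounded by the $\ell^1$-average loading error and is $o_p(1)$) and a centered fluctuation part.

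The main obstacle will be this last centered-fluctuation part, together with the analogous piece of $\hat{A}_t - \tilde{A}_t$: the coupling between the data-dependent weights $\hat{\lambda}_i(\tau_m)$ and the kernel-derivative factor $\hat{R}^{(2)}_{h,\tau_m}$ blocks a direct Hoeffding/Bernstein argument. The uniform-in-$t$ control has to come from an empirical-process bound restricted to the shrinking $d_2$-neighborhood of $\theta_0$ guaranteed by average consistency, in the spirit of the proof of Lemma \ref{lem.psi2}. Once this is resolved, inverting $\hat{A}_t$ in the displayed expansion delivers $\max_t\|\hat{f}_t - H_{NT,1}'f_{0,t}\| = o_p(1)$, and the companion bound $\max_{i,m}\|\hat{\lambda}_i(\tau_m) - H_{NT,1}^{-1}\lambda_{0,i}(\tau_m)\| = o_p(1)$ follows from the symmetric FOC-based expansion.
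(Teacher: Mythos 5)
Your route is genuinely different from the paper's, and it has a real unresolved gap at exactly the spot you flag.

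The paper does not go through the empirical first-order conditions at all. Instead it proves uniform consistency of $\hat f_t$ by a standard M-estimation comparison of \emph{criterion values}: it defines $\Delta_{\hat R_h,t}(f;\Lambda(\cdot))$ and its conditional expectation $\bar\Delta_{\hat R_h,t}$, Taylor-expands $\bar\Delta_{\hat R_h,t}(\hat f_t;\Lambda_0(\cdot)H_{NT,1}^{\prime -1})$ to second order around $H_{NT,1}'f_{0,t}$ so that the ``Hessian'' appearing is the \emph{expected} second derivative $\mathbb{E}\hat R^{(2)}_{h,\tau_m}$ — a deterministic quantity equal to the conditional density up to $O(h^\gamma)$ and hence uniformly bounded below via Assumption \ref{assum.density}-(i) and Lemma \ref{lem.nonpara}-(ii) — and then bounds $\|\hat f_t - H_{NT,1}'f_{0,t}\|_F^2$ by $\sup_f|\Delta_{\hat R_h,t}(f;\Lambda_0 H_{NT,1}^{\prime -1})-\bar\Delta_{\hat R_h,t}(f;\Lambda_0 H_{NT,1}^{\prime -1})|$ plus terms controlled by average consistency (Theorem \ref{thm.roc}-(i)) and the minimization property. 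The empirical-process control it needs is only for the centered \emph{criterion} $\hat R_{h,\tau_m}$, which is bounded and $O(1)$-Lipschitz, exactly what Lemma \ref{lem.psi2}-style Hoeffding arguments handle.

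Your FOC/mean-value route instead forces you to control the empirical Hessian $\hat A_t$ at a data-dependent intermediate point and to handle a second-derivative process $\hat R^{(2)}_{h,\tau_m}$ whose amplitude is $O(1/h)$, with the kernel evaluated at an argument coupled to $\hat\lambda_i(\tau_m)$ and $\hat f_t$. You correctly identify that a direct Hoeffding/Bernstein bound fails here, but you leave it unresolved with ``has to come from an empirical-process bound \dots\ in the spirit of Lemma \ref{lem.psi2}.'' That gesture does not close the gap: Lemma \ref{lem.psi2} bounds the centered criterion $\hat S_h(\theta)$, not the centered $\hat R^{(2)}$-process, and the latter's $\psi_2$-norm carries an extra $1/h$ factor. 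You would need to build the uniform-in-$t$ entropy bound for that process explicitly, show the resulting rate $O_p(\zeta_{NT}/h)$ is $o_p(1)$ under Assumption \ref{assum.kernel}-(ii), and also verify that the mean-value intermediate point $\bar c_{m,i,t}$ stays in the compact set where the conditional mean of $\hat R^{(2)}$ is bounded below even when an individual $\hat\lambda_i(\tau_m)$ deviates by $O_p(\sqrt N\zeta_{NT})$ (which Theorem \ref{thm.roc}-(i) alone does not rule out). None of these issues arise in the paper's argument, because it never differentiates the empirical objective — only the expected one. In short, what you propose is closer to the mechanics of Theorem \ref{thm.roc}-(ii) (which \emph{uses} Lemma \ref{lem.uniformcons} as an input), and turning it into a self-contained proof of the lemma requires substantial additional work that you have not supplied.
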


\begin{proof}[Proof of Theorem \ref{thm.roc}]
We first prove Theorem \ref{thm.roc}-(i).
By $h^{\gamma/2}=o(\zeta_{NT})$ under Assumption \ref{assum.kernel}, partition the parameter space $\Theta\subset\mathbb{R}^{(MN+T)r}$ into shells $S_{j}\coloneqq\{\theta\in \Theta: 2^{j-1}<d_{1}(\theta,\theta_{0})/\zeta_{NT}\leq 2^{j}\}$. Then by Lemmas \ref{lem.avgcons} and \ref{lem.psi2}, 
we can follow exactly the same steps as the proof of Lemma S.3 in \cite{chen2021quantile} and show that
\begin{equation*}
  \sqrt{\sum_{m=1}^{M}\frac{1}{MNT}\left\|\hat{L}\left(\tau_{m}\right)-L_{0}\left(\tau_{m}\right)\right\|_{F}^{2}}=O_{p}\left(\zeta_{NT}\right)+O\left(h^{\gamma/2}\right).
\end{equation*}
Then Lemma \ref{lem.metrics} and diagonality of $H_{NT,1}$ imply that
\begin{align*}
\sqrt{\frac{1}{MN}\sum_{m=1}^{M}\left\|\hat{\Lambda}(\tau_{m})-\Lambda_{0}(\tau_{m})H_{NT,1}^{-1'}\right\|_{F}^{2}}+\sqrt{\frac{1}{T}\left\|\hat{F}-F_{0}H_{NT,1}\right\|_{F}^{2}}=O_{p}\left(\zeta_{NT}+h^{\gamma/2}\right).
\end{align*}
The desired result is obtained.

Next, we consider the uniform rates. The rates of $\hat{\lambda}_{i}(\tau_{m})$ and $\hat{f}_{t}$ are derived similarly to Lemma S.5 in \cite{chen2021quantile}; the difference is that here we establish uniform rates. Once these rates are obtained, the rate of $\hat{\lambda}'_{i}(\tau_{m})\hat{f}_{t}$ is immediate by the triangle inequality and boundedness of $\hat{\lambda}_{i}(\tau_{m}),\hat{f}_{t},\lambda_{0,i}(\tau_{m})$ and $f_{0,t}$. To save space, we only derive the uniform rate of $\hat{\lambda}_{i}(\tau_{m})$ here for an arbitrary fixed $\tau_{m}$. 

For simplicity, denote $H_{NT,1}'f_{0,t}$ and $H_{NT,1}^{-1}\lambda_{0,i}(\tau_{m})$ by $f_{0,t}^{H}$ and $\lambda_{0,i}^{H}(\tau_{m})$, respectively. Let $\bar{\hat{R}}_{h,\tau}^{(j)}(c;Y_{it})\coloneqq \mathbb{E}\left(\hat{R}_{h,\tau}^{(j)}(c;Y_{it})\right)$ and $\bar{\hat{R}}_{h,\tau_{m},it}^{(j)}\coloneqq \mathbb{E}\left(\hat{R}_{h,\tau}^{(j)}(\lambda_{0,i}'(\tau_{m})f_{0,t};Y_{it})\right)$, $j=1,2,3$, where the expectation is taken with respect to $Y_{it}$. For any $i$, expanding $\sum_{t}\bar{\hat{R}}_{h,\tau_{m}}^{(1)}(\hat{\lambda}_{i}'(\tau_{m})\hat{f}_{t};Y_{it})\hat{f}_{t}/T$,\small
\begin{align}
&\frac{1}{T}\sum_{t}\bar{\hat{R}}_{h,\tau_{m}}^{(1)}\left(\hat{\lambda}_{i}'(\tau_{m})\hat{f}_{t};Y_{it}\right)\hat{f}_{t}\notag\\
=&\frac{1}{T}\sum_{t}\bar{\hat{R}}_{h,\tau_{m}}^{(1)}\left(\lambda_{0,i}^{H'}(\tau_{m})\hat{f}_{t};Y_{it}\right)\hat{f}_{t}+\frac{1}{T}\sum_{t}\bar{\hat{R}}_{h,\tau_{m}}^{(2)}\left(\lambda_{0,i}^{H'}(\tau_{m})\hat{f}_{t};Y_{it}\right)\hat{f}_{t}\hat{f}'_{t}\left(\hat{\lambda}_{i}(\tau_{m})-\lambda^{H}_{0,i}(\tau_{m})\right)\notag\\
&+\frac{0.5}{T}\sum_{t}\bar{\hat{R}}_{h,\tau_{m}}^{(3)}\left(\lambda_{i}^{*'}(\tau_{m})\hat{f}_{t};Y_{it}\right)\hat{f}_{t}\left[\hat{f}_{t}'\left(\hat{\lambda}_{i}(\tau_{m})-\lambda^{H}_{0,i}(\tau_{m})\right)\right]^{2}\notag\\
=&\underbrace{\frac{1}{T}\sum_{t}\bar{\hat{R}}_{h,\tau_{m},it}^{(1)}f^{H}_{0,t}}_{A_{1mi}}+\underbrace{\frac{1}{T}\sum_{t}\bar{\hat{R}}_{h,\tau_{m}}^{(1)}\left(\lambda_{i}^{H'}(\tau_{m})f^{*}_{t};Y_{it}\right)\left(\hat{f}_{t}-f^{H}_{0,t}\right)}_{A_{2mi}}\notag\\
&+\underbrace{\frac{1}{T}\sum_{t}\bar{\hat{R}}_{h,\tau_{m}}^{(2)}\left(\lambda_{0,i}^{H'}(\tau_{m})f^{*}_{t};Y_{it}\right)f^{*}_{t}\lambda^{H'}_{0,i}(\tau_{m})\left(\hat{f}_{t}-f^{H}_{0,t}\right)}_{A_{3mi}}+\underbrace{\frac{1}{T}\sum_{t}\bar{\hat{R}}_{h,\tau_{m},it}^{(2)}\hat{f}_{t}\hat{f}_{t}'\left(\hat{\lambda}_{i}(\tau_{m})-\lambda^{H}_{0,i}(\tau_{m})\right)}_{A_{4mi}}\notag\\
&+\underbrace{\frac{1}{T}\sum_{t}\bar{\hat{R}}_{h,\tau_{m}}^{(3)}\left(\lambda_{0,i}^{H'}(\tau_{m})f^{**}_{0,t};Y_{it}\right)\hat{f}_{t}\hat{f}_{t}'\left(\hat{\lambda}_{i}(\tau_{m})-\lambda^{H}_{0,i}(\tau_{m})\right)\lambda_{0,i}^{H'}(\tau_{m})\left(\hat{f}_{t}-f_{0,i}^{H}\right)}_{A_{5mi}} \notag\\
&+\underbrace{\frac{0.5}{T}\sum_{t}\bar{\hat{R}}_{h,\tau_{m}}^{(3)}\left(\lambda_{i}^{*'}(\tau_{m})\hat{f}_{t};Y_{it}\right)\hat{f}_{t}\left[\hat{f}_{t}'\left(\hat{\lambda}_{i}(\tau_{m})-\lambda^{H}_{0,i}(\tau_{m})\right)\right]^{2}}_{A_{6mi}},\label{eq.rate.expansion}
\end{align}
\normalsize
where $\lambda_{i}^{*}(\tau_{m})$ lies between $\hat{\lambda}_{i}(\tau_{m})$ and $\lambda^{H}_{0,t}(\tau_{m})$ and $f^{*}_{t}$ and $f^{**}_{t}$ lie between $\hat{f}_{t}$ and $f^{H}_{0,t}$. 

By Lemma \ref{lem.nonpara}-(ii), $A_{1mi}=O\left(h^{\gamma}\right)$ uniformly in $m$ and $i$. By the uniform boundedness of $\hat{R}_{h,\tau_{m}}^{(1)}(\cdot;Y_{it})$ and by Theorem \ref{thm.roc}-(i), $A_{2mi}=O_{p}\left(\zeta_{NT}\right)$ uniformly in $m$ and $i$. Lemma \ref{lem.nonpara}-(ii), uniform boundedness of $\textsf{f}_{\tau_{m},it}(\cdot)$ and Theorem \ref{thm.roc}-(i) imply that $A_{3mi}=O_{p}(\zeta_{NT})+O(h^{\gamma})$ uniformly in $t$. For $A_{5mi}$ and $A_{6mi}$, they are both $o_{p}(1)\cdot(\hat{\lambda}_{i}(\tau_{m})-\lambda^{H}_{0,i}(\tau_{m}))$ by Theorem \ref{thm.roc}-(i) and Lemma \ref{lem.uniformcons} where the term $o_{p}(1)$ is uniform in $m$ and $i$.

Finally, consider $A_{4mi}$. Let $Q_{\Lambda,mi}\coloneqq \sum_{t=1}^{T}\textsf{f}_{\tau_{m},it}(0)f^{H}_{0,t}f_{0,t}^{H'}/T$. By uniform consistency of $\hat{f}_{t}$ and Lemma \ref{lem.nonpara}-(ii),
\[
  A_{4mi}=\left(Q_{\Lambda,mi}+o_{p}(1)\right)\left(\hat{\lambda}_{i}(\tau_{m})-\lambda^{H}_{0,i}(\tau_{m})\right).
\]
Hence, equation \eqref{eq.rate.expansion} is now written as:
\begin{equation}\label{eq.rate.expansion.simplified}
  \left(Q_{\Lambda,,mi}+o_{p}\left(1\right)\right)\left(\hat{\lambda}_{i}(\tau_{m})-\lambda^{H}_{0,i}(\tau_{m})\right)=\frac{1}{T}\sum_{t}\bar{\hat{R}}_{h,\tau_{m}}^{(1)}\left(\hat{\lambda}'_{i}(\tau_{m})\hat{f}_{t}';Y_{it}\right)\hat{f}_{t}+O_{p}\left(\zeta_{NT}\right).
\end{equation}
For the right-hand side, note that we have the first order condition $\sum_{t}\hat{R}^{(1)}_{h,\tau_{m}}\left(\hat{\lambda}_{i}'(\tau_{m})\hat{f}_{t};Y_{it}\right)\hat{f}_{t}/T=0$. So we have\footnotesize
\begin{align*}
&\frac{1}{T}\sum_{t}\bar{\hat{R}}_{h,\tau_{m}}^{(1)}\left(\hat{\lambda}_{i}'(\tau_{m})\hat{f}_{t};Y_{it}\right)\hat{f}_{t}\\
=&-\frac{1}{T}\sum_{t}\left(\hat{R}_{h,\tau_{m},it}^{(1)}-\bar{\hat{R}}_{h,\tau_{m},it}^{(1)}\right)\hat{f}_{t}\\
&-\frac{1}{T}\sum_{t}\Bigg\{\left[\hat{R}_{h,\tau_{m}}^{(1)}\left(\hat{\lambda}_{i}'(\tau_{m})\hat{f}_{t};Y_{it}\right)-\bar{\hat{R}}_{h,\tau_{m}}^{(1)}\left(\hat{\lambda}_{i}'(\tau_{m})\hat{f}_{t};Y_{it}\right)\right]-\left(\hat{R}_{h,\tau_{m},it}^{(1)}-\bar{\hat{R}}_{h,\tau_{m},it}^{(1)}\right)\Bigg\}\hat{f}_{t}\\
=&-\frac{1}{T}\sum_{t}\left(\hat{R}_{h,\tau_{m},it}^{(1)}-\bar{\hat{R}}_{h,\tau_{m},it}^{(1)}\right)f_{0,t}^{H}\\
&-\frac{1}{T}\sum_{t}\Bigg\{\left[\hat{R}_{h,\tau_{m}}^{(1)}\left(\hat{\lambda}_{i}'(\tau_{m})\hat{f}_{t};Y_{it}\right)-\bar{\hat{R}}_{h,\tau_{m}}^{(1)}\left(\hat{\lambda}_{i}'(\tau_{m})\hat{f}_{t};Y_{it}\right)\right]-\left(\hat{R}_{h,\tau_{m},it}^{(1)}-\bar{\hat{R}}_{h,\tau_{m},it}^{(1)}\right)\Bigg\}f^{H}_{0,t}+O_{p}\left(\zeta_{NT}\right) \\
=&-\underbrace{\frac{1}{T}\sum_{t}\left(\hat{R}_{h,\tau_{m},it}^{(1)}-\bar{\hat{R}}_{h,\tau_{m},it}^{(1)}\right)f_{0,t}^{H}}_{B_{1mi}}\\
&-\frac{1}{T}\sum_{t}\underbrace{\Bigg\{\left[\hat{R}_{h,\tau_{m}}^{(1)}\left(\hat{\lambda}_{i}'(\tau_{m})\hat{f}_{t};Y_{it}\right)-\bar{\hat{R}}_{h,\tau_{m}}^{(1)}\left(\hat{\lambda}_{i}'(\tau_{m})\hat{f}_{t};Y_{it}\right)\right]-\left[\hat{R}_{h,\tau_{m}}^{(1)}\left(\hat{\lambda}_{i}'(\tau_{m})f_{0,t}^{H};Y_{it}\right) -\bar{\hat{R}}_{h,\tau_{m}}^{(1)}\left(\hat{\lambda}_{i}'(\tau_{m})f_{0,t}^{H};Y_{it}\right) \right]\Bigg\}}_{B_{2mit}}\\
&\cdot f^{H}_{0,t}\\
&-\underbrace{\frac{1}{T}\sum_{t}\Bigg\{\left[\hat{R}^{(1)}_{h,\tau_{m}}\left(\hat{\lambda}'_{0}(\tau_{m})f^{H}_{0,t};Y_{it}\right) -\bar{\hat{R}}^{(1)}_{h,\tau_{m}}\left(\hat{\lambda}'_{0}(\tau_{m})f^{H}_{0,t};Y_{it}\right) \right]-\left(\hat{R}^{(1)}_{h,\tau_{m},it}-\bar{\hat{R}}^{(1)}_{h,\tau_{m},it}\right)\Bigg\}f^{H}_{0,t}}_{B_{3mi}}\\
&+O_{p}\left(\zeta_{NT}\right),
\end{align*}
\normalsize
where $O_{p}(\zeta_{NT})$ is uniform in $m,i$ by the boundedness of $\hat{R}^{(1)}_{h,\tau_{m},it}$.

For $B_{1mi}$, since $H_{NT,1}\in\mathcal{H}$ which only consists of $r^{2}$ elements, we can show that it is $O_{p}(\sqrt{\log N}\zeta_{NT})$ uniformly in $m,i$ by the Hoeffding's inequality under the boundedness of $\hat{R}^{(1)}_{h,\tau_{m},it}$.

For $\sum_{t}B_{2mit}f_{0,t}^{H}/T$, by the mean value theorem, \small
\begin{align*}
&\max_{m,i}\left|\frac{1}{T}\sum_{t}B_{2mit}f_{0,t}^{H}\right| \\
=&\max_{m,i}\left|\frac{1}{T}\sum_{t}\left[\hat{R}_{h,\tau_{m}}^{(2)}\left(\hat{\lambda}_{i}'(\tau_{m}){f}^{***}_{t};Y_{it}\right)-\bar{\hat{R}}_{h,\tau_{m}}^{(2)}\left(\hat{\lambda}_{i}'(\tau_{m}){f}^{***}_{t};Y_{it}\right)\right]f^{H}_{0,t}\hat{\lambda}_{i}'(\tau_{m})\left(\hat{f}_{t}-f^{H}_{0,t}\right)\right|= O_{p}\left(\frac{\zeta_{NT}}{h}\right),
\end{align*}
\normalsize
where $f^{***}_{t}$ lies between $f_{0,t}^{H}$ and $\hat{f}_{t}$. The last equality is by the uniform boundedness of $h\hat{R}^{(2)}_{h,\tau_{m}}(\cdot;Y_{it})$ and $H_{NT,1}$ and by Theorem \ref{thm.roc}-(i).

For $B_{3mi}$, similarly,\footnotesize
\begin{align*}
B_{3mi}=\underbrace{\frac{1}{T}\sum_{t}\left[\hat{R}^{(2)}_{h,\tau_{m}}\left(\lambda_{i}^{**'}(\tau_{m})f_{0,t}^{H};Y_{it}\right) -\bar{\hat{R}}^{(2)}_{h,\tau_{m}}\left(\lambda_{i}^{**'}(\tau_{m})f_{0,t}^{H};Y_{it}\right) \right]f_{0,t}^{H}f_{0,t}^{H'}}_{C_{mi}}\cdot\left(\hat{\lambda}_{i}(\tau_{m})-\lambda^{H}_{0,i}(\tau_{m})\right).
\end{align*}
\normalsize
where $\lambda_{i}^{**}(\tau_{m})$ lies between $\lambda_{0,i}^{H}(\tau_{m})$ and $\hat{\lambda}_{i}(\tau_{m})$. Note that
\begin{align*}
\max_{m,i}\left|C_{t}\right|\leq&\max_{m,i,H\in\mathcal{H}}\sup_{\lambda\in\mathcal{B}^{r}}\left|\frac{1}{T}\sum_{t=1}^{T}\left(\hat{R}^{(2)}_{h,\tau_{m}}\left(\lambda'f_{0,t};Y_{it}\right) -\bar{\hat{R}}^{(2)}_{h,\tau_{m}}\left(\lambda'f_{0,t};Y_{it}\right) \right)Hf_{0,t}f_{0,t}'H\right|=o_{p}(1),
\end{align*}
where the equality is by the standard results for kernel density estimation and by the uniform boundedness of $f_{0,t}$. Substitute $B_{1mi}$ to $B_{3mi}$ to equation \eqref{eq.rate.expansion.simplified}, and we have
\begin{equation*}\label{eq.unif.rate.final}
  \max_{m,i}\left[\left(Q_{\Lambda,mi}+o_{p}\left(1\right)\right)\left(\hat{\lambda}_{i}(\tau_{m})-\lambda^{H}_{0,i}(\tau_{m})\right)\right]=O_{p}\left(\frac{\zeta_{NT}}{h}\right).
\end{equation*}

Finally, Assumptions \ref{assum.normalization} and \ref{assum.density} imply that all the eigenvalues of $Q_{\Lambda,mi}$ are bounded away from 0 uniformly in $m,i$ and in realizations of $H_{NT,1}$ for sufficiently large $N$ and $T$. 
The desired result thus follows.
\end{proof}

\section{Proof of Theorem \ref{thm.firststage}}
Note that
\begin{align*}
\max_{m,i,t}\left|\widehat{\frac{1}{\textsf{f}_{\tau_{m},it}(0)}}-\frac{1}{\textsf{f}_{\tau_{m},it}(0)}\right|=\max_{a,b\in\{1,2\}}\max_{i\in\mathcal{N}_{a},t\in\mathcal{N}_{b},m}\left|\widehat{\frac{1}{\textsf{f}_{\tau_{m},it}(0)}}-\frac{1}{\textsf{f}_{\tau_{m},it}(0)}\right|.
\end{align*}
For simplicity, we only consider the uniform rate for $(i,t)\in\mathcal{N}_{2}\times\mathcal{T}_{2}$. The other cases follow exactly the same argument and have the same rate. Then the result follows.

Note that $F_{0}/\sqrt{T}$ is in general no longer the eigenvector matrix of the top submatrix of $\sum_{m}F_{0}'\sum_{i\in\mathcal{N}_{1}}\lambda_{0,i}(\tau_{m})\lambda_{0,i}'(\tau_{m})F_{0}/MNT$. However, there exists a full-rank $r\times r$ matrix $H^{eigen}$ such that $F_{0}H^{eigen}/\sqrt{T}$ is the eigenvector matrix because $\sum_{m}\sum_{i\in\mathcal{N}_{1}}\lambda_{0,i}(\tau_{m})\lambda_{0,i}'(\tau_{m})/MN$ is full-rank by Assumption \ref{assum.subsample}. Correspondingly, $\sum_{m}\sum_{i\in\mathcal{N}_{1}}(H^{eigen})^{-1}\lambda_{0,i}(\tau_{m})\lambda_{0,i}'(\tau_{m})(H^{eigen'})^{-1}/MN$, being the eigenvalue matrix, is diagonal with distinct diagonal entries bounded away from 0. From the proof of Theorem \ref{thm.roc}, we can only consistently estimate $F_{0}H^{eigen}$ (up to column signs). Nevertheless, we note that the density to be estimated is invariant to full-rank transformations of $F_{0}$, i.e., for any full-rank $r\times r$ matrix $H$, 
\begin{align}
\frac{1}{\textsf{f}_{\tau,it}(0)}=\lambda^{(1)'}_{0,i}(\tau)f_{0,t}=\left(\lambda'_{0,i}(\tau)(H')^{-1}\right)^{(1)}H'f_{0,t}.\label{eq.density.invariant}
\end{align}
Therefore, we can without loss of generality derive the rate with respect to $F_{0}H^{eigen}$ and correspondingly $\Lambda_{0}(\tau_{m})(H^{eigen'})^{-1}$. With slight abuse of notation, still denote them by $F_{0}$ and $\Lambda_{0}(\tau_{m})$, respectively. 
Then by Assumption \ref{assum.subsample} and following the proof of Theorem \ref{thm.roc}, there exists a diagonal matrix $H_{NT}^{top}\in\mathcal{H}$, where recall that we defined $\mathcal{H}$ in Appendix \ref{appx.roc} as the set of diagonal matrices whose diagonal entries only contain $1$ and $-1$, such that
\begin{align*}
\max_{t=1,\ldots,T}\left\|\hat{f}^{top}_{t}-H_{NT}^{top'}f_{0,t}\right\|_{F}=&O_{p}\left(\frac{\zeta_{NT}}{h}\right),\\
\max_{\substack{i=1,\ldots,N\\m=1,\ldots,M\\c=-2,-1,1,2}}\left\|\hat{\lambda}_{i}^{(t,l)}(\tau_{m}+ch_{d})-\left(H_{NT}^{top}\right)^{-1}\lambda_{0,i}(\tau_{m}+ch_{d})\right\|_{F}=&O_{p}\left(\frac{\zeta_{NT}}{h}\right).
\end{align*}
Note that for the loading estimator, we have the same result as before even though now the quantile levels considered depend on the sample size via $h_{d}$. The reason is that the total number of quantile levels is still finite, equal to $4M$, and the rate $\zeta_{NT}/h$ is driven by the average rate of the factor estimator as can be seen in the proof of Theorem \ref{thm.roc}, unrelated to the quantile levels. The desired result then follows by the triangle inequality, by \eqref{eq.density.invariant} and by that the five-point difference formula has approximation error $O(h_{d}^{4})$.

\section{Proof of Theorem \ref{thm.dist2}}\label{appx.dist}
Similar as before, let $\tilde{F}_{j}$ and $F_{0,j}$ be the $j$-th column in the $T\times r$ matrices $\tilde{F}$ and $F_{0}$. Let $\tilde{H}_{NT,1}\coloneqq\text{diag}(\text{sgn}(\tilde{F}_{j}'F_{0,j}))$. The proofs of the following lemmas are in the Online Appendix.
\begin{lem}\label{lem.tilde.rate}
Under Assumptions \ref{assum.foundation} to \ref{assum.subsample}, 
\begin{align*}
\frac{1}{T}\left\|\tilde{F}-F_{0}\tilde{H}_{NT,1}\right\|_{F}^{2}=O_{p}\left(\zeta_{NT}^{2}\right),\ \frac{1}{MNT}\sum_{m=1}^{M}\left\|\tilde{\Lambda}(\tau_{m})-\Lambda_{0}(\tau_{m})\tilde{H}_{NT,1}^{'-1}\right\|_{F}^{2}=O_{p}\left(\zeta_{NT}^{2}\right),\\
  \max_{t}\left\|\tilde{f}_{t}-\tilde{H}'_{NT,1}f_{0,t}\right\|_{F}=O_{p}\left(\frac{\zeta_{NT}}{h}\right),\ 
   \max_{m,i}\left\|\tilde{\lambda}_{i}(\tau_{m})-\tilde{H}^{-1}_{NT,1}\lambda_{0,i}(\tau_{m})\right\|_{F}=O_{p}\left(\frac{\zeta_{NT}}{h}\right).
\end{align*}
\end{lem}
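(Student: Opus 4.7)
The plan is to follow the proof of Theorem \ref{thm.roc} closely, treating the estimated inverse densities as a bounded perturbation of the oracle weights $1/\textsf{f}_{\tau_m,it}(0)$. Assumption \ref{assum.density}(i) together with the boundedness of $(\lambda_{0,i}(\tau_m),f_{0,t})$ implies that the oracle weights are uniformly bounded both above and away from zero. I would first establish the stated rates for the \emph{oracle} IDW estimator, defined by replacing $\widehat{1/\textsf{f}_{\tau_m,it}(0)}$ in \eqref{eq.estimator2} by its true value. Because the oracle weights enter as uniformly bounded multipliers, the analogs of Lemmas \ref{lem.nonpara}--\ref{lem.uniformcons} go through with essentially the same arguments: the derivative bounds in Lemma \ref{lem.nonpara} pick up only a bounded constant factor, and the shell-based consistency, metric-equivalence, empirical-process and uniform-consistency arguments carry over with weighted sums. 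The shell argument then yields the average rate $O_p(\zeta_{NT}^2)$ for the oracle estimator under the rotation $\tilde{H}_{NT,1}$, exactly as in Theorem \ref{thm.roc}(i), and the subsequent expansion-and-inversion argument yields the uniform rate $O_p(\zeta_{NT}/h)$ as in Theorem \ref{thm.roc}(ii).

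The second step is to pass from the oracle to the feasible IDW estimator. Writing $\widehat{1/\textsf{f}_{\tau_m,it}(0)}=1/\textsf{f}_{\tau_m,it}(0)+\hat{\delta}_{m,i,t}$, Theorem \ref{thm.firststage} gives $\max_{m,i,t}|\hat{\delta}_{m,i,t}|=O_p(\psi_{NT})$. The feasible objective then splits into the oracle objective plus a perturbation that is $O_p(\psi_{NT})$ times a weighted sum of the same form. In the shell argument, this perturbation contributes at most $O_p(\psi_{NT}\delta)$ to the empirical-process bound in the analog of Lemma \ref{lem.psi2}, while the deterministic mean perturbation is $O_p(\psi_{NT}\delta^2)$. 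Under the bandwidth requirement $\zeta_{NT}/h^5<h_d<h$ assumed in Theorem \ref{thm.dist2}, $\psi_{NT}=\zeta_{NT}/(hh_d)+h_d^4=o(1)$, and after inversion in the shell argument this does not worsen the basic $\zeta_{NT}$ rate. Hence the average rate for $\tilde{F}$ and $\tilde{\Lambda}(\tau_m)$ around $\tilde{H}_{NT,1}$ follows.

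For the uniform rate, I would repeat the expansion leading to \eqref{eq.rate.expansion.simplified} applied to the IDW first-order conditions. A crucial simplification is that inverse-density weighting replaces the leading Hessian $\sum_t \textsf{f}_{\tau_m,it}(0)f_{0,t}^H f_{0,t}^{H\prime}/T$ by the density-free matrix $\sum_t f_{0,t}^H f_{0,t}^{H\prime}/T$, whose eigenvalues are bounded away from zero by $F_0'F_0/T=I_r$. The main obstacle is controlling the cross term arising from $\hat{\delta}_{m,i,t}$ multiplied by the empirical-process increment $\hat{R}^{(1)}_{h,\tau_m,it}-\bar{\hat{R}}^{(1)}_{h,\tau_m,it}$, which are correlated in general. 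However, by the subsample construction in \eqref{eq.density.est.N}, for any fixed quadrant $\mathcal{N}_a\times\mathcal{T}_b$ the weight $\hat{\delta}_{m,i,t}$ is built from data in the complementary three quadrants, so conditioning on $\hat{\delta}_{m,i,t}$ decouples it from the relevant $Y_{it}$; Hoeffding's inequality combined with Theorem \ref{thm.firststage} then yields the required $o_p(\zeta_{NT}/h)$ bound uniformly in $m,i,t$, completing the argument.
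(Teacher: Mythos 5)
Your overall strategy—treat the weights as uniformly bounded perturbations of bounded oracle weights, use the sample-splitting structure for independence—identifies the right ingredients, but the perturbation bookkeeping in your second paragraph has a quantitative gap that would in fact break the rate.

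The claim that the perturbation contributes $O_p(\psi_{NT}\delta)$ to the empirical-process bound is obtained by pairing the uniform bound $\max_{m,i,t}|\hat{\delta}_{m,i,t}|=O_p(\psi_{NT})$ with the Lipschitz increment $O(\delta)$. That gives a \emph{crude} bound on the uncentered perturbation, not on its centered (empirical-process) part. And this crude bound is too large: under the bandwidth conditions $\psi_{NT}\asymp T^{-2/5+4c/5}$ while $\zeta_{NT}\asymp T^{-1/2}$, so $\psi_{NT}\gg \zeta_{NT}$. If the empirical-process bound inflates from $\delta\zeta_{NT}$ to $\delta(\zeta_{NT}+\psi_{NT})\asymp\delta\psi_{NT}$, the shell inversion gives $d_1(\tilde{\theta},\theta_0)=O_p(\psi_{NT})$, which is strictly worse than the $O_p(\zeta_{NT})$ claimed. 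To recover the correct rate you must condition on the per-quadrant weights $W^{(a,b)}$ and their concentration event $\Omega^{(a,b)}$ \emph{already at this stage}: conditional on $W^{(a,b)}$, the weights are deterministic and (on $\Omega^{(a,b)}$) uniformly bounded, so Hoeffding gives a conditional $\psi_2$-norm $\lesssim\delta/\sqrt{NT}$ for each quadrant's empirical process—exactly as in the unweighted Lemma \ref{lem.psi2}, with no $\psi_{NT}$ inflation at all. You invoke precisely this decoupling in your third paragraph for the uniform rate, but it is equally essential for the average rate.

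The paper's proof does not pass through the oracle estimator. It partitions the objective into the four quadrant pieces $\tilde{R}_h^{(a,b)}$, centers each by the \emph{conditional} mean $\mathbb{E}(\cdot\,|\,W^{(a,b)},\Omega^{(a,b)})$, establishes a quadrant-wise second-derivative lower bound (using boundedness of $\widehat{1/\textsf{f}}$ on $\Omega^{(a,b)}$ together with the conditional factorization in (S.B.1)), and then applies the same shell machinery as in Theorem \ref{thm.roc}, summing the four quadrants and using $\sum_{a,b}\Delta^{(a,b)}_{\tilde R_h}(\tilde\theta)\le 0$. This is cleaner than the oracle-plus-perturbation route because the estimated weights never appear as a separate additive term that must be shown small relative to $\zeta_{NT}$; they are simply absorbed as bounded deterministic multipliers after conditioning. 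If you want to keep your two-step structure, you must apply the same conditional centering per quadrant in paragraph 2; the centered perturbation is then $O_p(\psi_{NT}\delta\zeta_{NT})=o_p(\delta\zeta_{NT})$, which is genuinely negligible, but at that point you have reproduced the paper's quadrant-wise argument and the oracle detour is vestigial.
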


\begin{lem}\label{lem.inv}
Under Assumptions \ref{assum.foundation} to \ref{assum.subsample}, for $H_{NT,2}=F_{0}'\tilde{F}/T$, we have
\begin{align}
\text{diag}\left(H_{NT,2}\right)=\tilde{H}_{NT,1}+O_{p}\left(\zeta_{NT}^2\right),\label{eq.H2inv.diag}\\
H_{NT,2}=\tilde{H}_{NT,1}+O_{p}(\zeta_{NT}),\label{eq.H2inv.0}\\
H_{NT,2}H_{NT,2}'=I_{r}+O_{p}\left(\zeta_{NT}^{2}\right),\label{eq.H2inv.1}\\
H_{NT,2}'H_{NT,2}=I_{r}+O_{p}\left(\zeta_{NT}^{2}\right).\label{eq.H2inv.2}
\end{align}
\end{lem}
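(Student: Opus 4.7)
The plan is to derive all four statements from two ingredients: the average rate $\|\tilde{F}-F_{0}\tilde{H}_{NT,1}\|_{F}^{2}/T=O_{p}(\zeta_{NT}^{2})$ established in Lemma \ref{lem.tilde.rate}, and the two normalizations $F_{0}'F_{0}/T=I_{r}$ and $\tilde{F}'\tilde{F}/T=I_{r}$. Write $\tilde{F}=F_{0}\tilde{H}_{NT,1}+E$ with $\|E\|_{F}^{2}/T=O_{p}(\zeta_{NT}^{2})$ and $\|E\|_{\mathrm{op}}\leq\|E\|_{F}=O_{p}(\sqrt{T}\,\zeta_{NT})$; also note $\|F_{0}\|_{\mathrm{op}}=\sqrt{T}$ from the normalization.

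For \eqref{eq.H2inv.0}, substituting into the definition gives $H_{NT,2}=F_{0}'\tilde{F}/T=\tilde{H}_{NT,1}+F_{0}'E/T$, and the second term has Frobenius norm at most $\|F_{0}\|_{\mathrm{op}}\|E\|_{F}/T=O_{p}(\zeta_{NT})$, using that $\tilde{H}_{NT,1}^{2}=I_{r}$. Next, expanding the normalization $I_{r}=\tilde{F}'\tilde{F}/T$ produces
\begin{equation*}
I_{r}=\tilde{H}_{NT,1}'\tilde{H}_{NT,1}+\tilde{H}_{NT,1}'\frac{F_{0}'E}{T}+\frac{E'F_{0}}{T}\tilde{H}_{NT,1}+\frac{E'E}{T},
\end{equation*}
and since $\tilde{H}_{NT,1}'\tilde{H}_{NT,1}=I_{r}$ and $\|E'E/T\|_{F}\leq\|E\|_{F}^{2}/T=O_{p}(\zeta_{NT}^{2})$, this rearranges to the key identity $\tilde{H}_{NT,1}'H_{NT,2}+H_{NT,2}'\tilde{H}_{NT,1}=2I_{r}+O_{p}(\zeta_{NT}^{2})$.

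Set $G\coloneqq\tilde{H}_{NT,1}'H_{NT,2}$ and $\Delta\coloneqq G-I_{r}$. Step one gives $\Delta=O_{p}(\zeta_{NT})$, while the key identity gives $\Delta+\Delta'=O_{p}(\zeta_{NT}^{2})$. Hence the symmetric part of $\Delta$ is $O_{p}(\zeta_{NT}^{2})$, and in particular $\mathrm{diag}(\Delta)=\tfrac{1}{2}\mathrm{diag}(\Delta+\Delta')=O_{p}(\zeta_{NT}^{2})$. Since $\tilde{H}_{NT,1}$ is diagonal with $\pm1$ entries, $\mathrm{diag}(H_{NT,2})_{jj}=(\tilde{H}_{NT,1})_{jj}\,\mathrm{diag}(G)_{jj}=(\tilde{H}_{NT,1})_{jj}+O_{p}(\zeta_{NT}^{2})$, which is \eqref{eq.H2inv.diag}. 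For the last two, using $H_{NT,2}=\tilde{H}_{NT,1}G$,
\begin{equation*}
H_{NT,2}'H_{NT,2}=G'G=I_{r}+\Delta+\Delta'+\Delta'\Delta=I_{r}+O_{p}(\zeta_{NT}^{2}),
\end{equation*}
\begin{equation*}
H_{NT,2}H_{NT,2}'=\tilde{H}_{NT,1}GG'\tilde{H}_{NT,1}'=\tilde{H}_{NT,1}\bigl(I_{r}+\Delta+\Delta'+\Delta\Delta'\bigr)\tilde{H}_{NT,1}'=I_{r}+O_{p}(\zeta_{NT}^{2}),
\end{equation*}
using $\tilde{H}_{NT,1}\tilde{H}_{NT,1}'=I_{r}$ and $\|\Delta\Delta'\|_{F},\|\Delta'\Delta\|_{F}\leq\|\Delta\|_{F}^{2}=O_{p}(\zeta_{NT}^{2})$.

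The main subtlety is the tension between \eqref{eq.H2inv.0}, which is only $O_{p}(\zeta_{NT})$, and the quadratic-rate statements \eqref{eq.H2inv.diag}--\eqref{eq.H2inv.2}: the antisymmetric part of $\Delta$ genuinely remains at the slower rate $\zeta_{NT}$, which is precisely why $H_{NT,2}$ itself is not $\tilde{H}_{NT,1}+O_{p}(\zeta_{NT}^{2})$ in general, only its \emph{diagonal} is. Exploiting the exact normalization to cancel the first-order cross terms in $\tilde{F}'\tilde{F}/T$ is what unlocks the sharp $\zeta_{NT}^{2}$ rate; the rest is algebraic bookkeeping.
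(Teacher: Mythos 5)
Your argument is correct, and it reaches all four claims from exactly the same two ingredients the paper uses: the average rate $\|\tilde F-F_0\tilde H_{NT,1}\|_F^2/T=O_p(\zeta_{NT}^2)$ and the two normalizations $F_0'F_0/T=\tilde F'\tilde F/T=I_r$. Where you depart is in organization. The paper works coordinate by coordinate: it derives the scalar identity $\frac{1}{T}\sum_t\tilde f_{tj}\tilde f_{0,tj}=1-\frac{1}{2T}\sum_t(\tilde f_{tj}-\tilde f_{0,tj})^2$ for diagonal entries, a companion identity for symmetrized off-diagonal sums, then forms $\bar H_{NT,2}=\tilde H_{NT,1}'H_{NT,2}H_{NT,2}'\tilde H_{NT,1}$ and checks its $(j,k)$ entries one case at a time, and finally backs out \eqref{eq.H2inv.2} from \eqref{eq.H2inv.1} by a separate multiplication-and-rearrangement step. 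You instead package the same computation into the single matrix $\Delta:=\tilde H_{NT,1}'H_{NT,2}-I_r$: expanding $\tilde F'\tilde F/T=I_r$ once gives $\Delta+\Delta'=O_p(\zeta_{NT}^2)$ while $\Delta=O_p(\zeta_{NT})$ comes from the direct substitution $H_{NT,2}=\tilde H_{NT,1}+F_0'E/T$, and then \eqref{eq.H2inv.diag}, \eqref{eq.H2inv.1}, \eqref{eq.H2inv.2} all drop out of $\operatorname{diag}(\Delta)=\frac12\operatorname{diag}(\Delta+\Delta')$ and $G'G=I_r+(\Delta+\Delta')+\Delta'\Delta$, $GG'=I_r+(\Delta+\Delta')+\Delta\Delta'$. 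This symmetric/antisymmetric decomposition of $\Delta$ makes the mechanism transparent — the antisymmetric part stays at rate $\zeta_{NT}$, which is exactly why $H_{NT,2}$ itself is only $\zeta_{NT}$-close to a sign matrix while its diagonal and the Gram-type products enjoy the squared rate — and it avoids the entry-by-entry case analysis and the extra rearrangement step the paper needs for \eqref{eq.H2inv.2}. Same mathematics, but your presentation is tighter and makes the structural reason for the rate improvement explicit.
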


Recall that $\eta_{h,\tau_{m},it}=K\left((\lambda_{0,i}'(\tau_{m})f_{0,t}-Y_{it})/h\right)-\mathbb{E}\left[ K\left((\lambda_{0,i}'(\tau_{m})f_{0,t}-Y_{it})/h\right)\right]$.
\begin{lem}\label{lem.expansion}
Under the conditions in Theorem \ref{thm.dist2}, we have
\begin{align*}
  \max_{t=1,\ldots,T}\left\|H_{NT,2}^{-1}\Phi H_{NT,2}^{'-1}\left(\tilde{f}_{t}-H_{NT,2}'f_{0,t}\right)+\frac{1}{MN}\sum_{m=1}^{M}\sum_{i=1}^{N}\frac{\eta_{h,\tau_{m},it}}{\textsf{f}_{\tau_{m},it}(0)}H_{NT,2}^{-1}\lambda_{0,i}(\tau_{m})\right\|_{F}=&o_{p}\left(\frac{1}{\sqrt{N}}\right),\\
    \max_{i=1,\ldots,N;m=1,\ldots,M}\left\|\tilde{\lambda}_{i}(\tau_{m})-H_{NT,2}^{-1}\lambda_{0,t}(\tau_{m})+\frac{1}{T}\sum_{t=1}^{T}\frac{\eta_{h,\tau_{m},it}}{\textsf{f}_{\tau_{m},it}(0)}H_{NT,2}'f_{0,t}\right\|_{F}=&o_{p}\left(\frac{1}{\sqrt{T}}\right).
\end{align*}
\end{lem}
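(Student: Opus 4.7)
The plan is to derive the expansions by starting from the first-order conditions of the IDW-UFA problem and carrying out a second-order Taylor expansion, paying attention to three things: the rotation handling via Lemma \ref{lem.inv}, the cancellation of the conditional density produced by inverse-density weighting, and the sample-splitting decoupling needed for the estimation error in $\widehat{1/\textsf{f}_{\tau_m,it}(0)}$. Because step 2.4 of Algorithm \ref{alg.1} preserves the value of the objective (and the same applies to Algorithm \ref{alg.2}), $\tilde f_t$ satisfies
\begin{equation*}
\frac{1}{MN}\sum_{m=1}^{M}\sum_{i=1}^{N}\widehat{\tfrac{1}{\textsf{f}_{\tau_m,it}(0)}}\,\hat R^{(1)}_{h,\tau_m}\bigl(\tilde\lambda_i'(\tau_m)\tilde f_t;Y_{it}\bigr)\,\tilde\lambda_i(\tau_m)=0.
\end{equation*}
I would expand $\hat R^{(1)}_{h,\tau_m}$ to second order around $\lambda_{0,i}'(\tau_m)f_{0,t}$ and split the argument increment as $(\tilde\lambda_i(\tau_m)-H_{NT,2}^{-1}\lambda_{0,i}(\tau_m))'\tilde f_t+\lambda_{0,i}'(\tau_m)H_{NT,2}^{\prime-1}(\tilde f_t-H_{NT,2}'f_{0,t})+\lambda_{0,i}'(\tau_m)(H_{NT,2}^{\prime-1}H_{NT,2}'-I_r)f_{0,t}$, the last piece being $O_p(\zeta_{NT}^2)$ uniformly by \eqref{eq.H2inv.1}. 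Replacing $\hat R^{(2)}_{h,\tau_m,it}$ by $\textsf{f}_{\tau_m,it}(0)$ up to an $O(h^\gamma)$ bias (Lemma \ref{lem.nonpara}(ii)), the inverse-density weight cancels $\textsf{f}_{\tau_m,it}(0)$, so the leading Hessian collapses to $H_{NT,2}^{-1}\Phi H_{NT,2}^{\prime-1}$ and the leading score to $-(MN)^{-1}\sum_{m,i}\eta_{h,\tau_m,it}/\textsf{f}_{\tau_m,it}(0)\cdot H_{NT,2}^{-1}\lambda_{0,i}(\tau_m)$; this cancellation is the algebraic dividend of inverse-density weighting.

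The remainder then has four kinds of pieces to be controlled uniformly in $t$. Third-order Taylor remainders carry $|\hat R^{(3)}|/h^2\lesssim 1/h^2$ (Lemma \ref{lem.nonpara}(i)) times the squared uniform rates from Lemma \ref{lem.tilde.rate}, and Assumption \ref{assum.kernel}(ii) with $c<1/12$ makes these $o_p(1/\sqrt N)$. Kernel-bias pieces of order $h^\gamma$ are $o(1/\sqrt N)$ because $c>\gamma^{-1}$. The $A_{2t}$-type term of Section \ref{sec.value_of_density}, which was the obstruction for UFA, now factorizes because the cancellation of $\textsf{f}_{\tau_m,it}(0)$ decouples the indices: it becomes the product of a $T$-average of $f_{0,s}(\tilde f_s-H_{NT,2}'f_{0,s})'$, of order $\zeta_{NT}$ by Lemma \ref{lem.tilde.rate}, and an $(MN)$-average of $\eta_{h,\tau_m,it}\lambda_{0,i}(\tau_m)/\textsf{f}_{\tau_m,it}(0)$, of order $\sqrt{\log N}\,\zeta_{NT}$ uniformly in $t$ by Hoeffding under Assumption \ref{assum.iid} and the boundedness of $K$. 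The most delicate piece is the cross term
\begin{equation*}
\frac{1}{MN}\sum_{m=1}^M\sum_{i=1}^N\left(\widehat{\tfrac{1}{\textsf{f}_{\tau_m,it}(0)}}-\tfrac{1}{\textsf{f}_{\tau_m,it}(0)}\right)\eta_{h,\tau_m,it}\,\lambda_{0,i}(\tau_m),
\end{equation*}
which I would handle by splitting the $i$-sum into $\mathcal N_1\cup\mathcal N_2$ and exploiting the subsample assignment $(w,v)$ in \eqref{eq.density.est.N}: for each fixed $t$ and $i$ in a given half, the corresponding $\widehat{1/\textsf{f}_{\tau_m,it}(0)}$ is built from $Y_{jt}$-entries with $j$ in the opposite half, so conditional on those preliminary estimators $\{\eta_{h,\tau_m,it}\}$ is mean zero, bounded, and independent across $i$ by Assumption \ref{assum.iid}. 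A conditional Hoeffding bound combined with Theorem \ref{thm.firststage} then yields $O_p(\psi_{NT}\sqrt{\log(NT)/N})$ uniformly in $t$, which is $o_p(1/\sqrt N)$ under the bandwidth inequality $\zeta_{NT}/h^5<h_d<h$.

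The companion expansion for $\tilde\lambda_i(\tau_m)$ follows from the symmetric argument applied to the FOC $T^{-1}\sum_t\widehat{1/\textsf{f}_{\tau_m,it}(0)}\,\hat R^{(1)}_{h,\tau_m}(\tilde\lambda_i'(\tau_m)\tilde f_t;Y_{it})\,\tilde f_t=0$; its leading Hessian is $T^{-1}\sum_t H_{NT,2}'f_{0,t}f_{0,t}'H_{NT,2}=H_{NT,2}'H_{NT,2}=I_r+O_p(\zeta_{NT}^2)$ by the normalization $F_0'F_0/T=I_r$ together with \eqref{eq.H2inv.2}, which explains why no $\Phi$-type factor appears on the left-hand side of the second display. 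The main obstacle of the whole argument is the cross term displayed above: decoupling $\widehat{1/\textsf{f}_{\tau_m,it}(0)}$ from $\eta_{h,\tau_m,it}$ is precisely the problem that motivated the carefully designed sample splitting in \eqref{eq.density.est.N}, and the bookkeeping required to verify that each of the four $(w,v)$ assignments delivers the needed conditional independence while preserving uniformity in $(i,m,t)$ is the technical heart of the proof.
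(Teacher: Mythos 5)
You take the same overall route as the paper: start from the first-order condition of the weighted objective, Taylor-expand to second order, exploit the inverse-density cancellation, and use the $(w,v)$ sample-splitting assignment to decouple $\widehat{1/\textsf{f}_{\tau_m,it}(0)}-1/\textsf{f}_{\tau_m,it}(0)$ from $\eta_{h,\tau_m,it}$ via a conditional Hoeffding bound. However, your inventory of ``four kinds of remainder pieces'' misses two terms that do not fall into any of those categories, and one of them is the reason the centering must be at $H_{NT,2}'f_{0,t}$ in the first place.

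First, after you replace $\hat R^{(2)}_{h,\tau_m,it}$ by $\textsf{f}_{\tau_m,it}(0)$, the Hessian does \emph{not} collapse to $H_{NT,2}^{-1}\Phi H_{NT,2}^{'-1}$ by inspection: the FOC carries the factor $\tilde\lambda_i(\tau_m)$ rather than $H_{NT,2}^{-1}\lambda_{0,i}(\tau_m)$, and the argument increment contains $(\tilde\lambda_i(\tau_m)-H_{NT,2}^{-1}\lambda_{0,i}(\tau_m))'\tilde f_t$. Multiplying these out produces a cross term of the form
\begin{equation*}
\frac{1}{MN}\sum_{m,i}\frac{\hat R^{(2)}_{h,\tau_m,it}}{\textsf{f}_{\tau_m,it}(0)}\,H_{NT,2}^{-1}\lambda_{0,i}(\tau_m)\,\bigl(H_{NT,2}'f_{0,t}\bigr)'\bigl(\tilde\lambda_i(\tau_m)-H_{NT,2}^{-1}\lambda_{0,i}(\tau_m)\bigr),
\end{equation*}
whose naive order is $O_p(\zeta_{NT})$---exactly $1/\sqrt{N}$, not $o_p(1/\sqrt N)$. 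The paper substitutes the companion expansion of $\tilde\lambda_i(\tau_m)-H_{NT,2}^{-1}\lambda_{0,i}(\tau_m)$ and shows the dominant contribution cancels \emph{exactly}, because
\begin{equation*}
\frac{1}{T}\sum_{s=1}^T\bigl(\tilde f_s-H_{NT,2}'f_{0,s}\bigr)\bigl(H_{NT,2}'f_{0,s}\bigr)'=\frac{\tilde F'F_0}{T}H_{NT,2}-H_{NT,2}'\frac{F_0'F_0}{T}H_{NT,2}=0,
\end{equation*}
using $F_0'F_0/T=I_r$ and $H_{NT,2}'=\tilde F'F_0/T$. This exact cancellation is what singles out $H_{NT,2}=F_0'\tilde F/T$ as the correct recentering matrix; an appeal to Lemma \ref{lem.inv} (near-orthogonality) is not a substitute, and your sketch does not identify where this identity must enter. (A minor slip in the same place: your ``last piece'' $\lambda_{0,i}'(\tau_m)(H_{NT,2}^{\prime-1}H_{NT,2}'-I_r)f_{0,t}$ is identically zero, not $O_p(\zeta_{NT}^2)$.)

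Second, substituting the $\tilde\lambda$-expansion into the score-error term $\frac{1}{MN}\sum_{m,i}\textsf{f}_{\tau_m,it}^{-1}(0)\eta_{h,\tau_m,it}(\tilde\lambda_i(\tau_m)-H_{NT,2}^{-1}\lambda_{0,i}(\tau_m))$ generates a double-score piece
\begin{equation*}
\frac{1}{MNT}\sum_{m}\sum_{i}\sum_{s}\frac{\eta_{h,\tau_m,it}\,\eta_{h,\tau_m,is}}{\textsf{f}_{\tau_m,it}(0)\,\textsf{f}_{\tau_m,is}(0)}\bigl(\tilde f_s-H_{NT,2}'f_{0,s}\bigr),
\end{equation*}
which is not a third-order remainder, not a kernel bias, not the factorizing $A_{2t}$-type term (the $s$-summand is $\eta_{h,\tau_m,is}$, not $f_{0,s}$, so the double sum over $(i,m,s)$ does \emph{not} separate), and not the density-estimation cross term. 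The paper controls it (and its analogue inside the Hessian cross term) by arranging the $\eta$-products into an $N\times(T-1)$ matrix, bounding its operator norm via a high-dimensional concentration inequality conditional on $\{\eta_{h,\tau_m,it}:i\}$, and pairing it with the rank-one matrix formed by replicating $\tilde f_s-H_{NT,2}'f_{0,s}$ so that the nuclear norm reduces to the Frobenius norm. Your classification does not cover this term, and closing the uniform-in-$t$ bound requires an argument of this type.
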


\begin{proof}[Proof of Theorem \ref{thm.dist2}]
 We first derive the average rate. To save space, we again only focus on $\tilde{F}$ since the result for $\tilde{\Lambda}(\tau_{m})$ follows by a similar argument. Consider
\[
  A\coloneqq \frac{1}{T}\sum_{t=1}^{T}\left[\frac{1}{N}\sum_{i=1}^{N}\left(\frac{1}{M}\sum_{m=1}^{M}\frac{\eta_{h,\tau_{m},it}}{\textsf{f}_{\tau_{m},it}(0)}\lambda^{'}_{0,i}(\tau_{m})\right)\right]\left[\frac{1}{N}\sum_{i=1}^{N}\left(\frac{1}{M}\sum_{m=1}^{M}\frac{\eta_{h,\tau_{m},it}}{\textsf{f}_{\tau_{m},it}(0)}\lambda_{0,i}(\tau_{m})\right)\right].
\]
Note that $\mathbb{E}(A)=O(1/N)$ in view that $\mathbb{E}(\eta_{h,\tau_{m},it})=0$ and the $\eta_{h,\tau_{m},it}$s are independent across $i$ and $t$. So, $A=O_{p}(1/N)$ by Markov's inequality. By Lemma \ref{lem.expansion} and by Cauchy-Schwarz inequality, 
\begin{align*}
&\frac{1}{T}\sum_{t=1}^{T}\left(\tilde{f}_{t}-H_{NT,2}'f_{0,t}\right)'H_{NT,2}^{-1}\Phi^{2}H_{NT,2}^{-1'}\left(\tilde{f}_{t}-H_{NT,2}'f_{0,t}\right)\\
\leq &A+\sqrt{A}\cdot o_{p}\left(\frac{1}{\sqrt{N}}\right)+o_{p}\left(\frac{1}{N}\right)=O_{p}\left(\frac{1}{N}\right).
\end{align*}
Since the smallest eigenvalue of $H_{NT,2}^{-1}\Phi^{2} H_{NT,2}^{'-1}$ is bounded away from zero with probability approaching 1, implied by Lemmas \ref{lem.discrete} and \ref{lem.inv}, we have
\[
  \frac{1}{T}\sum_{t=1}^{T}\left(\tilde{f}_{t}-H_{NT,2}'f_{0,t}\right)'\left(\tilde{f}_{t}-H_{NT,2}'f_{0,t}\right)=O_{p}\left(\frac{1}{N}\right).
\]

We now derive the asymptotic distribution of $\tilde{f}_{t}$. Let
\[
X_{i,T}(t)=\frac{1}{M}\sum_{m=1}^{M}\frac{\eta_{h,\tau_{m},it}}{\textsf{f}_{\tau_{m},it}(0)}\lambda_{0,i}\left(\tau_{m}\right),
\]
where the dependence of $X_{i,T}(t)$ on $T$ is through $h$. The variance of $X_{i,T}(t)$ satisfies
\begin{align*}
  Var(X_{i,T}(t))=&\frac{1}{M^{2}}\sum_{m=1}^{M}\sum_{m'=1}^{M}\frac{\mathbb{E}\left(\eta_{h,\tau_{m},it}\cdot \eta_{h,\tau_{m'},it}\right)}{\textsf{f}_{\tau_{m},it}(0)\textsf{f}_{\tau_{m'},it}(0)}\lambda_{0,i}(\tau_{m})\lambda_{0,i}'(\tau_{m'}) \\
  =&\frac{1}{M^{2}}\sum_{m=1}^{M}\sum_{m'=1}^{M}\frac{\min\{\tau_{m},\tau_{m'}\}-\tau_{m}\tau_{m'}}{\textsf{f}_{\tau_{m},it}(0)\textsf{f}_{\tau_{m'},it}(0)}\lambda_{0,i}(\tau_{m})\lambda_{0,i}'(\tau_{m'})+o(1),
\end{align*}
where the second equality is by Lemma \ref{lem.nonpara}-(iii). By independence of $X_{i,T}(t)$ across $i$, $Var(\sum_{i=1}^{N}X_{i,T}(t))=\sum_{i=1}^{N}Var(X_{i,T}(t))$.
Hence, by the definition of $\Sigma_{F,t}$ and by Theorem 2 in \cite{phillips1999linear}, we have
\begin{equation*}
  \Sigma_{F,t}^{-1/2}\Phi H_{NT,2}^{'-1}\sqrt{N}\left(\tilde{f}_{t}-H_{NT,2}'f_{0,t}\right)\overset{d}{\to}N\left(0,I_{r}\right).
\end{equation*}
The limiting distribution of $\tilde{\lambda}_{i}(\tau^{*})$ is derived similarly. 

For the limiting distribution of $\tilde{\lambda}_{i}'(\tau^{*})\tilde{f}_{t}$, first,
\begin{align*}
\tilde{\lambda}_{i}'(\tau^{*})\tilde{f}_{t}-L_{0,it}(\tau^{*})=\lambda_{0}'(\tau^{*})H_{NT,2}'^{-1}\left(\tilde{f}_{t}-H_{NT,2}'f_{0,t}\right)+\left(\tilde{\lambda}_{i}(\tau^{*})-H_{NT,2}^{-1}\lambda_{0}(\tau^{*})\right)'\tilde{f}_{t}.
\end{align*}
For the second term,
\begin{align*}
&\left(\tilde{\lambda}_{i}(\tau^{*})-H_{NT,2}^{-1}\lambda_{0}(\tau^{*})\right)'\tilde{f}_{t}\\
=&\left(\tilde{\lambda}_{i}(\tau^{*})-H_{NT,2}^{-1}\lambda_{0}(\tau^{*})\right)'H_{NT,2}'f_{0,t}+\left(\tilde{\lambda}_{i}(\tau^{*})-H_{NT,2}^{-1}\lambda_{0}(\tau^{*})\right)'\left(\tilde{f}_{t}-H_{NT,2}'f_{0,t}\right)\\
=&\left(\tilde{\lambda}_{i}(\tau^{*})-H_{NT,2}^{-1}\lambda_{0}(\tau^{*})\right)'H_{NT,2}'f_{0,t}+o_{p}\left(\frac{1}{\sqrt{T}}\right).
\end{align*}
Therefore, by $H_{NT,2}H_{NT,2}'=I_{r}+O_{p}(\zeta_{NT}^{2})$ (Lemma \ref{lem.inv}), since $N$ and $T$ have the same order, we have
\begin{align*}
&\left(\tilde{\lambda}_{i}'(\tau^{*})\tilde{f}_{t}-L_{0,it}(\tau^{*})\right)\\
=&\lambda_{0,i}'(\tau^{*})\Phi^{-1}\frac{1}{N}\sum_{j=1}^{N}\frac{1}{M}\sum_{m=1}^{M}\frac{\eta_{h,\tau^{*},jt}}{\textsf{f}_{\tau^{*},it}(0)} \lambda_{0,j}(\tau^{*})+f_{0,t}'\frac{1}{T}\sum_{s=1}^{T}\frac{\eta_{h,\tau^{*},it}}{\textsf{f}_{\tau^{*},is}(0)}f_{0,s}+o_{p}\left(1\right).
\end{align*}
Since the two leading terms are asymptotically independent, the desired result follows from the Cram\'er-Wold theorem.

Finally, when $r=1$, $H_{NT,2}=\tilde{H}_{NT,1}+O_{p}(\zeta_{NT}^{2})$ by equation \eqref{eq.H2inv.diag} in Lemma \ref{lem.inv}.
\end{proof}

 \section {Proof of Theorem \ref{thm.dist3}}
Substitute equation \eqref{eq.mean.model} and the stochastic expansion of $\tilde{f}_{t}$ in Lemma \ref{lem.expansion} into equation \eqref{eq.meanlambda}:
\begin{align*}
\tilde{\bar{\lambda}}_{i}=&\frac{1}{T}\sum_{t=1}^{T}\tilde{f}_{t}f_{0,t}'\bar{\lambda}_{0,i}+\frac{1}{T}\sum_{t}^{T}\nu_{it}H_{NT,2}'f_{0,t}\\
&-\left[\frac{1}{MN}\sum_{j=1}^{N}\sum_{m=1}^{M}H_{NT,2}^{-1}\lambda_{0,j}(\tau_{m})\lambda_{0,j}'(\tau_{m})H_{NT,2}^{'-1}\right]^{-1}\frac{1}{MNT}\sum_{t}^{T}\sum_{j=1}^{N}\sum_{m=1}^{M}\frac{\nu_{it}\eta_{h,\tau_{m},jt}}{\textsf{f}_{\tau_{m},jt}(0)}H_{NT,2}^{-1}\lambda_{0,j}(\tau_{m})\\
&+o_{p}\left(\frac{1}{\sqrt{T}}\right)\\
=&\frac{1}{T}\sum_{t=1}^{T}\tilde{f}_{t}f_{0,t}'\bar{\lambda}_{0,i}+\frac{1}{T}\sum_{t}^{T}\nu_{it}H_{NT,2}'f_{0,t}+o_{p}\left(\frac{1}{\sqrt{T}}\right),
\end{align*}
where the term $o_{p}(1/\sqrt{T})$ is uniform in $i$ by a similar argument as the proof of Lemma \ref{lem.expansion} and by the boundedness of $\nu_{it}$. Therefore, by $\sum_{t=1}^{T}\tilde{f}_{t}f_{0,t}'/T= H_{NT,2}'$ and $H_{NT,2}'=H_{NT,2}^{-1}+O_{p}(\zeta_{NT}^{2})$ implied by Lemma \ref{lem.inv},
\begin{equation*}
  \tilde{\bar{\lambda}}_{i}-H_{NT,2}^{-1}\bar{\lambda}_{0,i}=\frac{1}{T}\sum_{t}^{T}\nu_{it}H_{NT,2}'f_{0,t}+o_{p}\left(\frac{1}{\sqrt{T}}\right).
\end{equation*}
The desired results then follow similar arguments as in the proof of Theorem \ref{thm.dist2}.
 \section{Proof of Theorems \ref{thm.r} and \ref{thm.rstrong}}
\begin{proof}[Proof of Theorem \ref{thm.r}]
By the triangle inequality, there exist constants $C_{1},C_{2}>0$ such that
\begin{align*}
\left\|\frac{1}{MNT}\sum_{m=1}^{M}\left(\hat{L}^{pel'}\left(\tau_{m}\right)\hat{L}^{pel}\left(\tau_{m}\right)-L_{0}'\left(\tau_{m}\right)L_{0}\left(\tau_{m}\right)\right)\right\|_{F}^{2}\leq&\max_{m}\frac{C_{1}}{NT}\left\|\hat{L}^{pel}\left(\tau_{m}\right)-L_{0}\left(\tau_{m}\right)\right\|_{F}^{2}\\
\leq& \frac{C_{2}\log(NT)}{\min\{N,T\}},
\end{align*}
with probability approaching 1, where the last inequality follows \cite{feng2023nuclear}. By $F_{0}'F_{0}/T=I_{r}$, the diagonal entries of $\sum_{m=1}^{M}\Lambda_{0}'(\tau_{m})\Lambda_{0}(\tau_{m})/MN$, i.e., $\sigma_{1}^{2},\ldots,\sigma_{r}^{2}$, are equal to the nonzero eigenvalues of $\sum_{m=1}^{M}L_{0}'(\tau_{m})L_{0}(\tau_{m})/MNT$, all distinct and bounded away from 0 by Lemma \ref{lem.discrete}. Therefore, by Weyl's inequality,
\begin{align*}
\max_{j=1,\ldots,\min\{N,T\}}\left|\hat{\sigma}_{j}^{2}-\sigma_{j}^{2}\right|\leq \left\|\frac{1}{MNT}\sum_{m=1}^{M}\left(\hat{L}^{pel'}\left(\tau_{m}\right)\hat{L}^{pel}\left(\tau_{m}\right)-L_{0}'\left(\tau_{m}\right)L_{0}\left(\tau_{m}\right)\right)\right\|_{F}\leq \sqrt{\frac{C_{2}\log(NT)}{\min\{N,T\}}},
\end{align*}
with probability approaching 1. Now note that the event $1(\hat{\sigma}_{j}^{2}\geq C_{r})=1$ for all $j=1,\ldots,r$ and  $1(\hat{\sigma}_{j}^{2}\geq C_{r})=0$ for all $j>r$ implies $\hat{r}=r$. Therefore, by $\sigma^{2}_{r+1}=\cdots=\sigma^{2}_{\min(N,T)}=0$, 
\begin{align*}
\Pr\left(\hat{r}=r\right)\geq & \Pr\left(\hat{\sigma}_{j}^{2}\geq C_{r},\forall j\leq r\text{ and }\hat{\sigma}_{j}^{2}<C_{r},\forall j>r \right)\\
\geq &\Pr \left(\max_{j\leq r}|\hat{\sigma}_{j}^{2}-\sigma_{j}^{2}|\leq \sigma_{r}^{2}-C_{r}\text{ and }\max_{j>r}|\hat{\sigma}_{j}^{2}-\sigma_{j}^{2}|<C_{r}\right)\\
\geq &\Pr \left(\max_{j=1,\ldots,\min\{N,T\}}|\hat{\sigma}_{j}^{2}-\sigma_{j}^{2}|\leq \sqrt{\frac{C_{2}\log(NT)}{\min\{N,T\}}}\right)\to 1,
\end{align*}
where the last inequality holds because both $(\sigma_{r}^{2}-C_{r})$ and $C_{r}$ are greater than $\sqrt{\log(NT)/{\min\{N,T\}}}$ in order by $\sigma_{r}^{2}$ being bounded away from 0.
\end{proof}

\begin{proof}[Proof of Theorem \ref{thm.rstrong}]
We only prove consistency of $\tilde{\bar{r}}(\alpha)$ since consistency of $\tilde{r}_{\tau_{m}}(\alpha)$ follows exactly the same argument. Note that the singular values of $\bar{L}_{0}/\sqrt{NT}$ are the square root of the eigenvalues of $\bar{L}'_{0}\bar{L}_{0}/NT$. By Weyl's inequality,
\begin{align}
  \max_{j=1,\ldots,\min\{N,T\}}\left|\tilde{\bar{\sigma}}_{j}-\bar{\sigma}_{j}\right|\leq &\left\|\frac{\tilde{\bar{\Lambda}}\tilde{F}'}{\sqrt{NT}}-\frac{\bar{\Lambda}_{0}F_{0}'}{\sqrt{NT}}\right\|_{F}\notag\\
  \leq & \left\|\frac{\tilde{\bar{\Lambda}}}{\sqrt{N}}\right\|_{F}\cdot \left\|\frac{\tilde{F}-F_{0}H_{NT,2}}{\sqrt{T}}\right\|_{F}+\left\|\frac{F_{0}H_{NT,2}'}{\sqrt{T}}\right\|_{F}\left\|\frac{\tilde{\bar{\Lambda}}-\bar{\Lambda}_{0}H_{NT,2}^{'-1}}{\sqrt{N}}\right\|_{F}\notag\\
  =&O_{p}\left(\frac{1}{\sqrt{N}}\right)=o\left(\frac{N^{\frac{\alpha-1}{2}}}{\log N}\right),\forall \alpha_{j}\in (0,1].\label{eq.singularvalue}
\end{align}
where the penultimate equality is by Theorems \ref{thm.dist2} and \ref{thm.dist3}. Let $N^{(\tilde{\alpha}-1)/2}$ be the order of the $(\bar{r}(\alpha)+1)$-th singular value; if $\bar{r}=\bar{r}(\alpha)$, let $\tilde{\alpha}=-\infty$. By construction, $\tilde{\alpha}<\alpha$. By definition, $\bar{\sigma}_{1}\geq \cdots\geq \bar{\sigma}_{\bar{r}(\alpha)}\geq C_{1}N^{(\alpha-1)/2}$ for some $C_{1}$ whereas $\bar{\sigma}_{j}\leq C_{2}N^{(\tilde{\alpha}-1)/2}=o(N^{(\alpha-1)/2}/\log(N))$ for some $C_{2}$ for all $j>\bar{r}(\alpha)$ because $\tilde{\alpha}<\alpha$. Therefore, for any constant $C>0$, there exists a $C_{3}>0$ such that
\begin{align*}
&\Pr\left(\tilde{\bar{r}}(\alpha)=\bar{r}(\alpha)\right)\\
\geq & \Pr\left(\tilde{\bar{\sigma}}_{j}\geq \frac{CN^{\frac{\alpha-1}{2}}}{\log(N)},\forall j\leq \bar{r}(\alpha)\text{ and }\tilde{\bar{\sigma}}_{j}<\frac{CN^{\frac{\alpha-1}{2}}}{\log(N)},\forall j>\bar{r}(\alpha) \right)\\
\geq &\Pr \left(\max_{j\leq \bar{r}(\alpha)}|\tilde{\bar{\sigma}}_{j}-\bar{\sigma}_{j}|\leq C_{1}N^{\frac{\alpha-1}{2}}-\frac{CN^{\frac{\alpha-1}{2}}}{\log(N)}\text{ and }\max_{j>\bar{r}(\alpha)}|\tilde{\bar{\sigma}}_{j}-\bar{\sigma}_{j}|<\frac{CN^{\frac{\alpha-1}{2}}}{\log(N)}-C_{2}N^{\frac{\tilde{\alpha}-1}{2}}\right)\\
\geq &\Pr \left(\max_{j\leq \bar{r}(\alpha)}|\tilde{\bar{\sigma}}_{j}-\bar{\sigma}_{j}|\leq \frac{C_{3}N^{\frac{\alpha-1}{2}}}{\log(N)}\text{ and }\max_{j>\bar{r}(\alpha)}|\tilde{\bar{\sigma}}_{j}-\bar{\sigma}_{j}|\leq \frac{C_{3}N^{\frac{\alpha-1}{2}}}{\log(N)}\right)\to 1,
\end{align*}
where convergence in the last line follows from equation \eqref{eq.singularvalue}.
\end{proof}

\newpage
 \setcounter{equation}{0}
 \section*{\centering Supplemental Appendix to ``Universal Factor Models''}
\centering{\textbf{Abstract}}

This supplement gives proofs of the lemmas in the Appendices \ref{appx.roc} and \ref{appx.dist}.

\renewcommand{\thesection}{S.A}
\setcounter{equation}{0}
\section{Proofs of Lemmas in Appendix \ref{appx.roc}}
\begin{proof}[Proof of Lemma \ref{lem.nonpara}]
See for instance \cite{galvao2016smoothed} and \cite{fernandes2021smoothing}.
\end{proof}

\begin{proof}[Proof of Lemma \ref{lem.avgcons}]

By Assumption \ref{assum.density}, $\textsf{f}_{\tau_{m},it}(\cdot)$ on the compact interval between $\max_{\lambda_{1},\lambda_{2},f_{1},f_{2}\in\mathcal{B}^{r}}|\lambda_{1}'f_{1}-\lambda_{2}'f_{2}|$ and 0 is bounded away from $0$ uniform in $i,t$ and $m$. Let this lower bound be $\underline{\textsf{f}}$. By Lemma \ref{lem.nonpara}-(ii), 
\[
  \frac{\underline{\textsf{f}}}{2}\leq \inf_{(\lambda',f')\in\mathcal{B}^{2r},m,i,t}\textsf{f}_{\tau_{m},it}(\lambda'f-\lambda_{0,i}'(\tau_{m})f_{0,t})+O(h^{\gamma})\leq \inf_{(\lambda',f')'\in\mathcal{B}^{2r},m,i,t}\mathbb{E}\left(\hat{R}^{(2)}_{h,\tau_{m}}\left(\lambda'f;Y_{it}\right)\right),
\]
Meanwhile, $\bar{\Delta}_{\hat{R}_{h}}(\theta_{0})=0$ by definition and $\mathbb{E}\left(\hat{R}^{(1)}_{h,\tau_{m},it}\right)=O(h^{\gamma})$ uniformly in $m$, $i$ and $t$. Hence, expand $\bar{\Delta}_{\hat{R}_{h}}(\hat{\theta})$ around the $L_{0,it}(\tau_{m})$s, and we get
\begin{align}
  &\bar{\Delta}_{\hat{R}_{h}}(\hat{\theta})\geq \frac{1}{MNT}\sum_{m,i,t}\mathbb{E}\left(\hat{R}^{(1)}_{h,\tau_{m},it}\right)\left(\hat{\lambda}_{i}'(\tau_{m})\hat{f}_{t}-\lambda_{0,i}'(\tau_{m})f_{0,t}\right)\notag\\
  & +\frac{1}{4MNT}\sum_{m,i,t}\underline{\textsf{f}}\cdot\left(\hat{\lambda}_{i}'(\tau_{m})\hat{f}_{t}-\lambda_{0,i}'(\tau_{m})f_{0,t}\right)^{2}=\frac{\underline{\textsf{f}}}{4MNT}\sum_{m=1}^{M}\left\|\hat{L}(\tau_{m})-L_{0}(\tau_{m})\right\|_{F}^{2}+O\left(h^{\gamma}\right).\label{eq.expectation.lb}
  \end{align}
On the other hand, $\Delta_{\hat{R}_{h}}(\hat{\theta})\leq\Delta_{\hat{R}_{h}}\left(\theta_{0}\right)=0$ by the definition of the estimator. Therefore, 
\begin{align*}
&\frac{1}{MNT}\sum_{m=1}^{M}\left\|\hat{L}(\tau_{m})-L_{0}(\tau_{m})\right\|_{F}^{2}\leq \frac{4}{\underline{\textsf{f}}}\bar{\Delta}_{\hat{R}_{h}}\left(\hat{\theta}\right)+O\left(h^{\gamma}\right)\\
&\leq \frac{4}{\underline{\textsf{f}}}\left(\bar{\Delta}_{\hat{R}_{h}}\left(\hat{\theta}\right)-\Delta_{\hat{R}_{h}}\left(\hat{\theta}\right)\right)+O\left(h^{\gamma}\right)\leq \frac{4}{\underline{\textsf{f}}}\sup_{\theta\in\Theta}\left|\hat{S}_{h}(\theta)\right|+O\left(h^{\gamma}\right).
\end{align*}
It thus suffices to show that for any $\varepsilon>0$, $\Pr(\sup_{\theta\in\Theta}|\hat{S}_{h}(\theta)|>\varepsilon)\to 0$.

Since $\Theta\subseteq\mathcal{B}^{(MN+T)r}$ where the latter is a compact subset of $\mathbb{R}^{(MN+T)r}$, $\Theta$ can be covered by $K$ cubes $\mathcal{I}_{k}$ ($k=1,\ldots,K$) with center $\theta_{k}$ and length of edges $l=\varepsilon_{0}$ for any fixed $\varepsilon_{0}$. Specifically, each $\mathcal{I}_{k}=\prod_{m,i}[\lambda_{i,k}(\tau_{m})-\varepsilon_{0}/2,\lambda_{i,k}(\tau_{m})+\varepsilon_{0}/2]\times \prod_{t}[f_{t,k}-\varepsilon_{0}/2,f_{t,k}+\varepsilon_{0}/2]$. By construction, $K=(C/\varepsilon_{0})^{(MN+T)r}$ for some constant $C$. Thus,
\begin{align}
\sup_{\theta\in \Theta}\left|\hat{S}_{h}(\theta)\right|\leq\underbrace{\max_{k=1,\ldots,K}\sup_{\theta\in \Theta\cap \mathcal{I}_{k}}\left|\hat{S}_{h}(\theta)\right|}_{A_{1}}+\underbrace{\max_{k=1,\ldots,K}\left|\hat{S}_{h}(\theta_{k})\right|}_{A_{2}}.\label{eq.op1.A1A2}
\end{align} 

\sloppy By the uniform boundedness of $\hat{R}_{h,\tau}^{(1)}(c;Y_{it})$ in $c,Y_{it}$ and $\tau$, for any $(\lambda_{a,i}(\tau_{1})',\ldots,\lambda_{a,i}(\tau_{M})',f_{a,t}')'\in\mathcal{B}^{(M+1)r}$ and $(\lambda_{b,i}(\tau_{1})',\ldots,\lambda_{b,i}(\tau_{M})',f_{b,t}')'\in\mathcal{B}^{(M+1)r}$,
\begin{align}
  &\frac{1}{M}\sum_{m=1}^{M}\left|\hat{R}_{h,\tau_{m}}(\lambda_{a,i}'(\tau_{m})f_{a,t};Y_{it})-\hat{R}_{h,\tau_{m}}(\lambda_{b,i}'(\tau_{m})f_{b,t};Y_{it})\right|\notag\\
  \lesssim& \frac{1}{M}\sum_{m=1}^{M}\left|\lambda_{a,i}'(\tau_{m})f_{a,t}-\lambda_{b,i}'(\tau_{m})f_{b,t}\right|\notag\\
  \lesssim&\frac{1}{M}\sum_{m=1}^{M}\left(\left\|\lambda_{a,i}(\tau_{m})-\lambda_{b,i}'(\tau_{m})\right\|_{F}+\left\|f_{a,t}-f_{b,t}\right\|_{F}\right)\notag\\
  \leq&\sqrt{\frac{2}{M}\sum_{m=1}^{M}\left\|\lambda_{a,i}(\tau_{m})-\lambda_{b,i}'(\tau_{m})\right\|_{F}^{2}+2\left\|f_{a,t}-f_{b,t}\right\|_{F}^{2}},\label{eq.obj.lip}
\end{align}
where $\lesssim$ means ``{}left side bounded by a positive constant times the right side'' \citep{van1996weak}. The second inequality is by the boundedness of $\lambda_{a,i}(\tau_{m}),\lambda_{b,i}(\tau_{m}),f_{a,t}$ and $f_{b,t}$ and by triangle inequality. The last inequality follows from the fact that for a vector $(a_{1},\ldots,a_{m},\ldots,a_{M},b_{1},\ldots,b_{m}\ldots,b_{M})$ with $a_{m},b_{m}\geq 0$ for all $m$, $\sum_{m=1}^{M}(a_{m}+b_{m})\leq \sqrt{2M}\sqrt{\sum_{m=1}^{M}(a_{m}^{2}+b_{m}^{2})}$.

Now, for $A_{1}$ in \eqref{eq.op1.A1A2}, equation \eqref{eq.obj.lip} implies that there exist constants $C_{1},C_{2}>0$ such that
\begin{align}
  A_{1}\leq & C_{1}\max_{k}\sup_{\theta\in\Theta\cap \mathcal{I}_{k}}\frac{1}{NT}\sum_{i,t}\sqrt{\frac{1}{M}\sum_{m=1}^{M}\left\|\lambda_{i}(\tau_{m})-\lambda_{i,k}(\tau_{m})\right\|_{F}^{2}+\left\|f_{t}-f_{t,k}\right\|_{F}^{2}}\notag\\
  \leq &C_{1}\max_{k}\sup_{\theta\in\Theta\cap \mathcal{I}_{k}}\sqrt{\frac{1}{NT}}\sqrt{\sum_{i,t}\left(\frac{1}{M}\sum_{m=1}^{M}\left\|\lambda_{i}(\tau_{m})-\lambda_{i,k}(\tau_{m})\right\|_{F}^{2}+\left\|f_{t}-f_{t,k}\right\|_{F}^{2}\right)}\leq C_{2}\varepsilon_{0},\label{eq.op1.A1}
\end{align} 
where the last inequality is by the definition of $\mathcal{I}_{k}$.

Next, we consider $A_{2}$. Under Assumption \ref{assum.iid}, random variables $\sum_{m=1}^{M}(\hat{R}_{h,\tau_{m}}(\lambda_{i}'(\tau_{m})f_{t};Y_{it})-\hat{R}_{h,\tau_{m}}(\lambda_{0,i}'(\tau_{m})f_{0,t};Y_{it}))/M$ are independent across $i$ and $t$. So, by equation \eqref{eq.obj.lip} and the Hoeffding's inequality, for each fixed $k$ and $c>0$, there exist constants $C_{3},C_{4}>0$ such that
\begin{equation}\label{eq.obj.hoeffding}
  \Pr\left(\left|\sqrt{NT}S(\theta_{k})\right|>c\right)\leq 2e^{-\frac{2c^{2}}{C_{3}\left[\frac{1}{MN}\sum_{m,i}\left\|\lambda_{i,k}(\tau_{m})-\lambda_{0,i}(\tau_{m})\right\|_{F}^{2}+\frac{1}{T}\sum_{t}\left\|f_{t,k}-f_{0,t}\right\|_{F}^{2}\right]}}\leq 2e^{-\frac{2c^{2}}{C_{4}}},
\end{equation}
where the last inequality is by the uniform boundedness of $\lambda_{0,i}(\tau_{m}),f_{0,t},\lambda_{i,k}(\tau_{m})$ and $f_{t,k}$. Therefore, by Lemma 2.2.1 in \cite{van1996weak}, for any fixed $k=1,\ldots,K$, there exists a constant $C_{5}$ which does not depend on $k$ such that
\begin{equation}\label{eq.op1.psi2}
  \left\|\hat{S}_{h}(\theta_{k})\right\|_{\psi_{2}}\leq \frac{C_{5}}{\sqrt{NT}}.
\end{equation}

Hence,
\begin{align}
\Pr\left(A_{2}>\frac{\varepsilon}{2}\right)\leq &\frac{2}{\varepsilon}\mathbb{E}\left(\max_{k=1,\ldots,K}\left|\hat{S}_{h}(\theta_{k})\right|\right)\notag\leq \frac{2}{\varepsilon}\left\|\max_{k=1,\ldots,K}\left|\hat{S}_{h}(\theta_{k})\right|\right\|_{\psi_{2}}\notag\\
\lesssim &\sqrt{\log K}\max_{k=1,\ldots,K}\left\|\hat{S}_{h}(\theta_{k})\right\|_{\psi_{2}}\lesssim \frac{\sqrt{r(MN+T)}}{\sqrt{NT}}\to 0,\label{eq.op1.A2}
\end{align}
where the first inequality is by Markov's inequality. The third inequality is by Lemma 2.2.2 in \cite{van1996weak}. The fourth inequality is by equation \eqref{eq.op1.psi2} and by $K=(C/\varepsilon_{0})^{(MN+T)r}$. 

Combining \eqref{eq.op1.A1A2}, \eqref{eq.op1.A1} and \eqref{eq.op1.A2} and letting $\varepsilon_{0}$ be such that $C_{2}\varepsilon_{0}<\varepsilon/2$, we obtain
\begin{align}
\Pr\left(\sup_{\theta\in \Theta}\left|\hat{S}_{h}(\theta)\right|>\varepsilon\right)\leq \Pr\left(A_{1}+A_{2}>\varepsilon\right)\leq \Pr\left(A_{2}>\frac{\varepsilon}{2}\right)\to 0.\label{eq.op1.obj.final}
\end{align}
This completes the proof.
\end{proof}

\begin{proof}[Proof of Lemma \ref{lem.metrics}]
The lemma is shown if for any $\delta>0$ and $\theta\in\Theta(\delta,d_{1})$, we find a constant $C>0$ that does not depend on $\theta$ such that $\theta\in\Theta(C\delta,d_{3})$. Let $(L(\tau_{m}))$ be formed by this arbitrary $\theta\in\Theta(\delta,d_{1})$. First, there exists a $C_{1}>0$ such that
\begin{align}
&\left\|\frac{\sum_{m=1}^{M}L'(\tau_{m})L(\tau_{m})}{MNT}-\frac{\sum_{m=1}^{M}L_{0}'(\tau_{m})L_{0}\left(\tau_{m}\right)}{MNT}\right\|_{F}^{2}\notag\\
\leq &\frac{M}{(MNT)^{2}}\sum_{m=1}^{M}\left\|L'(\tau_{m})L(\tau_{m})-L_{0}'(\tau_{m})L_{0}\left(\tau_{m}\right)\right\|_{F}^{2}\notag\\
\leq &\frac{2}{M(NT)^{2}}\sum_{m=1}^{M}\left\|L(\tau_{m})\right\|_{F}^{2}\left\|L(\tau_{m})-L_{0}\left(\tau_{m}\right)\right\|_{F}^{2}+\frac{2}{M(NT)^{2}}\sum_{m=1}^{M}\left\|L_{0}(\tau_{m})\right\|_{F}^{2}\left\|L(\tau_{m})-L_{0}\left(\tau_{m}\right)\right\|_{F}^{2}\notag\\
\leq & \frac{C_{1}}{MNT}\sum_{m=1}^{M}\left\|L(\tau_{m})-L_{0}\left(\tau_{m}\right)\right\|_{F}^{2}\leq C_{1}\delta^{2},\label{eq.roc1}
\end{align}
where the third inequality is by the boundedness of $\theta$ and $\theta_{0}$. 

\sloppy By construction, $F/\sqrt{T}$ and $F_{0}/\sqrt{T}$ are eigenvectors of $\sum_{m}L'(\tau_{m})L(\tau_{m})/MNT$ and $\sum_{m}L'_{0}(\tau_{m})L_{0}(\tau_{m})/MNT$, respectively. Meanwhile, all the eigenvalues of $\sum_{m}L'_{0}(\tau_{m})L_{0}(\tau_{m})/MNT$ are distinct and bounded away from zero for sufficiently large $N$ and $M$ by Lemma \ref{lem.discrete}. Therefore, Corollary 1 in \cite{yu2015useful} implies that the $j$-th ($j=1,\ldots,r$) columns in $F$ and $F_{0}$ satisfies:
\begin{align*}
  &\frac{1}{\sqrt{T}}\left\|F_{j}-\text{sgn}\left(F_{j}'F_{0,j}\right)F_{0,j}\right\|_{F}\\
  \leq &\frac{2^{3/2}}{\min\left(\sigma_{j-1}^{2}-\sigma_{j}^{2},\sigma_{j}^{2}-\sigma_{j+1}^{2}\right)}\left\|\frac{\sum_{m}L'(\tau_{m})L(\tau_{m})}{MNT}-\frac{\sum_{m}L'_{0}(\tau_{m})L_{0}(\tau_{m})}{MNT}\right\|\\
  \leq & \frac{2^{3/2}}{\min\left(\sigma_{j-1}^{2}-\sigma_{j}^{2},\sigma_{j}^{2}-\sigma_{j+1}^{2}\right)}\left\|\frac{\sum_{m}L'(\tau_{m})L(\tau_{m})}{MNT}-\frac{\sum_{m}L'_{0}(\tau_{m})L_{0}(\tau_{m})}{MNT}\right\|_{F},
\end{align*}
where $\sigma_{j}^{2}$ is the $j$-th largest eigenvalue of $\sum_{m}L'_{0}(\tau_{m})L_{0}(\tau_{m})/MNT$; $\sigma_{0}^{2}$ is defined as $\infty$. The second inequality is because the operator norm of a matrix $\|\cdot\|$ is no greater than the Frobenius norm. Let $H(\theta)\coloneqq\text{diag}(\text{sgn}(F_{j}'F_{0,j}))$. So, $H(\theta)\in \mathcal{H}$ for all $\theta$. There exists a $C_{2}>0$ such that
\begin{align}
\frac{1}{T}\|F-&F_{0}H(\theta)\|_{F}^{2}=\frac{1}{T}\sum_{j=1}^{r}\left\|F_{j}-\text{sgn}\left(F_{j}'F_{0,j}\right)F_{0,j}\right\|_{F}^{2}\notag\\
\leq&C_{2}\left\|\frac{\sum_{m}L'(\tau_{m})L(\tau_{m})}{MNT}-\frac{\sum_{m}L'_{0}(\tau_{m})L_{0}(\tau_{m})}{MNT}\right\|_{F}^{2}\leq C_{1}C_{2}\delta^{2},\label{eq.roc2}
\end{align}
where the last inequality is by \eqref{eq.roc1}.

Next, we turn to the loadings. By $(H(\theta))^{2}=I_{r}$,
\begin{align}
&\frac{1}{MNT}\sum_{m=1}^{M}\left\|L(\tau_{m})-L_{0}(\tau_{m})\right\|_{F}^{2}\notag\\
=&\frac{1}{MNT}\sum_{m=1}^{M}\left\|\Lambda(\tau_{m})F'-\Lambda_{0}(\tau_{m})H(\theta)F'+\Lambda_{0}(\tau_{m})H(\theta)F'-\Lambda_{0}(\tau_{m})H(\theta)H(\theta)F'_{0}\right\|_{F}^{2}\notag\\
\geq &\underbrace{\frac{1}{2MNT}\sum_{m=1}^{M}\left\|\Lambda(\tau_{m})F'-\Lambda_{0}(\tau_{m})H(\theta)F'\right\|_{F}^{2}}_{A}-\underbrace{\frac{1}{MNT}\sum_{m=1}^{M}\left\|\Lambda_{0}(\tau_{m})H(\theta)\left(F-F_{0}H(\theta)\right)'\right\|_{F}^{2}}_{B}.\label{eq.roc3}
\end{align}

For $A$, denoting the trace of a matrix by $\text{Tr}$,
\begin{align}
&\left\|\Lambda(\tau_{m})F'-\Lambda_{0}(\tau_{m})H(\theta)F'\right\|_{F}^{2}=\text{Tr}\left[\left(\Lambda(\tau_{m})F'-\Lambda_{0}(\tau_{m})H(\theta)F'\right)\left(\Lambda(\tau_{m})F'-\Lambda_{0}(\tau_{m})H(\theta)F'\right)'\right]\notag\\
&=T\cdot \text{Tr}\left[\left(\Lambda(\tau_{m})-\Lambda_{0}(\tau_{m})H(\theta)\right)\left(\Lambda(\tau_{m})-\Lambda_{0}(\tau_{m})H(\theta)\right)'\right]=T\left\|\Lambda(\tau_{m})-\Lambda_{0}(\tau_{m})H(\theta)\right\|_{F}^{2},\label{eq.roc4}
\end{align}
where the first and last equality are by the definition of Frobenius norm. The second equality is by $F'F=T\cdot I_{r}$.

For $B$, since $\left\|H(\theta)\right\|_{F}^{2}=r$ and $\max_{m}\|\Lambda_{0}(\tau_{m})\|_{F}^{2}$ is bounded by a constant times $N$, there exists $C_{3}>0$ such that
\begin{equation}\label{eq.roc5}
  \frac{1}{MNT}\sum_{m=1}^{M}\left\|\Lambda_{0}(\tau_{m})H(\theta)\left(F-F_{0}H(\theta)\right)'\right\|_{F}^{2}\leq \frac{C_{3}}{T}\left\|F-F_{0}H(\theta)\right\|_{F}^{2}.
\end{equation}
Substitute \eqref{eq.roc4} and \eqref{eq.roc5} into \eqref{eq.roc3}, and we have
\begin{align*}
\frac{1}{MN}\sum_{m=1}^{N}\left\|\Lambda(\tau_{m})-\Lambda_{0}(\tau_{m})H(\theta)\right\|_{F}^{2}\leq  \frac{2}{MNT}\sum_{m=1}^{M}\left\|L(\tau_{m})-L_{0}(\tau_{m})\right\|_{F}^{2}+\frac{2C_{3}}{T}\left\|F-F_{0}H(\theta)\right\|_{F}^{2}.\label{eq.roc6}
\end{align*}
Combining it with \eqref{eq.roc2}, let $C=\sqrt{2+2C_{1}C_{2}C_{3}}+\sqrt{C_{1}C_{2}}$, and we have
\begin{align*}
&\sqrt{\frac{1}{MN}\sum_{m=1}^{N}\left\|\Lambda(\tau_{m})-\Lambda_{0}(\tau_{m})H(\theta)\right\|_{F}^{2}}+\sqrt{\frac{1}{T}\left\|F-F_{0}H(\theta)\right\|_{F}^{2}}\leq C\delta.
\end{align*}
Therefore, $\theta\in\Theta(C\delta,d_{3})$, implying that $\Theta(\delta,d_{1})\subseteq \Theta(C\delta,d_{3})$.
\end{proof}
\begin{proof}[Proof of Lemma \ref{lem.psi2}]
By definition, for all $\theta_{1},\theta_{2}\in\Theta$ and all $H\in\mathcal{H}$, $d_{2}(\theta_{1},\theta_{2};H)\leq d_{3}(\theta_{1},\theta_{2};H)$. Let $\Theta_{2}(\delta,d_{2};H)\coloneqq\{\theta\in\Theta:d_{2}(\theta,
\theta_{0};H)\leq \delta\}$ for all $\delta>0$ and $H\in\mathcal{H}$. Together with Lemma \ref{lem.metrics}, we have
\begin{equation*}
  \Theta(\delta,d_{1})\subseteq\Theta(C\delta,d_{3})\subseteq\Theta(C\delta,d_{2})=\cup_{H\in\mathcal{H}}\Theta(C\delta,d_{2};H).
\end{equation*}
Therefore, 
\begin{align}
\mathbb{E}\left[\sup_{\theta\in\Theta(\delta,d_{1})}\left|\hat{S}_{h}(\theta)\right|\right]\leq \mathbb{E}\left[\sup_{\theta\in\Theta(C\delta,d_{2})}\left|\hat{S}_{h}(\theta)\right|\right] \leq r^{2}\max_{H\in\mathcal{H}}\mathbb{E}\left[\sup_{\theta\in\Theta(C\delta,d_{2};H)}\left|\hat{S}_{h}(\theta)\right|\right].\label{eq.space.inclusion}
\end{align}
Since $H$ only varies in the signs of the diagonal entries, it is without loss of generality to consider $\Theta(C\delta,d_{2};I_{r})$.

By equation \eqref{eq.obj.hoeffding} and similar to \eqref{eq.op1.psi2}, Lemma 2.2.1 of \cite{van1996weak} implies that
\begin{equation}\label{eq.psi2}
  \left\|\hat{S}_{h}(\theta)\right\|_{\psi_{2}}\lesssim \frac{d_{2}\left(\theta,\theta_{0};I_{r}\right)}{\sqrt{NT}}.
\end{equation}

Now following a similar argument as the proof of Lemma 3 in \cite{chen2021quantile}, the $\varepsilon$-packing number of $\Theta(C\delta,d_{2};I_{r})$ under $d_{2}(\cdot,\cdot;I_{r})$ is upper bounded by $(C_{1}\delta/\varepsilon)^{r(MN+T)}$ for some $C_{1}>0$. Hence, by the separability of $\hat{S}_{h}(\theta)$, equations \eqref{eq.space.inclusion} and \eqref{eq.psi2} and Theorem 2.2.4 in \cite{van1996weak} imply
\[
  \mathbb{E}\left(\sup_{\theta\in\Theta(\delta,d_{1})}\left|\hat{S}_{h}(\theta)\right|\right)\lesssim \left\|\sup_{\theta\in\Theta(C\delta,d_{2};I_{r})}\left|\hat{S}_{h}(\theta)\right|\right\|_{\psi_{2}}\lesssim\delta\cdot\frac{\sqrt{MN+T}}{\sqrt{NT}}\lesssim\delta\zeta_{NT}.
\]
\end{proof}

\begin{proof}[Proof of Lemma \ref{lem.uniformcons}]
\sloppy We only prove uniform consistency of $\hat{f}_{t}$ to save space; the proof of uniform consistency of $\hat{\lambda}_{i}(\tau_{m})$ follows a similar argument. For any $(\lambda_{1}(\tau_{1})',\ldots,\lambda_{N}(\tau_{1})',\ldots,\lambda_{1}(\tau_{M})',\ldots,\lambda_{N}(\tau_{M})')\in\mathcal{B}^{MNr}$ and $f\in\mathcal{B}^{r}$. Let 
\begin{align*}
\Delta_{\hat{R}_{h},t}(f;\Lambda(\cdot))\coloneqq& \frac{1}{MN}\sum_{m=1}^{M}\sum_{i=1}^{N}\hat{R}_{h,\tau_{m}}\left(\lambda_{i}'(\tau_{m})f;Y_{it}\right)-\frac{1}{MN}\sum_{m=1}^{M}\sum_{i=1}^{N}\hat{R}_{h,\tau_{m}}\left(\lambda_{i}'(\tau_{m})H_{NT,1}^{'}f_{0,t};Y_{it}\right),\\
\bar{\Delta}_{\hat{R}_{h},t}(f;\Lambda(\cdot))\coloneqq&\mathbb{E}\left(\Delta_{\hat{R}_{h},t}(f;\Lambda(\cdot))\right).
\end{align*}

Similar to equation \eqref{eq.expectation.lb}, since $\Delta_{\hat{R}_{h},t}(H_{NT,1}^{'}f_{0,t};\Lambda_{0}(\cdot)H_{NT}^{'-1})=0$ and
\begin{align*}
\frac{\underline{\textsf{f}}}{2}\leq \inf_{f\in\mathcal{B}^{r},m,i,t}\textsf{f}_{\tau_{m},it}\left(\lambda_{0,i}'(\tau_{m})H_{NT}^{'-1}f-\lambda_{0,i}'(\tau_{m})f_{0,t}\right)+O\left(h^{\gamma}\right)\leq \inf_{f\in\mathcal{B}^{r},m,i,t}\mathbb{E}\left(\hat{R}^{(2)}_{h,\tau_{m}}\left(\lambda_{0,i}'(\tau_{m})H_{NT}^{'-1}f;Y_{it}\right)\right)
\end{align*}
by Lemma \ref{lem.nonpara} where $\underline{\textsf{f}}$ is defined in the proof of Lemma \ref{lem.avgcons}, there exists a constant $C>0$ such that the following holds by expanding $\bar{\Delta}_{\hat{R}_{h},t}(\hat{f}_{t};\Lambda_{0}(\cdot)H_{NT}^{'-1})$ around $H'_{NT,1}f_{0,t}$:
\begin{align*}
  &\bar{\Delta}_{\hat{R}_{h},t}\left(\hat{f}_{t};\Lambda_{0}(\cdot)H_{NT}^{'-1}\right)\geq  \frac{1}{MN}\sum_{m,i}\mathbb{E}\left(\hat{R}^{(1)}_{h,\tau_{m},it}\right)\lambda_{0,i}'(\tau_{m})H_{NT,1}^{'-1}\left(\hat{f}_{t}-H'_{NT,1}f_{0,t}\right) \\
  &+\frac{\underline{\textsf{f}}}{4MN}\sum_{m,i}\left(\left(\hat{f}_{t}-H_{NT,1}'f_{0,t}\right)'H_{NT,1}^{-1}\lambda_{i}(\tau_{m})\right)^{2}\\
  = &\frac{\underline{\textsf{f}}}{4}\left(\hat{f}_{t}-H_{NT,1}'f_{0,t}\right)'H_{NT,1}^{-1}\left(\frac{1}{MN}\sum_{m,i}\lambda_{i}(\tau_{m})\lambda_{0,i}'(\tau_{m})\right)H_{NT,1}^{-1'}\left(\hat{f}_{t}-H_{NT,1}'f_{0,t}\right)+O\left(h^{\gamma}\right)\\
  \geq & \frac{\underline{\textsf{f}}\sigma_{r}^{2}}{4}\left\|\hat{f}_{t}-H_{NT,1}'f_{0,t}\right\|_{F}^{2}+O\left(h^{\gamma}\right),
\end{align*}
where the term $O\left(h^{\gamma}\right)$ is uniform in $t$. The last inequality holds because $H_{NT,1}^{-1}H_{NT,1}^{'-1}=I_{r}$ and $\sum_{m,i}\lambda_{i}(\tau_{m})\lambda_{0,i}'(\tau_{m})/MN$ is diagonal with the smallest diagonal entry $\sigma_{r}^{2}>0$ by Assumption \ref{assum.normalization} and Lemma \ref{lem.discrete}.
Therefore, 
\begin{equation}\label{eq.op1.ind.1}
  \max_{t=1,\ldots,T}\left\|\hat{f}_{t}-H_{NT,1}'f_{0,t}\right\|_{F}\leq \frac{4}{\underline{\textsf{f}}\sigma_{r}^{2}}\max_{t=1,\ldots,T} \bar{\Delta}_{\hat{R}_{h},t}\left(\hat{f}_{t};\Lambda_{0}(\cdot)H_{NT,1}^{'-1}\right)+O\left(h^{\gamma}\right).
\end{equation}

Meanwhile, by the definition of the estimator, for every $t$,
\begin{equation}\label{eq.op1.ind.2}
  \Delta_{\hat{R}_{h},t}\left(\hat{f}_{t};\hat{\Lambda}(\cdot)\right)\leq \Delta_{\hat{R}_{h},t}\left(H_{NT,1}'f_{0,t};\hat{\Lambda}(\cdot)\right)=0.
\end{equation}

Finally, by the mean value theorem,
\begin{align*}
  \Delta_{\hat{R}_{h},t}&\left(\hat{f}_{t};\Lambda_{0}(\cdot)H_{NT,1}^{'-1}\right)=\Delta_{\hat{R}_{h},t}\left(\hat{f}_{t};\hat{\Lambda}(\cdot)\right)\\
  &+\frac{1}{MNT}\sum_{m,i}\left[\hat{R}^{(1)}_{h,\tau_{m}}\left(\lambda_{i}^{*'}(\tau_{m})\hat{f}_{t};Y_{it}\right)-\hat{R}^{(1)}_{h,\tau_{m}}\left(\lambda_{i}^{*'}(\tau_{m})H_{NT,1}'f_{0,t};Y_{it}\right)\right]\left(\hat{\lambda}_{i}(\tau_{m})-\lambda_{0,i}(\tau_{m})\right)\\
  =&\Delta_{\hat{R}_{h},t}\left(\hat{f}_{t};\hat{\Lambda}(\cdot)\right)+O_{p}\left(\zeta_{NT}\right),
\end{align*}
where $\lambda_{i}^{*'}(\tau_{m})$ is the mean value. The term $O_{p}\left(\zeta_{NT}\right)$, which is uniform in $t$, is by the uniform boundedness of $R^{(1)}_{h,\tau_{m}}(\cdot;Y_{it})$ and by Theorem \ref{thm.roc}-(i). This implies that 
\begin{equation}\label{eq.op1.ind.3}
  \max_{t}\left|\Delta_{\hat{R}_{h},t}\left(\hat{f}_{t};\Lambda_{0}(\cdot)H_{NT,1}^{'-1}\right)-\Delta_{\hat{R}_{h},t}\left(\hat{f}_{t};\hat{\Lambda}(\cdot)\right)\right|=o_{p}(1).
\end{equation}

Equations \eqref{eq.op1.ind.1}, \eqref{eq.op1.ind.2} and \eqref{eq.op1.ind.3} imply that
\begin{align*}
 &\max_{t}\left\|\hat{f}_{t}-H_{NT,1}'f_{0,t}\right\|_{F}\\
 \lesssim&\max_{t}\left(\bar{\Delta}_{\hat{R}_{h},t}\left(\hat{f}_{t};\Lambda_{0}(\cdot)H_{NT,1}^{'-1}\right)-\Delta_{\hat{R}_{h},t}\left(\hat{f}_{t};\Lambda_{0}(\cdot)H_{NT,1}^{'-1}\right)+\Delta_{\hat{R}_{h},t}\left(\hat{f}_{t};\Lambda_{0}(\cdot)H_{NT,1}^{'-1}\right)\right)+o_{p}(1)\\
 =&\max_{t}\left(\bar{\Delta}_{\hat{R}_{h},t}\left(\hat{f}_{t};\Lambda_{0}(\cdot)H_{NT,1}^{'-1}\right)-\Delta_{\hat{R}_{h},t}\left(\hat{f}_{t};\Lambda_{0}(\cdot)H_{NT,1}^{'-1}\right)+\Delta_{\hat{R}_{h},t}\left(\hat{f}_{t};\hat{\Lambda}(\cdot)\right)\right)+o_{p}(1)\\
 \leq &\max_{t}\left(\bar{\Delta}_{\hat{R}_{h},t}\left(\hat{f}_{t};\Lambda_{0}(\cdot)H_{NT,1}^{'-1}\right)-\Delta_{\hat{R}_{h},t}\left(\hat{f}_{t};\Lambda_{0}(\cdot)H_{NT,1}^{'-1}\right)\right)+o_{p}(1)\\
 =&\max_{t}\sup_{f\in\mathcal{B}}\left|\Delta_{\hat{R}_{h},t}\left(f;\Lambda_{0}(\cdot)H_{NT,1}^{'-1}\right)-\bar{\Delta}_{\hat{R}_{h},t}\left(f;\Lambda_{0}(\cdot)H_{NT,1}^{'-1}\right)\right|+o_{p}(1)=o_{p}(1),
\end{align*}
where the last equality follows a similar argument as equation \eqref{eq.op1.obj.final}.
\end{proof}

\renewcommand{\thesection}{S.B}
\section{Proofs of Lemmas in Appendix \ref{appx.dist}}
\begin{proof}[Proof of Lemma \ref{lem.tilde.rate}]
We can prove these results by almost the same argument for Theorem \ref{thm.roc} except that we need to handle the estimated inverse density weights in the objective function. To save space, here we only present how we handle them by showing the counterpart for Lemma \ref{lem.avgcons}, i.e.,
\[
  \frac{1}{MNT}\left\|\tilde{L}(\tau_{m})-L_{0}(\tau_{m})\right\|_{F}^{2}=o_{p}\left(1\right).
\]
 We can use a similar argument to prove the counterparts of Lemmas \ref{lem.psi2}, \ref{lem.uniformcons} and thus Theorem \ref{thm.roc} since Lemma \ref{lem.metrics} holds regardless of the weights.

Let $\Omega^{(a,b)}$ ($a,b=1,2$) be the event that $\max_{m,i\in\mathcal{N}_{a},t\in\mathcal{T}_{b}}|\widehat{1/\textsf{f}_{\tau_{m},it}(0)}-1/\textsf{f}_{\tau_{m},it}(0)|\leq \log(NT)\psi_{NT}$. Theorem \ref{thm.firststage} implies that $\max_{a,b}|\Pr(\Omega^{(a,b)})-1|\to 0$.
Let $W^{(a,b)}=\{\widehat{1/\textsf{f}_{\tau_{m},it}(0)}:m=1,\ldots,M,i\in\mathcal{N}_{a},t\in\mathcal{T}_{b}\}$. Let 
\[\tilde{R}_{h}^{(a,b)}(\Lambda(\cdot),F)\coloneqq \frac{1}{MNT}\sum_{m=1}^{M}\sum_{i\in\mathcal{N}_{a}}\sum_{t\in\mathcal{T}_{b}}\widehat{\frac{1}{\textsf{f}_{\tau_{m},it}(0)}}\hat{R}_{h,\tau_{m}}(\lambda_{i}(\tau_{m})'f_{t};Y_{it}).\]
For any $\theta\in\Theta$, let $\Delta_{\tilde{R}_{h}}^{(a,b)}(\theta)\coloneqq \tilde{R}_{h}^{(a,b)}(\Lambda(\cdot),F)-\tilde{R}_{h}^{(a,b)}(\Lambda_{0}(\cdot),F_{0})$, 
$\bar{\Delta}^{(a,b)}_{\tilde{R}_{h}}(\theta)\coloneqq \mathbb{E}(\Delta_{\tilde{R}_{h}}^{(a,b)}(\theta)|W^{(a,b)},\Omega^{(a,b)})$,
and $\tilde{S}_{h}^{(a,b)}(\theta)\coloneqq \Delta_{\tilde{R}_{h}}^{(a,b)}(\theta)-\bar{\Delta}_{\tilde{R}_{h}}^{(a,b)}(\theta)$.

First, there exists a constant $C_{1}>0$ such that for any $(i,t)\in\mathcal{N}_{a}\times \mathcal{T}_{b}$ for all $a,b\in\{1,2\}$,
\begin{align}
  &\inf_{(\lambda',f')'\in\mathcal{B}^{2r},m,i\in\mathcal{N}_{a},t\in\mathcal{T}_{b}}\mathbb{E}\left(\widehat{\frac{1}{\textsf{f}_{\tau_{m},it}(0)}}\hat{R}^{(2)}_{h,\tau_{m}}(\lambda'f;Y_{it})\Bigg|W^{(a,b)},\Omega^{(a,b)}\right)\notag\\
   =&\inf_{(\lambda',f')'\in\mathcal{B}^{2r},m,i\in\mathcal{N}_{a},t\in\mathcal{T}_{b}}\left[\mathbb{E}\left(\widehat{\frac{1}{\textsf{f}_{\tau_{m},it}(0)}}\Bigg|W^{(a,b)},\Omega^{(a,b)}\right)\cdot \mathbb{E}\left(\hat{R}^{(2)}_{h,\tau_{m}}(\lambda'f;Y_{it})\right)\right]\notag\\
 \geq   &\inf_{(\lambda',f')'\in\mathcal{B}^{2r},m,i\in\mathcal{N}_{a},t\in\mathcal{T}_{b}}\mathbb{E}\left(\widehat{\frac{1}{\textsf{f}_{\tau_{m},it}(0)}}\Bigg|W^{(a,b)},\Omega^{(a,b)}\right)\cdot \inf_{(\lambda',f')'\in\mathcal{B}^{2r},m,i\in\mathcal{N}_{a},t\in\mathcal{T}_{b}}\mathbb{E}\left(\hat{R}^{(2)}_{h,\tau_{m}}(\lambda'f;Y_{it})\right)\notag\\
\geq &\left(\min_{m,i\in\mathcal{N}_{a},t\in\mathcal{T}_{b}}\frac{1}{\textsf{f}_{\tau_{m},it}(0)}-\log(NT)\psi_{NT}\right)\left(\underline{\textsf{f}}+O(h^{\gamma})\right) >C_{1},\label{eq.tilde.lowerbound1}
\end{align}
where the first equality is by $\{Y_{it}:i\in\mathcal{N}_{a},t\in\mathcal{T}_{b}\}\perp W^{(a,b)}$ and that the event $\Omega$ only depends on $W^{(a,b)}$. The inequality following it is because for sufficiently large $N$ and $T$, the two expectations are positive uniformly in $\lambda,f,m,i,t$ by Assumption \ref{assum.density}, Lemma \ref{lem.nonpara} and $\Omega^{(a,b)}$. Similarly,
\begin{align*}
  &\max_{m,i\in\mathcal{N}_{a},t\in\mathcal{T}_{b}}\left|\mathbb{E}\left(\widehat{\frac{1}{\textsf{f}_{\tau_{m},it}(0)}}\hat{R}^{(1)}_{h,\tau_{m},it}\Bigg|W^{(a,b)},\Omega^{(a,b)} \right)\right|\\
  \leq &\max_{m,i\in\mathcal{N}_{a},t\in\mathcal{T}_{b}}\left(\frac{1}{\textsf{f}_{\tau_{m},it}(0)}+\log(NT)\psi_{NT}\right)\cdot \max_{m,i\in\mathcal{N}_{a},t\in\mathcal{T}_{b}}\left|\mathbb{E}\left(\hat{R}^{(1)}_{h,\tau_{m},it}\right)\right|=O\left(h^{\gamma}\right).
\end{align*}

Therefore, by $\bar{\Delta}^{(a,b)}_{\tilde{R}_{h}}(\theta_{0})=0$, letting $\tilde{\theta}$ be the vector of all $\tilde{\lambda}_{i}(\tau_{m})$ and $\tilde{f}_{t}$, we have the following inequality by Taylor expansion similar to \eqref{eq.expectation.lb}, 
\begin{equation}
\bar{\Delta}^{(a,b)}_{\tilde{R}_{h}}\left(\tilde{\theta}\right)\geq \frac{C_{1}}{2MNT}\sum_{m=1}^{M}\sum_{i\in\mathcal{N}_{a}}\sum_{t\in\mathcal{T}_{b}}\left(\tilde{\lambda}_{i}'(\tau_{m})\tilde{f}_{t}-L_{0,it}(\tau_{m})\right)^{2}+O\left(h^{\gamma}\right).
\end{equation}

On the other hand,
\begin{align*}
\sum_{a=1}^{2}\sum_{b=1}^{2}\Delta^{(a,b)}_{\tilde{R}_{h}}\left(\tilde{\theta}\right)\leq \sum_{a=1}^{2}\sum_{b=1}^{2}\Delta^{(a,b)}_{\tilde{R}_{h}}\left(\theta_{0}\right)=0,
\end{align*}
where the inequality is by the definition of the estimator. Therefore,
\begin{align}
\frac{1}{MNT}&\left\|\tilde{L}(\tau_{m})-L_{0}(\tau_{m})\right\|_{F}^{2}=\frac{1}{MNT}\sum_{a=1}^{2}\sum_{b=1}^{2}\sum_{m=1}^{M}\sum_{i\in\mathcal{N}_{a}}\sum_{t\in\mathcal{T}_{b}}\left(\tilde{\lambda}_{i}'(\tau_{m})\tilde{f}_{t}-L_{0,it}(\tau_{m})\right)^{2}\notag\\
\leq&\frac{2}{C_{1}}\sum_{a=1}^{2}\sum_{b=1}^{2}\bar{\Delta}^{(a,b)}_{\tilde{R}_{h}}\left(\tilde{\theta}\right)+O\left(h^{\gamma}\right)\leq \frac{2}{C_{1}}\sum_{a=1}^{2}\sum_{b=1}^{2}\left(\bar{\Delta}^{(a,b)}_{\tilde{R}_{h}}\left(\tilde{\theta}\right)-\Delta^{(a,b)}_{\tilde{R}_{h}}\left(\tilde{\theta}\right)\right)+O\left(h^{\gamma}\right)\notag\\
\leq &\frac{2}{C_{1}}\sum_{a=1}^{2}\sum_{b=1}^{2}\sup_{\theta\in\Theta}\left|\tilde{S}^{(a,b)}_{h}\left(\theta\right) \right|+O\left(h^{\gamma}\right).\label{eq.tilde.bound}
\end{align}
Following the same argument for \eqref{eq.op1.obj.final}, by $\Omega^{(a,b)}$ and by independence between $\{Y_{it}:i\in\mathcal{N}_{a},t\in\mathcal{T}_{b}\}$ and $W^{(a,b)}$, we can show that for any $\varepsilon>0$,
\[
\Pr\left(\sup_{\theta\in\Theta}\left|\tilde{S}^{(a,b)}_{h}\left(\theta\right) \right|>\frac{\varepsilon}{4}\Bigg|W^{(a,b)},\Omega^{(a,b)} \right)=o(1),
\]
where $o(1)$ is uniform in the realization of $W^{(a,b)}$ under $\Omega^{(a,b)}$. Note that the $\psi_{2}$-norm involved in the proof needs to be defined in terms of expectation conditional on $(W^{(a,b)},\Omega^{(a,b)})$.

Therefore,
\begin{align*}
\Pr&\left(\sum_{a=1}^{2}\sum_{b=1}^{2}\sup_{\theta\in\Theta}\left|\tilde{S}^{(a,b)}_{h}\left(\theta\right) \right|>\varepsilon\right)\leq 4\max_{a,b}\Pr\left(\left|\tilde{S}^{(a,b)}_{h}\left(\theta\right) \right|>\frac{\varepsilon}{4}\right)\\
\leq &4\max_{a,b}\Pr\left(\left|\tilde{S}^{(a,b)}_{h}\left(\theta\right) \right|>\frac{\varepsilon}{4}\Bigg|\Omega^{(a,b)}\right)\Pr(\Omega^{(a,b)})+4\left(1-\min_{a,b}\Pr(\Omega^{(a,b)})\right)\\
=&4\max_{a,b}\mathbb{E}\left[\Pr\left(\left|\tilde{S}^{(a,b)}_{h}\left(\theta\right) \right|>\frac{\varepsilon}{4}\Bigg|W^{(a,b)},\Omega^{(a,b)}\right)\Bigg|\Omega^{(a,b)}\right]+4\left(1-\min_{a,b}\Pr(\Omega^{(a,b)})\right)=o(1),
\end{align*}
where the first equality is by the union bound. The third equality is by the law of iterated expectation.  Hence, equation \eqref{eq.tilde.bound} implies that
\[
  \frac{1}{MNT}\left\|\tilde{L}(\tau_{m})-L_{0}(\tau_{m})\right\|_{F}^{2}=o_{p}\left(1\right).
\]

We can similarly show the counterparts of Lemmas \ref{lem.psi2}\footnote{The expectation in Lemmas \ref{lem.psi2} needs to be changed to \[\mathbb{E}\left[\sup_{\theta\in\Theta(\delta,d_{1})}\left|\tilde{S}^{(a,b)}_{h}(\theta)\right|\Bigg|W^{(a,b)},\Omega^{(a,b)}\right].\] }, \ref{lem.uniformcons} and thus Theorem \ref{thm.roc} using the same argument.
\end{proof}

\begin{proof}[Proof of Lemma \ref{lem.inv}]
For simplicity, define $\tilde{f}_{0,t}\coloneqq \tilde{H}_{NT,1}'f_{0,t}$ and $\tilde{F}_{0}=F_{0}\tilde{H}_{NT,1}$. By orthogonality of $\tilde{H}_{NT,1}$ and Assumption \ref{assum.normalization}, we have $\tilde{H}_{NT,1}'F'_{0}F_{0}\tilde{H}_{NT,1}/T=I_{r}$. So, for all $j,j'=1,\ldots,r$ such that $j\neq j'$,
\begin{align*}
  \frac{1}{T}\sum_{t=1}^{T}\tilde{f}_{0,tj}^{2}=1,\ \ \ \ \frac{1}{T}\sum_{t=1}^{T}\tilde{f}_{0,tj}\tilde{f}_{0,tj'}=0.
\end{align*}
Similarly, by normalization, we have the following for the estimator:
\begin{align*}
  \frac{1}{T}\sum_{t=1}^{T}\tilde{f}_{tj}^{2}=1,\ \ \ \ \frac{1}{T}\sum_{t=1}^{T}\tilde{f}_{tj}\tilde{f}_{tj'}=0.
\end{align*}

Therefore,
\begin{align}
\frac{1}{T}\sum_{t=1}^{T}\tilde{f}_{tj}\tilde{f}_{0,tj}=&1-\frac{1}{2T}\sum_{t=1}^{T}\left(\tilde{f}_{tj}-\tilde{f}_{0,tj}\right)^{2}=1+O_{p}\left(\zeta_{NT}^{2}\right),\label{eq.rotation.same}
\end{align}
where $O_{p}(\zeta_{NT}^{2})$ is by Lemma \ref{lem.tilde.rate} and is uniform in $j=1,\ldots,r$. By $\tilde{f}_{0,tj}=f_{0,tj}\tilde{H}_{NT,1}(j,j)$ where $\tilde{H}_{NT,1}(j,j)$ is the $j$-th diagonal entry in $\tilde{H}_{NT,1}$ satisfying $\tilde{H}_{NT,1}(j,j)=1/\tilde{H}_{NT,1}(j,j)$, 
\begin{equation*}
  \frac{1}{T}\sum_{t=1}^{T}\tilde{f}_{tj}f_{0,tj}=\tilde{H}_{NT,1}(j,j)+O_{p}\left(\zeta_{NT}^{2}\right).
\end{equation*}
Equation \eqref{eq.H2inv.diag} is thus shown since the above equation holds for all $j$.

Similarly,
\begin{align}
&\left|\frac{1}{T}\sum_{t=1}^{T}\tilde{f}_{tj}\tilde{f}_{0,tj'}+\frac{1}{T}\sum_{t=1}^{T}\tilde{f}_{tj'}\tilde{f}_{0,tj}\right|\notag\\
=&\left|\frac{1}{T}\sum_{t=1}^{T}\left(\tilde{f}_{tj}-\tilde{f}_{0,tj}\right)\left(\tilde{f}_{tj'}-\tilde{f}_{0,tj'}\right)\right|\lesssim\frac{1}{T}\left\|\tilde{F}-\tilde{F}_{0}\right\|_{F}^{2}=O_{p}\left(\zeta_{NT}^{2}\right),\label{eq.rotation.diff1}
\end{align}

Meanwhile,
\begin{equation}\label{eq.rotation.diff2}
  \frac{1}{T}\sum_{t=1}^{T}\tilde{f}_{tj}\tilde{f}_{0,tj'}=\frac{1}{T}\sum_{t=1}^{T}\left(\tilde{f}_{tj}-\tilde{f}_{0,tj}\right)\tilde{f}_{0,tj'}=O_{p}\left(\zeta_{NT}\right).
\end{equation}
Combining equations \eqref{eq.rotation.same} and \eqref{eq.rotation.diff2} yields
\begin{align*}
H_{NT,2}'\tilde{H}_{NT,1}-I_{r}=&\frac{1}{T}\sum_{t=1}^{T}\tilde{f}_{t}f_{0,t}'\tilde{H}_{NT,1}-I_{r}=\frac{1}{T}\sum_{t=1}^{T}\tilde{f}_{t}\tilde{f}_{0,t}'-I_{r}=O_{p}\left(\zeta_{NT}\right).
\end{align*}
Therefore, by orthogonality of $\tilde{H}_{NT,1}$,
\begin{equation*}
H_{NT,2}'=\tilde{H}_{NT,1}^{-1}+O_{p}(\zeta_{NT})=\tilde{H}_{NT,1}'+O_{p}(\zeta_{NT}).
\end{equation*}
Equation \eqref{eq.H2inv.0} is proved.

Now we prove equation \eqref{eq.H2inv.1}. Define $\bar{H}_{NT,2}\coloneqq \tilde{H}_{NT,1}'H_{NT,2}H_{NT,2}'\tilde{H}_{NT,1}$, that is,
\begin{align*}
  \bar{H}_{NT,2}=&\tilde{H}_{NT,1}'\left(\frac{F_{0}'\tilde{F}}{T}\right)\left(\frac{\tilde{F}'F_{0}}{T}\right)\tilde{H}_{NT,1}=\left(\frac{\tilde{F}_{0}'\tilde{F}}{T}\right)\left(\frac{\tilde{F}'\tilde{F}_{0}}{T}\right).
\end{align*}
We first show that $\bar{H}_{NT,2}=I_{r}+O_{p}(\zeta_{NT}^{2})$. The desired result will then follow from orthogonality of $\tilde{H}_{NT,1}$.

The $(j,k)$-th entry of $\bar{H}_{NT,2}$, denoted by $\bar{H}_{NT,2}(j,k)$, satisfies the following:
\begin{align*}
\bar{H}_{NT,2}\left(j,k\right)=&\sum_{s=1}^{r}\left(\frac{1}{T}\sum_{t=1}^{T}\tilde{f}_{0,tj}\tilde{f}_{ts}\right)\left(\frac{1}{T}\sum_{t=1}^{T}\tilde{f}_{ts}\tilde{f}_{0,tk}\right).
\end{align*}

If $j=k$,
\begin{align*}
\sum_{s=1}^{r}\left(\frac{1}{T}\sum_{t=1}^{T}\tilde{f}_{0,tj}\tilde{f}_{ts}\right)\left(\frac{1}{T}\sum_{t=1}^{T}\tilde{f}_{ts}\tilde{f}_{0,tk}\right)=&\left(\frac{1}{T}\sum_{t=1}^{T}\tilde{f}_{0,tj}\tilde{f}_{tj}\right)^{2}+\sum_{s\neq j}\left(\frac{1}{T}\sum_{t=1}^{T}\tilde{f}_{0,tj}\tilde{f}_{ts}\right)\left(\frac{1}{T}\sum_{t=1}^{T}\tilde{f}_{0,tj}\tilde{f}_{ts}\right)\\
=&1+O_{p}\left(\zeta_{NT}^{2}\right),
\end{align*}
where the last equality follows equations \eqref{eq.rotation.same} and \eqref{eq.rotation.diff2}.

If $j\neq k$,
\begin{align*}
&\sum_{s=1}^{r}\left(\frac{1}{T}\sum_{t=1}^{T}\tilde{f}_{0,tj}\tilde{f}_{ts}\right)\left(\frac{1}{T}\sum_{t=1}^{T}\tilde{f}_{ts}\tilde{f}_{0,tk}\right)\\
=&\left(\frac{1}{T}\sum_{t=1}^{T}\tilde{f}_{0,tj}\tilde{f}_{tj}\right)\left(\frac{1}{T}\sum_{t=1}^{T}\tilde{f}_{tj}\tilde{f}_{0,tk}\right)+
\left(\frac{1}{T}\sum_{t=1}^{T}\tilde{f}_{0,tj}\tilde{f}_{tk}\right)\left(\frac{1}{T}\sum_{t=1}^{T}\tilde{f}_{tk}\tilde{f}_{0,tk}\right)\\
&+\sum_{s\neq j,k}\left(\frac{1}{T}\sum_{t=1}^{T}\tilde{f}_{0,tj}\tilde{f}_{ts}\right)\left(\frac{1}{T}\sum_{t=1}^{T}\tilde{f}_{ts}\tilde{f}_{0,tk}\right)\\
=&\left(1+O_{p}\left(\zeta_{NT}^{2}\right)\right)\left(\frac{1}{T}\sum_{t=1}^{T}\tilde{f}_{tj}\tilde{f}_{0,tk}\right)+
\left(\frac{1}{T}\sum_{t=1}^{T}\tilde{f}_{0,tj}\tilde{f}_{tk}\right)\left(1+O_{p}\left(\zeta_{NT}^{2}\right)\right)+O_{p}\left(\zeta_{NT}^{2}\right)\\
=&\frac{1}{T}\sum_{t=1}^{T}\tilde{f}_{tj}\tilde{f}_{0,tk}+
\frac{1}{T}\sum_{t=1}^{T}\tilde{f}_{0,tj}\tilde{f}_{tk}+O_{p}\left(\zeta_{NT}^{2}\right)=O_{p}\left(\zeta_{NT}^{2}\right),
\end{align*}
where the second equality follows equations \eqref{eq.rotation.same} and \eqref{eq.rotation.diff2}, and the last equality is by \eqref{eq.rotation.diff1}. Therefore,
\begin{equation*}
  \bar{H}_{NT,2}=I_{r}+O_{p}\left(\zeta_{NT}^{2}\right).
\end{equation*}
Hence,
\begin{align*}
  H_{NT,2}H_{NT,2}'=&\left(\tilde{H}_{NT,1}'\right)^{-1}\bar{H}_{NT,2}\tilde{H}_{NT,1}^{-1}=\tilde{H}_{NT,1}\bar{H}_{NT,2}\tilde{H}_{NT,1}'\\
=&\tilde{H}_{NT,1}\tilde{H}_{NT,1}'+O_{p}\left(\zeta_{NT}^{2}\right)=I_{r}+O_{p}\left(\zeta_{NT}^{2}\right),
\end{align*}
where the second and last equality are by orthogonality of $\tilde{H}_{NT,1}$. Equation \eqref{eq.H2inv.1} is proved.

Finally, to show \eqref{eq.H2inv.2}, left and right multiplying the two sides of \eqref{eq.H2inv.1} by $H_{NT,2}'$ and $H_{NT,2}$ leads to
\begin{align}
H_{NT,2}'H_{NT,2}H_{NT,2}'H_{NT,2}=H_{NT,2}'H_{NT,2}+O_{p}\left(\zeta_{NT}^{2}\right)\label{eq.rotation.finaltransform}
\end{align}
By \eqref{eq.H2inv.0} and by $\tilde{H}_{NT,1}'\tilde{H}_{NT,1}=I_{r}$, we have the preliminary rate $(H_{NT,2}'H_{NT,2}-I_{r})=O_{p}(\zeta_{NT})$. Substitute it into \eqref{eq.rotation.finaltransform} and rearrange the terms:
\begin{align*}
\left(I_{r}+O_{p}(\zeta_{NT})\right)\left(H_{NT,2}'H_{NT,2}-I_{r}\right)=O_{p}\left(\zeta_{NT}^{2}\right),
\end{align*}
which implies \eqref{eq.H2inv.2}.
\end{proof}

\begin{proof}[Proof of Lemma \ref{lem.expansion}] 
First, note that by \eqref{eq.H2inv.0} in Lemma \ref{lem.inv} and by the boundedness of $f_{0,t}$ and $\lambda_{0,i}(\tau_{m})$, the results in Lemma \ref{lem.tilde.rate} hold by replacing $\tilde{H}_{NT,1}$ with $H_{NT,2}$.
Now consider the first order conditions with respect to $\tilde{\lambda}_{i}(\tau_{m})$ and $\tilde{f}_{t}$:
\begin{align}
\frac{1}{T}\sum_{s=1}^{T}\widehat{\frac{1}{\textsf{f}_{\tau_{m},is}(0)}}\hat{R}^{(1)}_{h,\tau_{m}}\left(\tilde{\lambda}'_{i}(\tau_{m})\tilde{f}_{s};Y_{is}\right)\tilde{f}_{s}&=0,\label{eq.foc.lambda}
\\
  \frac{1}{MN}\sum_{i=1}^{n}\sum_{m=1}^{M}\widehat{\frac{1}{\textsf{f}_{\tau_{m},it}(0)}}\hat{R}^{(1)}_{h,\tau_{m}}\left(\tilde{\lambda}'_{i}(\tau_{m})\tilde{f}_{t};Y_{it}\right)\tilde{\lambda}_{i}(\tau_{m})&=0.\label{eq.foc.f}
\end{align}
We first Taylor expand equation \eqref{eq.foc.lambda} around $H_{NT,2}'f_{0,t}$ and $H_{NT,2}^{-1}\lambda_{0,i}(\tau_{m})$ to the second order. For simplicity, denote the rotated true parameters $F_{0}H_{NT,2}$, $\Lambda_{0}(\tau_{m})H_{NT,2}^{'-1},H_{NT,2}'f_{0,t}$ and $H_{NT,2}^{-1}\lambda_{0,i}(\tau_{m})$ by $F_{0}^{H}$, $\Lambda^{H}_{0}(\tau_{m})$, $f_{0,t}^{H}$ and $\lambda^{H}_{0,i}(\tau_{m})$, respectively.
\begin{align*}
0=&\frac{1}{T}\sum_{s=1}^{T}\widehat{\frac{1}{\textsf{f}_{\tau_{m},is}(0)}}\hat{R}^{(1)}_{h,\tau_{m},is}\ \cdot f^{H}_{0,s}+\frac{1}{T}\sum_{s=1}^{T}\widehat{\frac{1}{\textsf{f}_{\tau_{m},is}(0)}}\hat{R}^{(1)}_{h,\tau_{m},is}\cdot\left(\tilde{f}_{s}-f^{H}_{0,s}\right)\\
&+\frac{1}{T}\sum_{s=1}^{T}\widehat{\frac{1}{\textsf{f}_{\tau_{m},is}(0)}}\hat{R}^{(2)}_{h,\tau_{m},is}\cdot f^{H}_{0,s}\lambda^{H'}_{0,i}(\tau_{m})\left(\tilde{f}_{s}-f^{H}_{0,s}\right)\\
&+\frac{1}{T}\sum_{s=1}^{T}\widehat{\frac{1}{\textsf{f}_{\tau_{m},is}(0)}}\hat{R}^{(2)}_{h,\tau_{m},is}\cdot f^{H}_{0,s}f^{H'}_{0,s}\left(\tilde{\lambda}_{i}(\tau_{m})-\lambda^{H}_{0,i}(\tau_{m})\right)\\
&+O_{p}\left(\max_{m,i,t}\left|\widehat{\frac{1}{\textsf{f}_{\tau_{m},it}(0)}}\right|\right) O_{p}\left(\frac{1}{Th^{2}}\sum_{t=1}^{T}\left\|\tilde{F}-F^{H}_{0}\right\|_{F}^{2}\right)\\
&+O_{p}\left(\max_{m,i,t}\left|\widehat{\frac{1}{\textsf{f}_{\tau_{m},it}(0)}}\right|\right)O_{p}\left(\frac{1}{h^{2}}\left\|\tilde{\lambda}_{i}(\tau_{m})-\lambda^{H}_{0,i}(\tau_{m})\right\|_{F}^{2}\right)\\
&+O_{p}\left(\max_{m,i,t}\left|\widehat{\frac{1}{\textsf{f}_{\tau_{m},it}(0)}}\right|\right)O_{p}\left(\frac{1}{h^{2}}\sqrt{\frac{1}{T}\sum_{s=1}^{T}\left\|\tilde{F}-F^{H}_{0}\right\|_{F}^{2}}\left\|\tilde{\lambda}_{i}(\tau_{m})-\lambda^{H}_{0,i}(\tau_{m})\right\|_{F}\right),
\end{align*}
where the last three terms are due to the second order Taylor expansion.  By Lemma \ref{lem.tilde.rate} with $\tilde{H}_{NT,1}$ replaced by $H_{NT,2}$, by Theorem \ref{thm.firststage} and by Assumptions \ref{assum.density} and \ref{assum.kernel}, the last three terms are $o_{p}(h^{2}/\sqrt{T})$ uniformly in $m,i,t$.

We next consider the first three terms on the right-hand side. We first show that replacing $\widehat{1/\textsf{f}_{\tau_{m},is}(0)}$ and $\hat{R}^{(1)}_{h,\tau_{m},is}$ with $1/\textsf{f}_{\tau_{m},is}(0)$ and $\eta_{h,\tau_{m},is}$ respectively in these terms only causes a difference of $o_{p}(h^{2}/\sqrt{T})$.

For the first term,
\begin{align*}
&\max_{m,i}\left\|\frac{1}{T}\sum_{s=1}^{T}\widehat{\frac{1}{\textsf{f}_{\tau_{m},is}(0)}}\hat{R}^{(1)}_{h,\tau_{m},is}\cdot f^{H}_{0,s}-\frac{1}{T}\sum_{s=1}^{T}\frac{1}{\textsf{f}_{\tau_{m},is}(0)}\eta_{h,\tau_{m},is}\cdot f^{H}_{0,s}\right\|_{F}\\{}
\leq&\max_{m,i}\left\|\frac{1}{T}\sum_{s=1}^{T}\widehat{\frac{1}{\textsf{f}_{\tau_{m},is}(0)}}\eta_{h,\tau_{m},is} \cdot f^{H}_{0,s}-\frac{1}{T}\sum_{s=1}^{T}\frac{1}{\textsf{f}_{\tau_{m},is}(0)}\eta_{h,\tau_{m},is} \cdot f^{H}_{0,s}\right\|_{F}+O_{p}\left(h^{\gamma}\right)\\
\leq &\sum_{a=1}^{2}\sum_{b=1}^{2}\max_{m,i\in\mathcal{N}_{a}}\left\|\frac{1}{T}\sum_{s\in\mathcal{T}_{b}}\left(\widehat{\frac{1}{\textsf{f}_{\tau_{m},is}(0)}}-\frac{1}{\textsf{f}_{\tau_{m},is}(0)}\right)\eta_{h,\tau_{m},is} \cdot f_{0,s}\right\|_{F}\cdot\left\|H_{NT,2}\right\|_{F} +O_{p}\left(h^{\gamma}\right)\\
=&O_{p}\left(\frac{\sqrt{\log\left(MN\right)}\psi_{NT}}{\sqrt{T}}\right)+O_{p}\left(h^{\gamma}\right)=o_{p}\left(\frac{h^{2}}{\sqrt{T}}\right),
\end{align*}
where the first inequality is by Lemma \ref{lem.nonpara} and by the uniform boundedness of $\widehat{1/\textsf{f}_{\tau_{m},is}(0)}$ with probability approaching 1 by Theorem \ref{thm.firststage} and Assumption \ref{assum.density}. The last equality is by Assumption \ref{assum.kernel} and by the requirements on $h_{d}$ in Theorem \ref{thm.dist2}. To see the penultimate equality, first note that $\|H_{NT,2}\|_{F}$ is bounded with probability 1. Denote the event that $\max_{m,i\in\mathcal{N}_{a},t\in\mathcal{T}_{b}}|\widehat{1/\textsf{f}_{\tau_{m},it}(0)}-1/\textsf{f}_{\tau_{m},it}(0)|\leq C_{1}\psi_{NT}$ by $\Omega^{(a,b)}$ and $W^{(a,b)}=\{\widehat{1/\textsf{f}_{\tau_{m},it}(0)}:m=1,\ldots,M,i\in\mathcal{N}_{a},t\in\mathcal{T}_{b}\}$. For any fixed $\varepsilon>0$, there exist $C_{1},C_{2}>0$ such that for sufficiently large $N$ and $T$, we have
\begin{align}
&\Pr\left(\max_{m,i\in\mathcal{N}_{a}}\left\|\frac{1}{T}\sum_{s\in\mathcal{T}_{b}}\left(\widehat{\frac{1}{\textsf{f}_{\tau_{m},is}(0)}}-\frac{1}{\textsf{f}_{\tau_{m},is}(0)}\right)\eta_{h,\tau_{m},is} \cdot f_{0,s}\right\|\geq \frac{C_{2}\sqrt{\log\left(MN\right)}\psi_{NT}}{\sqrt{T}} \right)\notag\\
\leq &\Pr\left(\max_{m,i\in\mathcal{N}_{a}}\left\|\frac{1}{T}\sum_{s\in\mathcal{T}_{b}}\left(\widehat{\frac{1}{\textsf{f}_{\tau_{m},is}(0)}}-\frac{1}{\textsf{f}_{\tau_{m},is}(0)}\right)\eta_{h,\tau_{m},is} \cdot f_{0,s}\right\|\geq \frac{C_{2}\sqrt{\log\left(MN\right)}\psi_{NT}}{\sqrt{T}}\Bigg|\Omega^{(a,b)}  \right)\Pr\left(\Omega^{(a,b)}\right)\notag\\
&+\left(1-\Pr(\Omega^{(a,b)})\right)\notag\\
\leq& MN\max_{m,i\in\mathcal{N}_{a}}\mathbb{E}\Bigg[\Pr\Bigg(\left\|\frac{1}{T}\sum_{s\in\mathcal{T}_{b}}\left(\widehat{\frac{1}{\textsf{f}_{\tau_{m},is}(0)}}-\frac{1}{\textsf{f}_{\tau_{m},is}(0)}\right)\eta_{h,\tau_{m},is} \cdot f_{0,s}\right\|\notag\\
&\ \ \ \ \ \ \ \ \ \ \ \ \ \ \ \ \ \ \ \ \ \ \ \ \ \ \ \ \geq \frac{C_{2}\sqrt{\log\left(MN\right)}\psi_{NT}}{\sqrt{T}}\Bigg|\Omega^{(a,b)},W^{(a,b)}  \Bigg)\Bigg|\Omega^{(a,b)} \Bigg]+\left(1-\Pr(\Omega^{(a,b)})\right)\notag\\
\leq &\varepsilon,\label{eq.subsample}
\end{align}
where the second inequality is by the law of iterated expectation. The last inequality is by Theorem \ref{thm.firststage} and by the Hoeffding's inequality in view that the random variables $\eta_{h,\tau_{m},is}$ are bounded and independent across $s$ for all $(i,s)\in\mathcal{N}_{a}\times \mathcal{T}_{b}$ conditional on $(W^{(a,b)},\Omega^{(a,b)})$, implied by Assumption \ref{assum.iid} and independence between $\{\eta_{h,\tau_{m},is}:i\in\mathcal{N}_{a},s\in\mathcal{T}_{b}\}$ and $W^{(a,b)}$. 

For the second and third terms, the differences caused by replacing $\widehat{1/\textsf{f}_{\tau_{m},is}(0)}$ and $\hat{R}^{(1)}_{h,\tau_{m},is}$ with $1/\textsf{f}_{\tau_{m},is}(0)$ and $\eta_{h,\tau_{m},is}$ respectively  are $O_{p}\left(\zeta_{NT}\psi_{NT}/h\right)+O_{p}(h^{\gamma}\zeta_{NT})=o_{p}(h^{2}/\sqrt{T})$ uniformly in $m$ and $i$ by the average rate of convergence of $\tilde{F}$ in Lemma \ref{lem.tilde.rate} by replacing $\tilde{H}_{NT,1}$ with $H_{NT,2}$, by the boundedness of the kernel function $k(\cdot)$, by Theorem \ref{thm.firststage} and by the requirements on $h$ and $h_{d}$ in Assumption \ref{assum.kernel} and Theorem \ref{thm.dist2}.

Now we consider the fourth term.
\begin{align*}
&\frac{1}{T}\sum_{s=1}^{T}\widehat{\frac{1}{\textsf{f}_{\tau_{m},is}(0)}}\hat{R}^{(2)}_{h,\tau_{m},is}\cdot f^{H}_{0,s}f^{H'}_{0,s}\left(\tilde{\lambda}_{i}(\tau_{m})-\lambda^{H}_{0,i}(\tau_{m})\right)\\
=&\frac{1}{T}\sum_{s=1}^{T}\frac{1}{\textsf{f}_{\tau_{m},is}(0)}\hat{R}^{(2)}_{h,\tau_{m},is}\cdot f^{H}_{0,s}f^{H'}_{0,s}\left(\tilde{\lambda}_{i}(\tau_{m})-\lambda^{H}_{0,i}(\tau_{m})\right)+O_{p}\left(\frac{\psi_{NT}}{h}\right)\left(\tilde{\lambda}_{i}(\tau_{m})-\lambda^{H}_{0,i}(\tau_{m})\right)\\
=&\frac{1}{T}\sum_{s=1}^{T}\frac{1}{\textsf{f}_{\tau_{m},is}(0)}\left(\hat{R}^{(2)}_{h,\tau_{m},is}-\mathbb{E}\left(\hat{R}^{(2)}_{h,\tau_{m},is}\right) \right)\cdot f^{H}_{0,s}f^{H'}_{0,s}\left(\tilde{\lambda}_{i}(\tau_{m})-\lambda^{H}_{0,i}(\tau_{m})\right)\\
&+\frac{1}{T}\sum_{s=1}^{T}\frac{\textsf{f}_{\tau_{m},is}(0)}{\textsf{f}_{\tau_{m},is}(0)}f^{H}_{0,s}f^{H'}_{0,s}\left(\tilde{\lambda}_{i}(\tau_{m})-\lambda^{H}_{0,i}(\tau_{m})\right)\\
&+\left(O\left(h^{\gamma}\right) +O_{p}\left(\frac{\psi_{NT}}{h}\right)\right)\left(\tilde{\lambda}_{i}(\tau_{m})-\lambda^{H}_{0,i}(\tau_{m})\right)\\
=&\left(1+O_{p}\left(\zeta_{NT}^{2}\right)+O_{p}\left(\frac{\sqrt{\log(MN)}}{\sqrt{Th}}\right)+O\left(h^{\gamma}\right) +O_{p}\left(\frac{\psi_{NT}}{h}\right)\right)\left(\tilde{\lambda}_{i}(\tau_{m})-\lambda^{H}_{0,i}(\tau_{m})\right)\\
=&\tilde{\lambda}_{i}(\tau_{m})-\lambda^{H}_{0,i}(\tau_{m})+o_{p}\left(\frac{h^{2}}{\sqrt{T}}\right),
\end{align*}
where the first equality is by Theorem \ref{thm.firststage} and by the boundedness of $h\hat{R}^{(2)}_{h,\tau_{m},is}$. The second equality is by Lemma \ref{lem.nonpara}. The penultimate equality is by $\sum_{s=1}^{T}f^{H}_{0,s}f^{H'}_{0,s}/T=H_{NT,2}'H_{NT,2}=I_{r}+O_{p}(\zeta_{NT}^{2})$ by Lemma \ref{lem.inv} and by the Bernstein's inequality. The last equality is by Assumption \ref{assum.kernel}, by Lemma \ref{lem.tilde.rate} with $\tilde{H}_{NT,1}$ replaced with $H_{NT,2}$ and by the requirements on $h$ and $h_{d}$; the term $o_{p}(h^{2}/\sqrt{T})$ is uniform in $m,i,t$.

Combining these results, we have
\begin{align}
\tilde{\lambda}_{i}(\tau_{m})-\lambda^{H}_{0,i}(\tau_{m})=&-\frac{1}{T}\sum_{s=1}^{T}\frac{1}{\textsf{f}_{\tau_{m},is}(0)}\eta_{h,\tau_{m},is} \cdot f^{H}_{0,s}-\frac{1}{T}\sum_{s=1}^{T}\frac{1}{\textsf{f}_{\tau_{m},is}(0)}\eta_{h,\tau_{m},is}\cdot\left(\tilde{f}_{s}-f^{H}_{0,s}\right)\notag\\
&-\frac{1}{T}\sum_{s=1}^{T}\frac{1}{\textsf{f}_{\tau_{m},is}(0)}\hat{R}^{(2)}_{h,\tau_{m},is}\cdot f^{H}_{0,s}\lambda^{H'}_{0,i}(\tau_{m})\left(\tilde{f}_{s}-f^{H}_{0,s}\right)+o_{p}\left(\frac{h^{2}}{\sqrt{T}}\right),\label{eq.expansion.lambda}
\end{align}
where the term $o_{p}(h^{2}/\sqrt{T})$ is uniform in $m,i,t$.

Similarly, we can expand \eqref{eq.foc.f} and obtain
\begin{align}
&\left(\frac{1}{MN}\sum_{m=1}^{M}\Lambda_{0}^{H'}(\tau_{m})\Lambda_{0}^{H}(\tau_{m})\right)\left(\tilde{f}_{t}-f_{0,t}^{H}\right)\notag\\
=&-\frac{1}{MN}\sum_{i=1}^{N}\sum_{m=1}^{M}\frac{1}{\textsf{f}_{\tau_{m},it}(0)}\eta_{h,\tau_{m},it} \cdot \lambda^{H}_{0,i}(\tau_{m})\notag\\
&-\underbrace{\frac{1}{MN}\sum_{i=1}^{N}\sum_{m=1}^{M}\frac{1}{\textsf{f}_{\tau_{m},it}(0)}\eta_{h,\tau_{m},it}\cdot\left(\tilde{\lambda}_{i}(\tau_{m})-\lambda^{H}_{0,i}(\tau_{m})\right)}_{A_{1t}}\notag\\
&-\underbrace{\frac{1}{MN}\sum_{i=1}^{N}\sum_{m=1}^{M}\frac{1}{\textsf{f}_{\tau_{m},it}(0)}\hat{R}^{(2)}_{h,\tau_{m},it}\cdot \lambda^{H}_{0,i}(\tau_{m})f^{H'}_{0,t}\left(\tilde{\lambda}_{i}(\tau_{m})-\lambda^{H}_{0,i}(\tau_{m})\right)}_{A_{2t}}+o_{p}\left(\frac{h^{2}}{\sqrt{N}}\right),\label{eq.expansion.f}
\end{align}
where the term $o_{p}(h^{2}/\sqrt{N})$ is uniform in $m,i,t$. We now substitute equation \eqref{eq.expansion.lambda} into $A_{1t}$ and $A_{2t}$.

For $A_{1t}$,
\begin{align*}
A_{1t}=&-\underbrace{\frac{1}{MN}\sum_{i=1}^{N}\sum_{m=1}^{M}\frac{1}{\textsf{f}_{\tau_{m},it}(0)}\eta_{h,\tau_{m},it}\cdot\frac{1}{T}\sum_{s=1}^{T}\frac{1}{\textsf{f}_{\tau_{m},is}(0)}\eta_{h,\tau_{m},is} \cdot f^{H}_{0,s}}_{B_{1t}}\\
&-\underbrace{\frac{1}{MN}\sum_{i=1}^{N}\sum_{m=1}^{M}\frac{1}{\textsf{f}_{\tau_{m},it}(0)}\eta_{h,\tau_{m},it}\cdot\frac{1}{T}\sum_{s=1}^{T}\frac{1}{\textsf{f}_{\tau_{m},is}(0)}\eta_{h,\tau_{m},is}\cdot\left(\tilde{f}_{s}-f^{H}_{0,s}\right)}_{B_{2t}}\\
&-\underbrace{\frac{1}{MN}\sum_{i=1}^{N}\sum_{m=1}^{M}\frac{1}{\textsf{f}_{\tau_{m},it}(0)}\eta_{h,\tau_{m},it}\cdot\frac{1}{T}\sum_{s=1}^{T}\frac{1}{\textsf{f}_{\tau_{m},is}(0)}\hat{R}^{(2)}_{h,\tau_{m},is}\cdot f^{H}_{0,s}\lambda^{H'}_{0,i}(\tau_{m})\left(\tilde{f}_{s}-f^{H}_{0,s}\right)}_{B_{3t}}+o_{p}\left(\frac{h^{2}}{\sqrt{N}}\right).
\end{align*}
We first consider $B_{1t}$.
\begin{align}
B_{1t}=&\frac{1}{MNT}\sum_{i=1}^{N}\sum_{m=1}^{M}\frac{\eta_{h,\tau_{m},it}^{2}}{\textsf{f}_{\tau_{m,it}}^{2}(0)}f^{H}_{0,t}+\frac{1}{M}\sum_{m=1}^{M}\frac{1}{NT}\sum_{i=1}^{N}\sum_{s\neq t}\frac{\eta_{h,\tau_{m},is}\cdot\eta_{h,\tau_{m},it}f^{H}_{0,s}}{\textsf{f}_{\tau_{m},is}(0)\textsf{f}_{\tau_{m},it}(0)}\notag\\
=&O_{p}\left(\frac{1}{T}\right)+O_{p}\left(\frac{\sqrt{\log MT}}{\sqrt{NT}}\right)=o_{p}\left(\frac{h^{2}}{\sqrt{N}}\right)\label{eq.b1}
\end{align}
uniformly in $t$. The order of the second term on the right-hand side of the first equality is by the Hoeffding's inequality under a similar argument as \eqref{eq.subsample}, noting that for each $m$, conditional on $\{\eta_{h,\tau_{m},it}:i=1,\ldots,N\}$, $\eta_{h,\tau_{m},is}$ are independent across $i,s$ and mean zero. The last equality is by Assumption \ref{assum.kernel}.

We then consider the $r\times 1$ vector $B_{2}$. For any $j=1.,,,.r$, let $\Delta_{\tilde{F},t}(j)$ be an $N\times (T-1)$ matrix such that each column in it is constructed by replicating the scalar $(\tilde{f}_{js}-f^{H}_{0,js})$ for $N$ times for each $s\neq t$. By construction, the rank of $\Delta_{\tilde{F},t}(j)$ is 1. Let $\bm{\eta}_{h,\tau_{m},t}$ be the $N\times (T-1)$ matrix formed by $\eta_{h,\tau_{m},it}\eta_{h,\tau_{m},is}/(\textsf{f}_{\tau_{m},is}(0)\textsf{f}_{\tau_{m},it}(0))$ for all $i$ and all $s\neq t$. For each $m$, conditional on $\{\eta_{h,\tau_{m},it}:i=1,\ldots,N\}$, entries in $\bm{\eta}_{h,\tau_{m},t}$ are independent, bounded, and mean zero. Hence, by Theorem 4.4.5 in \cite{vershynin2018high} (p.85) and by the law of iterated expectation, one can show that for some $C>0$, the operator norm of $\bm{\eta}_{h,\tau_{m},t}$ satisfies
\begin{align}
&\Pr\left(\max_{m,t}\left\|\bm{\eta}_{h,\tau_{m},t}\right\|\geq C\left(\sqrt{N}+\sqrt{T}\right)\right)\notag\\
\leq &MT\max_{m,t}\mathbb{E}\left[\Pr\left(\left\|\bm{\eta}_{h,\tau_{m},t}\right\|\geq C\left(\sqrt{N}+\sqrt{T}\right)\Bigg|\{\eta_{h,\tau_{m},it}:i=1,\ldots,N\}\right)\right]\to 0\label{eq.operator}
\end{align}

 Now for the $j$-th row in $B_{2t}$,
\begin{align}
B_{2t,j}=&\frac{1}{MN}\sum_{i=1}^{N}\sum_{m=1}^{M}\frac{1}{\textsf{f}_{\tau_{m},it}(0)}\eta_{h,\tau_{m},it}\cdot\frac{1}{T}\sum_{s=1}^{T}\frac{1}{\textsf{f}_{\tau_{m},is}(0)}\eta_{h,\tau_{m},is}\cdot\left(\tilde{f}_{js}-f^{H}_{0,js}\right)\notag\\
=&\frac{1}{M}\sum_{m=1}^{M}\frac{1}{NT}\sum_{i=1}^{N}\sum_{s\neq t}\frac{\eta_{h,\tau_{m},is}\cdot\eta_{h,\tau_{m},it}\left(\tilde{f}_{js}-f^{H}_{0,js}\right)}{\textsf{f}_{\tau_{m},is}(0)\textsf{f}_{\tau_{m},it}(0)}+\frac{1}{MNT}\sum_{m=1}^{M}\sum_{i=1}^{N}\frac{1}{\textsf{f}_{\tau_{m,it}}^{2}(0)}\eta_{h,\tau_{m},it}^{2}\left(\tilde{f}_{jt}-f_{0,jt}^{H}\right)\notag\\
\leq& \frac{1}{NT}\max_{m,t}\left|\left\langle \bm{\eta}_{h,\tau_{m},t},\Delta_{\tilde{F},t}(j)\right\rangle\right|+O_{p}\left(\frac{\zeta_{NT}}{Th}\right)\notag\\
\leq&\frac{1}{NT}\max_{m,t}\left\|\bm{\eta}_{h,\tau_{m},t}\right\| \cdot \max_{t}\left\|\Delta_{\tilde{F},t}(j)\right\|_{*}+O_{p}\left(\frac{\zeta_{NT}}{Th}\right)\notag\\
=&\frac{1}{NT}\max_{m,t}\left\|\bm{\eta}_{h,\tau_{m},t}\right\| \cdot \max_{t} \left\|\Delta_{\tilde{F},t}(j)\right\|_{F}+O_{p}\left(\frac{\zeta_{NT}}{Th}\right)\notag\\
\leq &\frac{1}{NT}\max_{m,t}\left\|\bm{\eta}_{h,\tau_{m},t}\right\| \cdot \sqrt{N}\left\|\tilde{F}-F_{0}^{H}\right\|_{F}+O_{p}\left(\frac{\zeta_{NT}}{Th}\right)\notag\\
=&O_{p}\left(\frac{\zeta_{NT}}{\sqrt{T}}\right)+O_{p}\left(\frac{\zeta_{NT}}{Th}\right),\label{eq.terry}
\end{align}
where the fifth equality is by $\|\Delta_{\tilde{F},t}(j)\|_{*}\leq \sqrt{\text{rank}(\Delta_{\tilde{F},t}(j))}\|\Delta_{\tilde{F},t}(j)\|_{F}$ and $\text{rank}(\Delta_{\tilde{F},t}(j))=1$. The last equality is by \eqref{eq.operator}. Hence, $B_{2t}=o_{p}(h^{2}/\sqrt{N})$ uniformly in $t$.

For $B_{3t}$,
\begin{align*}
B_{3t}=&\frac{1}{MN}\sum_{i=1}^{N}\sum_{m=1}^{M}\frac{1}{\textsf{f}_{\tau_{m},it}(0)}\eta_{h,\tau_{m},it}\cdot\frac{1}{T}\sum_{s=1}^{T}\frac{1}{\textsf{f}_{\tau_{m},is}(0)}\left(\hat{R}^{(2)}_{h,\tau_{m},is}-\mathbb{E}\left(\hat{R}^{(2)}_{h,\tau_{m},is}\right)\right)\cdot f^{H}_{0,s}\lambda^{H'}_{0,i}(\tau_{m})\left(\tilde{f}_{s}-f^{H}_{0,s}\right)\\
&+\frac{1}{MN}\sum_{i=1}^{N}\sum_{m=1}^{M}\frac{1}{\textsf{f}_{\tau_{m},it}(0)}\eta_{h,\tau_{m},it}\cdot\frac{1}{T}\sum_{s=1}^{T} f^{H}_{0,s}\lambda^{H'}_{0,i}(\tau_{m})\left(\tilde{f}_{s}-f^{H}_{0,s}\right)+O_{p}\left(\zeta_{NT}h^{\gamma}\right).
\end{align*}
The first term is $O_{p}(\zeta_{NT}/(\sqrt{N}h))=o_{p}(h^{2}/\sqrt{N})$ uniformly in $t$ following a similar argument for $B_{2t,j}$ (\eqref{eq.terry}) by treating $h\cdot \left(\hat{R}^{(2)}_{h,\tau_{m},is}-\mathbb{E}\left(\hat{R}^{(2)}_{h,\tau_{m},is}\right)\right)$ as $\eta_{h,\tau_{m},is}$. For the second term,
\begin{align*}
&\frac{1}{MN}\sum_{i=1}^{N}\sum_{m=1}^{M}\frac{1}{\textsf{f}_{\tau_{m},it}(0)}\eta_{h,\tau_{m},it}\cdot\frac{1}{T}\sum_{s=1}^{T} f^{H}_{0,s}\lambda^{H'}_{0,i}(\tau_{m})\left(\tilde{f}_{s}-f^{H}_{0,s}\right)\\
=&\left[\frac{1}{T}\sum_{s=1}^{T}f^{H}_{0,s}\left(\tilde{f}_{s}-f^{H}_{0,s}\right)'\right] \cdot H_{NT,2}^{-1}\left[\frac{1}{N}\sum_{i=1}^{n}\left(\frac{1}{M}\sum_{m=1}^{M}\frac{\eta_{h,\tau_{m},it}\lambda_{0,i}(\tau_{m})}{\textsf{f}_{\tau_{m},it}(0)}\right)\right]\\
=&O_{p}\left(\zeta_{NT}\right)\cdot O_{p}(1)\cdot O_{p}\left(\frac{\sqrt{\log(T)}}{\sqrt{N}}\right)=o_{p}\left(\frac{h^{2}}{\sqrt{N}}\right),
\end{align*}
where the second equality is by the Hoeffding's inequality under boundedness and independence of $\sum_{m=1}^{M}\eta_{h,\tau_{m},it}\lambda_{0,i}(\tau_{m})/M$ across $i$; the term $\log(T)$ is for uniformity in $t$. Combining all these results, $A_{1t}=o_{p}(h^{2}/\sqrt{N})$ uniformly in $t$. 

Now we consider $A_{2t}$.
\begin{align*}
A_{2t}=&-\underbrace{\frac{1}{MN}\sum_{i=1}^{N}\sum_{m=1}^{M}\frac{\hat{R}^{(2)}_{h,\tau_{m},it}}{\textsf{f}_{\tau_{m},it}(0)}\lambda^{H}_{0,i}(\tau_{m})f^{H'}_{0,t}\cdot \frac{1}{T}\sum_{s=1}^{T}\frac{\eta_{h,\tau_{m},is}}{\textsf{f}_{\tau_{m},is}(0)} \cdot f^{H}_{0,s}}_{B_{4t}}\\
&-\underbrace{\frac{1}{MN}\sum_{i=1}^{N}\sum_{m=1}^{M}\frac{\hat{R}^{(2)}_{h,\tau_{m},it}}{\textsf{f}_{\tau_{m},it}(0)}\lambda^{H}_{0,i}(\tau_{m})f^{H'}_{0,t}\cdot \frac{1}{T}\sum_{s=1}^{T}\frac{\eta_{h,\tau_{m},is}}{\textsf{f}_{\tau_{m},is}(0)}\cdot\left(\tilde{f}_{s}-f^{H}_{0,s}\right)}_{B_{5t}}\\
&-\underbrace{\frac{1}{MN}\sum_{i=1}^{N}\sum_{m=1}^{M}\frac{\hat{R}^{(2)}_{h,\tau_{m},it}}{\textsf{f}_{\tau_{m},it}(0)}\lambda^{H}_{0,i}(\tau_{m})f^{H'}_{0,t}\cdot \frac{1}{T}\sum_{s=1}^{T}\frac{\hat{R}^{(2)}_{h,\tau_{m},is}}{\textsf{f}_{\tau_{m},is}(0)}\cdot f^{H}_{0,s}\lambda^{H'}_{0,i}(\tau_{m})\left(\tilde{f}_{s}-f^{H}_{0,s}\right)}_{B_{6t}}+o_{p}\left(\frac{h}{\sqrt{N}}\right).
\end{align*}
Following the same argument for $B_{1t}$, we can show that $B_{4t}=o_{p}(h^{2}/\sqrt{N})$ uniformly in $t$. Similarly, by the same argument for $B_{2t}$, $B_{5t}=O_{p}(\zeta_{NT}/(\sqrt{N}h))=o_{p}(h^{2}/\sqrt{N})$ uniformly in $t$. For $B_{6t}$,
\begin{align*}
B_{6t}=&\frac{1}{MN}\sum_{i=1}^{N}\sum_{m=1}^{M}\frac{\hat{R}^{(2)}_{h,\tau_{m},it}}{\textsf{f}_{\tau_{m},it}(0)}\lambda^{H}_{0,i}(\tau_{m})f^{H'}_{0,t}\cdot \frac{1}{T}\sum_{s=1}^{T}\frac{\hat{R}^{(2)}_{h,\tau_{m},is}}{\textsf{f}_{\tau_{m},is}(0)}\cdot f^{H}_{0,s}\lambda^{H'}_{0,i}(\tau_{m})\left(\tilde{f}_{s}-f^{H}_{0,s}\right)\\
=&\frac{1}{MN}\sum_{i=1}^{N}\sum_{m=1}^{M}\frac{\hat{R}^{(2)}_{h,\tau_{m},it}}{\textsf{f}_{\tau_{m},it}(0)}\lambda^{H}_{0,i}(\tau_{m})f^{H'}_{0,t}\cdot \frac{1}{T}\sum_{s=1}^{T}\frac{\hat{R}^{(2)}_{h,\tau_{m},is}-\mathbb{E}\left(\hat{R}^{(2)}_{h,\tau_{m},is}\right) }{\textsf{f}_{\tau_{m},is}(0)}\cdot f^{H}_{0,s}\lambda^{H'}_{0,i}(\tau_{m})\left(\tilde{f}_{s}-f^{H}_{0,s}\right)\\
&+\frac{1}{MN}\sum_{i=1}^{N}\sum_{m=1}^{M}\frac{\hat{R}^{(2)}_{h,\tau_{m},it}}{\textsf{f}_{\tau_{m},it}(0)}\lambda^{H}_{0,i}(\tau_{m})f^{H'}_{0,t}\cdot \frac{1}{T}\sum_{s=1}^{T} f^{H}_{0,s}\lambda^{H'}_{0,i}(\tau_{m})\left(\tilde{f}_{s}-f^{H}_{0,s}\right)+O_{p}\left(h^{\gamma-1}\zeta_{NT}\right).
\end{align*}
The first term on the right-hand side of the second equality is $O_{p}(\zeta_{NT}/(\sqrt{N}h^{2}))=o_{p}(h^{2}/\sqrt{N})$ uniformly in $t$ following the same argument as for $B_{2t}$. For the second term,
\begin{align*}
&\frac{1}{MN}\sum_{i=1}^{N}\sum_{m=1}^{M}\frac{\hat{R}^{(2)}_{h,\tau_{m},it}}{\textsf{f}_{\tau_{m},it}(0)}\lambda^{H}_{0,i}(\tau_{m})f^{H'}_{0,t}\cdot \frac{1}{T}\sum_{s=1}^{T} f^{H}_{0,s}\lambda^{H'}_{0,i}(\tau_{m})\left(\tilde{f}_{s}-f^{H}_{0,s}\right)\\
=&\frac{1}{MN}\sum_{i=1}^{N}\sum_{m=1}^{M}\lambda^{H}_{0,i}(\tau_{m})f^{H'}_{0,t}\cdot \frac{1}{T}\sum_{s=1}^{T} f^{H}_{0,s}\lambda^{H'}_{0,i}(\tau_{m})\left(\tilde{f}_{s}-f^{H}_{0,s}\right)\\
&+\frac{1}{MN}\sum_{i=1}^{N}\sum_{m=1}^{M}\frac{\hat{R}^{(2)}_{h,\tau_{m},it}-\mathbb{E}\left(\hat{R}^{(2)}_{h,\tau_{m},it}\right)}{\textsf{f}_{\tau_{m},it}(0)}\lambda^{H}_{0,i}(\tau_{m})f^{H'}_{0,t}\cdot \frac{1}{T}\sum_{s=1}^{T} f^{H}_{0,s}\lambda^{H'}_{0,i}(\tau_{m})\left(\tilde{f}_{s}-f^{H}_{0,s}\right)+O_{p}\left(\zeta_{NT}h^{\gamma}\right)\\
=&\frac{1}{MN}\sum_{i=1}^{N}\sum_{m=1}^{M}\lambda^{H}_{0,i}(\tau_{m})f^{H'}_{0,t}\cdot \frac{1}{T}\sum_{s=1}^{T} f^{H}_{0,s}\lambda^{H'}_{0,i}(\tau_{m})\left(\tilde{f}_{s}-f^{H}_{0,s}\right)+O_{p}\left(\zeta_{NT}h^{\gamma}\right)\\
&+\left[\frac{1}{MN}\sum_{i=1}^{N}\sum_{m=1}^{M}\frac{\hat{R}^{(2)}_{h,\tau_{m},it}-\mathbb{E}\left(\hat{R}^{(2)}_{h,\tau_{m},it}\right)}{\textsf{f}_{\tau_{m},it}(0)}\lambda^{H}_{0,i}(\tau_{m})\lambda^{H'}_{0,i}(\tau_{m})\right]\cdot\left[\frac{1}{T}\sum_{s=1}^{T}\left(\tilde{f}_{s}-f^{H}_{0,s}\right) f^{H'}_{0,s}f_{0,t}^{H}\right]\\
=&\frac{1}{MN}\sum_{i=1}^{N}\sum_{m=1}^{M}\lambda^{H}_{0,i}(\tau_{m})f^{H'}_{0,t}\cdot \frac{1}{T}\sum_{s=1}^{T} f^{H}_{0,s}\lambda^{H'}_{0,i}(\tau_{m})\left(\tilde{f}_{s}-f^{H}_{0,s}\right)\\
&+O_{p}\left(\frac{\sqrt{\log(MT)}}{\sqrt{Nh}}\right)O_{p}\left(\zeta_{NT}\right) +O_{p}\left(\zeta_{NT}h^{\gamma}\right),
\end{align*}
where the last equality is by the Bernstein's inequality. 

Therefore,
\begin{align*}
  A_{2t}=&-\frac{1}{MN}\sum_{i=1}^{N}\sum_{m=1}^{M}\lambda^{H}_{0,i}(\tau_{m})f^{H'}_{0,t}\cdot \frac{1}{T}\sum_{s=1}^{T} f^{H}_{0,s}\lambda^{H'}_{0,i}(\tau_{m})\left(\tilde{f}_{s}-f^{H}_{0,s}\right)+o_{p}\left(\frac{h}{\sqrt{N}}\right)\\
  =&-\Bigg\{\left[\frac{1}{MN}\sum_{i=1}^{N}\sum_{m=1}^{M}\lambda^{H}_{0,i}(\tau_{m})\lambda^{H'}_{0,i}(\tau_{m})\right]\left(\frac{1}{T}\sum_{s=1}^{T}\tilde{f}_{s}f^{H'}_{0,s}\right)f^{H}_{0,t}\\
  &-\left[\frac{1}{MN}\sum_{i=1}^{N}\sum_{m=1}^{M}\lambda^{H}_{0,i}(\tau_{m})\lambda^{H'}_{0,i}(\tau_{m})\right]H_{NT,2}'H_{NT,2}f_{0,t}^{H}\Bigg\}+o_{p}\left(\frac{h}{\sqrt{N}}\right)\\
  =&-\Bigg\{\left[\frac{1}{MN}\sum_{i=1}^{N}\sum_{m=1}^{M}\lambda^{H}_{0,i}(\tau_{m})\lambda^{H'}_{0,i}(\tau_{m})\right]H_{NT,2}'H_{NT,2}f^{H}_{0,t}\\
  &-\left[\frac{1}{MN}\sum_{i=1}^{N}\sum_{m=1}^{M}\lambda^{H}_{0,i}(\tau_{m})\lambda^{H'}_{0,i}(\tau_{m})\right]H_{NT,2}'H_{NT,2}f^{H}_{0,t}\Bigg\}+o_{p}\left(\frac{h}{\sqrt{N}}\right)\\
  =&o_{p}\left(\frac{h}{\sqrt{N}}\right),
\end{align*}
where the second equality is by $\sum_{s=1}^{T}f_{0,s}f_{0,s}'/T=I_{r}$. The third equality is because $H_{NT,2}$ is defined as $\sum_{s=1}^{T}f_{0,s}\tilde{f}_{0,s}'/T$. This step shows that it is crucial to expand the first order conditions around $H_{NT,2}'f_{0,t}$ and $H_{NT,2}^{-1}\lambda_{0,i}(\tau_{m})$ instead of $\tilde{H}_{NT,1}'f_{0,t}$ and $\tilde{H}_{NT,1}^{-1}\lambda_{0,i}(\tau_{m})$.

Substitute $A_{1t}$ and $A_{2t}$ into \eqref{eq.expansion.f}, then we have the following expansion by $f_{0,t}^{H}= H_{NT,2}'f_{0,t}$:
\begin{align}
  \tilde{f}_{t}-H_{NT,2}'f_{0,t}=&-\left[\frac{1}{MN}\sum_{i=1}^{n}\sum_{m=1}^{M}\lambda^{H}_{0,i}(\tau_{m})\lambda^{H'}_{0,i}(\tau_{m})\right]^{-1}\frac{1}{MN}\sum_{i=1}^{N}\sum_{m=1}^{M}\frac{\eta_{h,\tau_{m},it}}{\textsf{f}_{\tau_{m},it}(0)}\cdot \lambda^{H}_{0,i}(\tau_{m})\notag\\
  &+o_{p}\left(\frac{h}{\sqrt{N}}\right).\label{eq.final.f.H}
\end{align}
The desired expansion of $\tilde{f}_{t}$ is thus obtained by plugging in $\lambda_{0,i}^{H}(\tau_{m})\coloneqq H_{NT,2}^{-1}\lambda_{0,i}(\tau_{m})$.

Substitute equation \eqref{eq.final.f.H} into \eqref{eq.expansion.lambda}. By similar argument as equations \eqref{eq.b1} and \eqref{eq.terry}, we have
\begin{equation*}
  \tilde{\lambda}_{i}(\tau_{m})-H_{NT,2}^{-1}\lambda_{0,i}(\tau_{m})=-\frac{1}{T}\sum_{t=1}^{T}\frac{\eta_{h,\tau_{m},it}}{\textsf{f}_{\tau_{m},it}(0)}H_{NT,2}'f_{0,t}+o_{p}\left(\frac{1}{\sqrt{T}}\right).
\end{equation*}
The desired results are obtained by the uniformity of $o_{p}(h/\sqrt{N})$ and $o_{p}(1/\sqrt{T})$ in $m,i,t$.
\end{proof}

\end{appendices}

\bibliographystyle{chicago}
\bibliography{references}

\begin{thebibliography}{}

\bibitem[\protect\citeauthoryear{Anatolyev and Mikusheva}{Anatolyev and
  Mikusheva}{2022}]{anatolyev2022factor}
Anatolyev, S. and A.~Mikusheva (2022).
\newblock Factor models with many assets: strong factors, weak factors, and the
  two-pass procedure.
\newblock {\em Journal of Econometrics\/}~{\em 229\/}(1), 103--126.

\bibitem[\protect\citeauthoryear{Ando and Bai}{Ando and
  Bai}{2020}]{ando2019quantile}
Ando, T. and J.~Bai (2020).
\newblock Quantile co-movement in financial markets: A panel quantile model
  with unobserved heterogeneity.
\newblock {\em Journal of the American Statistical Association\/}~{\em
  115\/}(529), 266--279.

\bibitem[\protect\citeauthoryear{Armstrong, Weidner, and Zeleneev}{Armstrong
  et~al.}{2022}]{armstrong2022robust}
Armstrong, T.~B., M.~Weidner, and A.~Zeleneev (2022).
\newblock Robust estimation and inference in panels with interactive fixed
  effects.
\newblock {\em arXiv preprint arXiv:2210.06639\/}.

\bibitem[\protect\citeauthoryear{Bai}{Bai}{2003}]{bai2003inferential}
Bai, J. (2003).
\newblock Inferential theory for factor models of large dimensions.
\newblock {\em Econometrica\/}~{\em 71\/}(1), 135--171.

\bibitem[\protect\citeauthoryear{Bai and Ng}{Bai and
  Ng}{2002}]{bai2002determining}
Bai, J. and S.~Ng (2002).
\newblock Determining the number of factors in approximate factor models.
\newblock {\em Econometrica\/}~{\em 70\/}(1), 191--221.

\bibitem[\protect\citeauthoryear{Bai and Ng}{Bai and Ng}{2019}]{bai2019rank}
Bai, J. and S.~Ng (2019).
\newblock Rank regularized estimation of approximate factor models.
\newblock {\em Journal of Econometrics\/}~{\em 212\/}(1), 78--96.

\bibitem[\protect\citeauthoryear{Bai and Ng}{Bai and Ng}{2021}]{bai2021matrix}
Bai, J. and S.~Ng (2021).
\newblock Matrix completion, counterfactuals, and factor analysis of missing
  data.
\newblock {\em Journal of the American Statistical Association\/}~{\em
  116\/}(536), 1746--1763.

\bibitem[\protect\citeauthoryear{Bai and Ng}{Bai and
  Ng}{2023}]{bai2023approximate}
Bai, J. and S.~Ng (2023).
\newblock Approximate factor models with weaker loadings.
\newblock {\em Journal of Econometrics\/}~{\em 235\/}(2), 1893--1916.

\bibitem[\protect\citeauthoryear{Bailey, Kapetanios, and Pesaran}{Bailey
  et~al.}{2021}]{bailey2021measurement}
Bailey, N., G.~Kapetanios, and M.~H. Pesaran (2021).
\newblock Measurement of factor strength: Theory and practice.
\newblock {\em Journal of Applied Econometrics\/}~{\em 36\/}(5), 587--613.

\bibitem[\protect\citeauthoryear{Chen, Dolado, and Gonzalo}{Chen
  et~al.}{2021}]{chen2021quantile}
Chen, L., J.~J. Dolado, and J.~Gonzalo (2021).
\newblock Quantile factor models.
\newblock {\em Econometrica\/}~{\em 89\/}(2), 875--910.

\bibitem[\protect\citeauthoryear{Chernozhukov, Chetverikov, Demirer, Duflo,
  Hansen, Newey, and Robins}{Chernozhukov
  et~al.}{2018}]{chernozhukov2018double}
Chernozhukov, V., D.~Chetverikov, M.~Demirer, E.~Duflo, C.~Hansen, W.~Newey,
  and J.~Robins (2018).
\newblock Double/debiased machine learning for treatment and structural
  parameters.
\newblock {\em The Econometrics Journal\/}~{\em 21\/}(1), C1--C68.

\bibitem[\protect\citeauthoryear{Choi and Yuan}{Choi and
  Yuan}{2025}]{choi2025high}
Choi, J. and M.~Yuan (2025).
\newblock High dimensional factor analysis with weak factors.
\newblock {\em Journal of Econometrics\/}~{\em 252}, 106086.

\bibitem[\protect\citeauthoryear{De~Mol, Giannone, and Reichlin}{De~Mol
  et~al.}{2008}]{de2008forecasting}
De~Mol, C., D.~Giannone, and L.~Reichlin (2008).
\newblock Forecasting using a large number of predictors: Is bayesian shrinkage
  a valid alternative to principal components?
\newblock {\em Journal of Econometrics\/}~{\em 146\/}(2), 318--328.

\bibitem[\protect\citeauthoryear{Dhaene and Jochmans}{Dhaene and
  Jochmans}{2015}]{dhaene2015split}
Dhaene, G. and K.~Jochmans (2015).
\newblock Split-panel jackknife estimation of fixed-effect models.
\newblock {\em The Review of Economic Studies\/}~{\em 82\/}(3), 991--1030.

\bibitem[\protect\citeauthoryear{Fan, Yan, and Zheng}{Fan
  et~al.}{2024}]{fan2024can}
Fan, J., Y.~Yan, and Y.~Zheng (2024).
\newblock When can weak latent factors be statistically inferred?
\newblock {\em arXiv preprint arXiv:2407.03616\/}.

\bibitem[\protect\citeauthoryear{Feng}{Feng}{2023}]{feng2023nuclear}
Feng, J. (2023).
\newblock Nuclear norm regularized quantile regression with interactive fixed
  effects.
\newblock {\em Econometric Theory\/}, 1--31.

\bibitem[\protect\citeauthoryear{Fernandes, Guerre, and Horta}{Fernandes
  et~al.}{2021}]{fernandes2021smoothing}
Fernandes, M., E.~Guerre, and E.~Horta (2021).
\newblock Smoothing quantile regressions.
\newblock {\em Journal of Business \& Economic Statistics\/}~{\em 39\/}(1),
  338--357.

\bibitem[\protect\citeauthoryear{Freyaldenhoven}{Freyaldenhoven}{2022}]{freyaldenhoven2022factor}
Freyaldenhoven, S. (2022).
\newblock Factor models with local factors—determining the number of relevant
  factors.
\newblock {\em Journal of Econometrics\/}~{\em 229\/}(1), 80--102.

\bibitem[\protect\citeauthoryear{Galvao and Kato}{Galvao and
  Kato}{2016}]{galvao2016smoothed}
Galvao, A.~F. and K.~Kato (2016).
\newblock Smoothed quantile regression for panel data.
\newblock {\em Journal of Econometrics\/}~{\em 193\/}(1), 92--112.

\bibitem[\protect\citeauthoryear{Giglio, Xiu, and Zhang}{Giglio
  et~al.}{2025}]{giglio2025test}
Giglio, S., D.~Xiu, and D.~Zhang (2025).
\newblock Test assets and weak factors.
\newblock {\em The Journal of Finance\/}~{\em 80\/}(1), 259--319.

\bibitem[\protect\citeauthoryear{He, Pan, Tan, and Zhou}{He
  et~al.}{2023}]{he2023smoothed}
He, X., X.~Pan, K.~M. Tan, and W.-X. Zhou (2023).
\newblock Smoothed quantile regression with large-scale inference.
\newblock {\em Journal of Econometrics\/}~{\em 232\/}(2), 367--388.

\bibitem[\protect\citeauthoryear{Horowitz}{Horowitz}{1998}]{horowitz1998bootstrap}
Horowitz, J.~L. (1998).
\newblock Bootstrap methods for median regression models.
\newblock {\em Econometrica\/}, 1327--1351.

\bibitem[\protect\citeauthoryear{Jiang, Uematsu, and Yamagata}{Jiang
  et~al.}{2023}]{jiang2023revisiting}
Jiang, P., Y.~Uematsu, and T.~Yamagata (2023).
\newblock Revisiting asymptotic theory for principal component estimators of
  approximate factor models.
\newblock {\em arXiv preprint arXiv:2311.00625\/}.

\bibitem[\protect\citeauthoryear{Koenker and Machado}{Koenker and
  Machado}{1999}]{koenker1999goodness}
Koenker, R. and J.~A. Machado (1999).
\newblock Goodness of fit and related inference processes for quantile
  regression.
\newblock {\em Journal of the American Statistical Association\/}~{\em
  94\/}(448), 1296--1310.

\bibitem[\protect\citeauthoryear{Lettau and Pelger}{Lettau and
  Pelger}{2020}]{lettau2020estimating}
Lettau, M. and M.~Pelger (2020).
\newblock Estimating latent asset-pricing factors.
\newblock {\em Journal of Econometrics\/}~{\em 218\/}(1), 1--31.

\bibitem[\protect\citeauthoryear{Marron and Wand}{Marron and
  Wand}{1992}]{marron1992exact}
Marron, J.~S. and M.~P. Wand (1992).
\newblock Exact mean integrated squared error.
\newblock {\em The Annals of Statistics\/}~{\em 20\/}(2), 712--736.

\bibitem[\protect\citeauthoryear{Onatski}{Onatski}{2010}]{onatski2010determining}
Onatski, A. (2010).
\newblock Determining the number of factors from empirical distribution of
  eigenvalues.
\newblock {\em The Review of Economics and Statistics\/}~{\em 92\/}(4),
  1004--1016.

\bibitem[\protect\citeauthoryear{Onatski}{Onatski}{2012}]{onatski2012asymptotics}
Onatski, A. (2012).
\newblock Asymptotics of the principal components estimator of large factor
  models with weakly influential factors.
\newblock {\em Journal of Econometrics\/}~{\em 168\/}(2), 244--258.

\bibitem[\protect\citeauthoryear{Phillips and Moon}{Phillips and
  Moon}{1999}]{phillips1999linear}
Phillips, P.~C. and H.~R. Moon (1999).
\newblock Linear regression limit theory for nonstationary panel data.
\newblock {\em Econometrica\/}~{\em 67\/}(5), 1057--1111.

\bibitem[\protect\citeauthoryear{Uematsu and Yamagata}{Uematsu and
  Yamagata}{2022}]{uematsu2022estimation}
Uematsu, Y. and T.~Yamagata (2022).
\newblock Estimation of sparsity-induced weak factor models.
\newblock {\em Journal of Business \& Economic Statistics\/}~{\em 41\/}(1),
  213--227.

\bibitem[\protect\citeauthoryear{van~der Vaart and Wellner}{van~der Vaart and
  Wellner}{1996}]{van1996weak}
van~der Vaart, A.~W. and J.~A. Wellner (1996).
\newblock {\em Weak Convergence and Empirical Processes}.
\newblock Springer.

\bibitem[\protect\citeauthoryear{Vershynin}{Vershynin}{2018}]{vershynin2018high}
Vershynin, R. (2018).
\newblock {\em High-Dimensional Probability: An Introduction With Applications
  in Data Science}, Volume~47.
\newblock Cambridge University Press.

\bibitem[\protect\citeauthoryear{Wand and Schucany}{Wand and
  Schucany}{1990}]{wand1990gaussian}
Wand, M.~P. and W.~R. Schucany (1990).
\newblock Gaussian-based kernels.
\newblock {\em Canadian Journal of Statistics\/}~{\em 18\/}(3), 197--204.

\bibitem[\protect\citeauthoryear{Yu, Wang, and Samworth}{Yu
  et~al.}{2015}]{yu2015useful}
Yu, Y., T.~Wang, and R.~J. Samworth (2015).
\newblock A useful variant of the davis--kahan theorem for statisticians.
\newblock {\em Biometrika\/}~{\em 102\/}(2), 315--323.

\end{thebibliography}
\end{document}